\definecolor{mylinkcolor}{rgb}{0,0,0.7} 
\tikzset{degil/.style={
            decoration={markings,
            mark= at position 0.5 with {
                  \node[transform shape] (tempnode) {$\setminus$};
                  }
              },
              postaction={decorate}
}
}
\newcommand\longrsquigarrow{
\begin{tikzpicture}
\draw [decorate, decoration={zigzag, segment length=+6pt, amplitude=+.95pt,post length=+2pt}, arrows={-Classical TikZ Rightarrow}]  (0,0.1) -- (0.6,0.1); \draw[draw=none] (0,0)--(0.6,0);
\end{tikzpicture}
}
\newcommand\longlsquigarrow{
\begin{tikzpicture}
\draw [decorate, decoration={zigzag, segment length=+6pt, amplitude=+.95pt,post length=+2pt}, arrows={-Classical TikZ Rightarrow},rotate around={180:(0.3,0.1)}]  (0,0.1) -- (0.6,0.1); \draw[draw=none] (0,0)--(0.6,0);
\end{tikzpicture}
}
\newcommand*{\da@rightarrow}{\mathchar"0\hexnumber@\symAMSa 4B }
\newcommand*{\da@leftarrow}{\mathchar"0\hexnumber@\symAMSa 4C }
\newcommand*{\xdashrightarrow}[2][]{%
  \mathrel{%
    \mathpalette{\da@xarrow{#1}{#2}{}\da@rightarrow{\,}{}}{}%
  }%
}
\newcommand{\xdashleftarrow}[2][]{%
  \mathrel{%
    \mathpalette{\da@xarrow{#1}{#2}\da@leftarrow{}{}{\,}}{}%
  }%
}
\newcommand*{\da@xarrow}[7]{%
  \sbox0{$\ifx#7\scriptstyle\scriptscriptstyle\else\scriptstyle\fi#5#1#6\m@th$}%
  \sbox2{$\ifx#7\scriptstyle\scriptscriptstyle\else\scriptstyle\fi#5#2#6\m@th$}%
  \sbox4{$#7\dabar@\m@th$}%
  \dimen@=\wd0 %
  \ifdim\wd2 >\dimen@
    \dimen@=\wd2 %
  \fi
  \count@=2 %
  \def\da@bars{\dabar@\dabar@}%
  \@whiledim\count@\wd4<\dimen@\do{%
    \advance\count@\@ne
    \expandafter\def\expandafter\da@bars\expandafter{%
      \da@bars
      \dabar@ 
    }%
  }%
  \mathrel{#3}%
  \mathrel{%
    \mathop{\da@bars}\limits
    \ifx\\#1\\%
    \else
      _{\copy0}%
    \fi
    \ifx\\#2\\%
    \else
      ^{\copy2}%
    \fi
  }%
  \mathrel{#4}%
}
\theoremstyle{plain}
\newtheorem{lemma}{Lemma}[section]
\newtheorem{corollary}{Corollary}[section]
\newtheorem{claim}{Claim}[section]
\theoremstyle{definition}
\newtheorem{definition}{Definition}[section]
\newtheorem{remark}{Remark}[section]
\newtheorem{example}{Example}[section]
\newcommand{\comment}[1]{}
\newcommand{\ket}[1]{| #1 \rangle}
\newcommand{\di}{d}
\newcommand{\cA}{\mathcal{A}}
\newcommand{\cB}{\mathcal{B}}
\newcommand{\cC}{\mathcal{C}}
\newcommand{\cG}{\mathcal{G}}
\newcommand{\cS}{\mathcal{S}}
\newcommand{\cW}{\mathcal{W}}
\newcommand{\cX}{\mathcal{X}}
\newcommand{\cY}{\mathcal{Y}}
\newcommand{\cZ}{\mathcal{Z}}
\newcommand{\indep}{\upmodels}
\newcommand{\nindep}{\not\upmodels}
\begin{document}

\title{The standard no-signalling constraints in Bell scenarios are neither sufficient nor necessary for preventing superluminal signalling with general interventions}
\author{V. Vilasini}
\email{vilasini@inria.fr}

\affiliation{Université Grenoble Alpes, Inria, Grenoble 38000, France}
\affiliation{Institute for Theoretical Physics, ETH Zurich, 8093 Z\"{u}rich, Switzerland}

\author{Roger Colbeck}
\email{roger.colbeck@york.ac.uk}
\affiliation{Department of Mathematics, University of York, Heslington, York YO10 5DD.}

\date{\today}

\begin{abstract}
The non-classical nature of correlations resulting from entangled quantum systems has led to debates about the compatibility of quantum theory and relativity, and about the right way to think about causation in light of non-classical correlations.
Key to a causal theory is that superluminal signalling is forbidden, which holds in quantum theory even with entangled systems. In Bell scenarios, relativistic principles like no superluminal signalling are usually taken to follow from the standard no-signalling constraints on correlations. 
In this work we restrict to multi-party Bell scenarios and explore the connections between a range of relativistic causality principles, including no superluminal signalling and no causal loops, and constraints on correlations that can arise with arbitrary interventions. These constraints include both the standard no-signalling conditions and proposed relaxations that allow theories that incorporate a phenomenon called jamming. It has been suggested that superluminal signalling and causal loops remain impossible in such theories.
Using a recent formalism that combines relativistic principles with a causal modelling framework, we show that any theory (classical or non-classical) that generates jamming correlations must necessarily do so through causal fine-tuning and by means of superluminal causal influences. Moreover, we show that jamming theories can lead to superluminal signalling unless certain systems are fundamentally inaccessible to agents. This highlights limitations for the physicality of jamming theories. However, we identify situations where jamming correlations do not lead to superluminal signalling under general interventions, therefore showing that the conventional no-signalling constraints are not necessary for this purpose. Next, we show through examples, that no-signalling constraints on correlations are generally insufficient for ruling out superluminal signalling, and are neither necessary nor sufficient for ensuring no causal loops, when general interventions are allowed. Finally, we derive necessary and sufficient conditions for ruling out superluminal signalling and operationally detectable causal loops. These results solidify our understanding of relativistic causality principles in information processing tasks in space-time.

\end{abstract}

\maketitle
\tableofcontents

\section{Introduction}
The nature of correlations arising from measurements on space-like separated quantum systems has fuelled historical debates concerning the compatibility between quantum theory and special relativity~\cite{EPR, Bohr, Bohm1957, BellEPR, Bell}. The developments in quantum information theory in recent decades, have revealed the power of quantum correlations for information processing tasks such as computing, communication and cryptography. Notably, these advances have also affirmed that, despite its non-relativistic as well as non-classical nature (as highlighted by the existence of quantum correlations that violate Bell inequalities~\cite{Bell, CHSH}), quantum mechanics nevertheless aligns harmoniously with relativistic principles, such as the impossibility of superluminal signalling. This has placed our understanding of quantum entanglement, quantum correlations, and their information-processing possibilities on solid theoretical and conceptual ground.\footnote{An overview of these advances and results can be found in any textbook on quantum information theory, e.g.,~\cite{NielsenChuang}.}

More generally, this raises the question: when can a theory of information processing be deemed compatible with relativistic causality principles in space-time? In Bell scenarios, where space-like separated parties locally measure subsystems of a shared quantum system, obtaining measurement statistics adhering to the non-signalling constraints (which demand that the measurement setting of one party remains uncorrelated with the measurement outcomes of the remaining parties) is usually taken as necessary and sufficient to prevent superluminal signalling. Quantum correlations satisfy these standard non-signalling constraints, as do correlations arising in more general post-quantum theories~\cite{PR1994}, and for this reason such theories are also known as generalized non-signalling theories (GNST)~\cite{Barrett07}. These non-signalling conditions considered widely in the previous literature are constraints on the observable correlations. However, signalling is a concept that indicates underlying causation, which cannot be fully captured by correlations alone and requires consideration of general interventions. Therefore, to characterize the necessary and sufficient conditions for satisfying relativistic principles such as no superluminal signalling (NSS) and no causal loops (NCL), we must also account for interventions. 

A related question is whether the standard non-signalling constraints (on correlations) are necessary and sufficient to ensure NSS in Bell scenarios, under general interventions. A similar question was previously considered in \cite{Horodecki2019} but without accounting for interventions, i.e., considering correlations alone. In this context, in Bell scenarios with three or more parties, relaxations of the standard non-signalling constraints have been proposed~\cite{Grunhaus1996, Horodecki2019}. These relaxations are permissive to a broader range of theories than GNST, with the so-called \emph{jamming non-local theories}~\cite{Grunhaus1996} being examples. Existing literature suggests that jamming theories are consistent with relativistic causality principles in Minkowski space-time in that they do not give rise to superluminal signalling or causal loops, and therefore go on to dub them as \emph{relativistically causal theories}~\cite{Horodecki2019}. Moreover, \cite{Horodecki2019} claimed that these for bipartite Bell scenarios the regular non-signalling constraints and for tri-partite scenarios, the relaxed non-signalling constraints are necessary and sufficient for NSS and NCL.

However, these claims are not made within a mathematical framework that rigorously defines concepts such as \emph{causality}, \emph{causal loops} and \emph{superluminal signalling} that feature in the arguments. Such a framework is needed to put these concepts on a solid footing and clearly define the conditions under which a (possibly non-classical) theory adheres to or violates them. There are at least two notions of causality at play in information-processing protocols within space-time: the \emph{information-theoretic causal structure of the protocol} and the \emph{causal structure of space-time}. While the two are distinct, in physical experiments, they interact in a compatible manner. Therefore, in order to formalise relativistic principles of causation, a framework needs to distinguish and relate these concepts, while accounting for general interventions that agents can perform and by means of which they may signal to other agents. Such a formalism, addressing both these aspects in a theory-independent manner, has been developed in our recent works~\cite{VilasiniColbeckPRA, VilasiniColbeckPRL}.

The formalism of~\cite{VilasiniColbeckPRA, VilasiniColbeckPRL} adopts an approach based on \emph{causal modelling} to formalize information-theoretic causal structures. Causal modelling, originally developed in classical statistics and data sciences~\cite{Pearl2009, Spirtes2001}, is a powerful tool for linking causation, correlations, and interventions with applications in various fields, such as machine learning, medical testing, biology, and economics. In light of the challenges posed by quantum correlations to classical causal modelling frameworks~\cite{Wood2015}, there has been significant progress in developing quantum and non-classical causal models~\cite{Leifer2013, Henson2014, Costa2016, Allen2017, Barrett2020}. The formalism of~\cite{VilasiniColbeckPRA, VilasiniColbeckPRL} builds on this progress and allows cyclic and non-classical causal models to be described in quantum and post-quantum theories. Specifically, the concept of \emph{higher-order affects relations} was introduced to model general interventions in these settings. Relativistic causality principles, such as the impossibility of superluminal signalling, can be precisely formalized by embedding these causal models in a background space-time and imposing graph-theoretic \emph{compatibility conditions} between the higher-order affects relations coming from the causal model and the light-cone structure coming from the space-time. Using this approach, it was demonstrated that the absence of superluminal signalling in Minkowski space-time does not preclude causal loops~\cite{VilasiniColbeckPRL}, highlighting that no superluminal signalling and no causal loops are distinct principles that do not imply one another.

 In the present work, we apply this formalism to analyze these relativistic causality principles in Bell scenarios, in a theory-independent manner. Our main results are two-fold: we first focus on jamming theories (those admitting jamming non-local correlations), establishing a number of necessary properties of such theories and showing that they will inevitably violate certain principles of information-theoretic causation (such as the principle of no fine-tuning, see below) as well as relativistic causation (such as no superluminal causation). Secondly, we derive a set of necessary and sufficient conditions for ensuring no superluminal signalling, and for ruling out operationally relevant causal loops in Bell scenarios. These results and analysis shed light on the (un)physicality of jamming theories, and also on the fact captured in the title, that the standard non-signalling conditions are neither necessary nor sufficient for ensuring the (distinct) relativistic principles of NSS and NCL in Bell scenarios, when allowing general interventions. Additionally, this implies that the relaxed non-signalling conditions of \cite{Horodecki2019} are also insufficient for ensuring either of these relativistic principles. 
 
\subsection{Outline of paper and main contributions}
The initial sections of the paper provide an overview of the relevant background for our results. In Section~\ref{sec: standard_NS}, we review the standard bi and tripartite non-signalling constraints in the respective Bell scenarios. Section~\ref{sec:jam_review} offers an overview of jamming non-local correlations and the related claims about relativistic causality in previous literature. Section~\ref{sec: framework} provides a self-contained review of our recent causality framework~\cite{VilasiniColbeckPRA}, which was motivated in the introduction. The main contributions are summarized below. These are two-fold: insights on information-theoretic and relativistic causality principles in jamming theories, and necessary and sufficient conditions for ensuring relativistic principles in Bell scenarios in general theories and while allowing for interventions.

\begin{itemize}
    \item {\bf Jamming requires fine-tuning.} One reason why classical causal explanations of correlations violating a Bell inequality are rejected is because they must be fine-tuned~\cite{Wood2015}, that is, they must be carefully chosen in order to conceal causal influences from showing up in the observed correlations or signalling relations.  The usual quantum explanation need not be though. On the other hand, in Section~\ref{sec: jam_finetune_superlumcaus} we demonstrate that any setup exhibiting jamming correlations must involve \emph{causal fine-tuning} whether or not the underlying theory is classical. Fine-tuning is often considered an undesirable property of a causal model because it challenges the ability to infer causal influences from observed data and for this reason, is often used as a principle to reject particular theories that are unable to explain observations without alluding to fine-tuning (see e.g.,~\cite{Wood2015, Cavalcanti2018, Ying2023}). Hence the information theoretic causality principle no fine-tuning allows us to rule out jamming correlations regardless of the physical theory describing the underlying causes.
    
    \item {\bf Jamming and violation of relativistic causality principles.} Embedding the tripartite Bell scenario in its natural space-time configuration proposed in previous works, it follows that any theory that exhibits jamming correlations in this scenario must involve superluminal causal influences -- see Section~\ref{sec: jam_finetune_superlumcaus}. Furthermore, in Section~\ref{sec: jammingproof}, we show that jamming theories can also lead to superluminal signalling in this space-time configuration, contrary to the original motivation for such theories \cite{Grunhaus1996, Horodecki2019}. We find that to prevent superluminal signalling in jamming theories, it is necessary that the common shared system of the Bell scenario remains fundamentally inaccessible to the agents preventing it from being subject to interventions (even if classical), and there are jamming scenarios where superluminal signalling can be avoided in this manner. This result highlights a necessary tension between jamming theories and \emph{relativistic causality principles}.
    
    \item {\bf Space-time configurations that permit jamming.} Because jamming is key to this work we have also considered the set of space-time configurations in which it can occur. To support jamming, such configurations require the intersection of the future lightcones of two space-time events to be contained in the future lightcone of another. In Appendix~\ref{app:ST} we give a complete set of conditions for when this occurs in Minkowski spacetime of arbitrary dimension. The main point to note is that there is a significant difference between the set of possible configurations allowing jamming with one spatial dimension as opposed to more.

    \item {\bf Necessary and sufficient conditions for relativistic causality in Bell scenarios.} 
    In Section~\ref{sec: NS_affects}, we provide an equivalent formalization of the standard and relaxed non-signalling constraints in terms of higher-order affects relations. We then demonstrate through explicit examples that in bi and tripartite Bell scenarios, neither the standard nor the relaxed non-signalling constraints are sufficient to rule out superluminal signalling (Section~\ref{sec: Bell_relcaus}), in contrast to a previous claim~\cite{Horodecki2019}. The analysis reveals the precise reason for the failure of the previous claim, in essence: non-signalling constraints only pertain to correlations, and when the settings are freely chosen, they tell us about signalling through interventions on setting variables. However, this does not fully encompass causation and does not tell us whether more general interventions (on both settings and outcomes) could be used for signalling. We employ causal modelling and affect relations to establish a set of necessary and sufficient conditions for ruling out superluminal signalling in Bell scenarios, when considering arbitrary interventions. Finally, we discuss conditions for ruling out causal loops, showing with counter-examples that the necessary and sufficient conditions for no superluminal signalling (mentioned above) may not generally be sufficient or necessary for ruling out causal loops in Bell scenarios. 

\end{itemize}
These results show that previous claims~\cite{Grunhaus1996, Horodecki2019} regarding no superluminal signalling and no causal loops in jamming theories do that hold once the claims are formalised within our general framework. We are currently unaware of any other framework within which our counter-examples to the claims could be evaded and where the claims could be proven to be true, but this remains a possibility. With these results we lay a foundation for characterising relativistic causality principles (such as no superluminal signalling and no causal loops) in Bell-type scenarios in arbitrary theories, while accounting for correlations as well as interventions. More generally, the framework of~\cite{VilasiniColbeckPRL, VilasiniColbeckPRA} can also be applied to characterise relativistic causality principles in a theory-independent manner in arbitrary information-processing scenarios, in a similar manner as we have done for Bell-type scenarios in this work. Section~\ref{sec: conclusions} presents our conclusions and offers an outlook for future research. 

\section{The standard non-signalling conditions in Bell-type scenarios}
\label{sec: standard_NS}

Consider a bipartite Bell experiment involving two parties Alice and Bob who share a common system $\Lambda$ (which may be a classical random variable, a quantum system or more generally, a system described by a post-quantum probabilistic theory). Alice and Bob perform measurements chosen freely  using the measurements settings $A$ and $B$ on their subsystems of the shared system, and obtain corresponding measurement outcomes $X$ and $Y$ respectively (where $A$, $B$, $X$ and $Y$ are classical random variables). Here, a variable (such as $A$) is said to be freely chosen if it has no prior causes relevant to the experiment (see Definition~\ref{def: Bell_scenario} for further details on how we model free choice in Bell experiments).
Now, suppose that Alice and Bob's measurements are performed at space-like separated locations i.e., $A$, $B$, $X$, $Y$ and $\Lambda$ are embedded in space-time as shown in Figure~\ref{fig:Bell_bipart_sptime}. Then, in order to ensure that the parties cannot signal to each other superluminally, it is necessary to require that each party's outcome is independent of the measurement setting of the other party.

Formally, taking $P(XY|AB)$ to be the joint conditional probability of the outcomes given the settings, we have the following non-signalling conditions on the correlations arising in such a bipartite Bell scenario. In the rest of the paper, we use the small letter $x$ to denote a particular value of a random variable denoted by the corresponding capital letter $X$.

\begin{definition}[Standard bipartite non-signalling conditions \texorpdfstring{\hyperref[cond: NS2]{(NS2)}}{NS2}]
\label{cond: NS2}
\begin{align}
\label{eq: bipartiteNS}
    \begin{split}
         P(x|a)&:=\sum_{y} P(x,y|a,b)=\sum_{y} P(x,y|a,b')\quad \forall x,a,b,b'\\
         P(y|b)&:=\sum_{x} P(x,y|a,b)=\sum_{x} P(x,y|a',b)\quad \forall y,a,a',b\\
    \end{split}
\end{align}
\end{definition}

We refer to these as the standard bipartite non-signalling conditions or \hyperref[cond: NS2]{(NS2)}. In short, we will denote the above conditions as $P(X|AB)=P(X|A)$ and $P(Y|AB)=P(Y|B)$. If either one of these conditions is violated, then the measurement outcome of one party will be correlated with the choice of measurement of the other, space-like separated party allowing the latter party to communicate information to the former by choosing different measurements on their half of the joint state.

Similarly, consider a tripartite Bell experiment between the three parties Alice, Bob and Charlie who share a tripartite system $\Lambda$ (which may again be described by a classical, quantum or post-quantum theory), measure their subsystem with freely chosen measurement settings $A$, $B$ and $C$ and obtain corresponding measurement outcomes $X$, $Y$ and $Z$ respectively. 
 If the parties' measurements are conducted within pairwise space-like separated space-time regions, such as in the space-time configuration of Figure~\ref{fig: Bell_tripart1}, then the standard tri-partite non-signalling conditions (which are also motivated by the impossibility of superluminal signalling) require that the measurement setting of each party cannot be correlated with the outcomes of any set of complementary parties. Explicitly, this translates to the following constraints on the joint conditional probability distribution $P(X,Y,Z|A,B,C)$.

\begin{definition}[Standard tripartite non-signalling conditions \texorpdfstring{\hyperref[cond: NS3]{(NS3)}}{NS3}]
\label{cond: NS3}
\begin{align}
\label{eq: tripartiteNS}
    \begin{split}
         P(x,y|a,b)&:=\sum_zP(x,y,z|a,b,c)=\sum_z P(x,y,z|a,b,c')\quad \forall x,y,a,b,c,c'\\
          P(y,z|b,c)&:=\sum_xP(x,y,z|a,b,c)=\sum_x P(x,y,z|a',b,c)\quad \forall y,z,a,a',b,c\\
            P(x,z|a,c)&:=\sum_yP(x,y,z|a,b,c)=\sum_y P(x,y,z|a,b',c)\quad \forall x,z,a,b,b',c\\
    \end{split}
\end{align}
\end{definition}

In short, these conditions are expressed as $P(XY|ABC)=P(XY|AB)$, $P(YZ|ABC)=P(YZ|BC)$ and $P(XZ|ABC)=P(XZ|AC)$. Notice that the above conditions of Equation~\eqref{eq: tripartiteNS} imply the following conditions which are equivalent to writing $P(X|ABC)=P(X|A)$, $P(Y|ABC)=P(Y|B)$ and $P(Z|ABC)=P(Z|C)$.

\begin{align}
    \begin{split}
         P(x|a)&:=\sum_{y,z} P(x,y,z|a,b,c)=\sum_{y,z} P(x,y,z|a,b',c')\quad \forall x,a,b,b',c,c'\\
         P(y|b)&:=\sum_{x,z} P(x,y,z|a,b,c)=\sum_{x,z} P(x,y,z|a',b,c')\quad \forall y,a,a',b,c,c'\\
        P(z|c)&:=\sum_{x,y} P(x,y,z|a,b,c)=\sum_{x,y} P(x,y,z|a',b',c)\quad \forall z,a,a',b,b',c\\
 \end{split}
\end{align}

      \begin{figure}
          \centering
          \subfloat[\label{fig:Bell_bipart_CS}]{\includegraphics[]{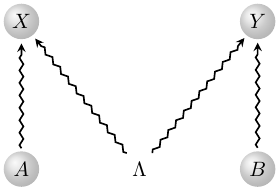}}\qquad\qquad\qquad\qquad \subfloat[]{    

         \includegraphics[]{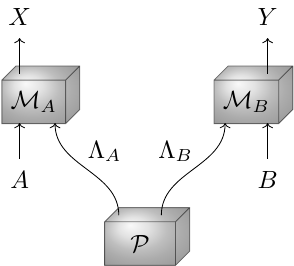}
}
          \caption{{\bf Typical causal structure associated with a bipartite Bell experiment} (a)  This figure represents the typical information-theoretic causal structure, which captures the flow of information in a protocol where Alice and Bob share a bipartite system $\Lambda$, and locally perform freely chosen measurements on their halves of this bipartite resource to generate outcomes. The settings $A$, $B$, and outcomes $X$ and $Y$ are observed classical variables (circled) and the resource corresponding to the common cause $\Lambda$ may be classical, quantum or post-quantum. (b) The described protocol can be represented by a circuit diagram as shown here, and one can see that the causal structure of (a) captures the connectivity of this circuit diagram. Such diagrams involve the causal mechanisms (systems, transformations, measurements) of a theory. The meaning of the information-theoretic causal structure and causal mechanisms are explained further in Section~\ref{sec: framework}.}
      \end{figure}

\begin{figure}[t!]
    \centering
 \includegraphics[]{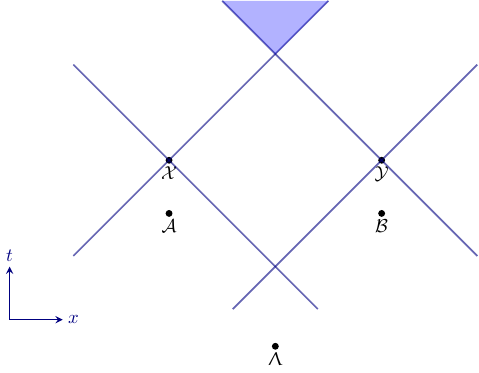}
    \caption{{\bf Space-time configuration associated with a bipartite Bell experiment} Space-time points are labelled by the variables of the information-theoretic protocol which are embedded at that location, this can be understood as the space-time location at which the corresponding measurement setting/outcome is generated. We distinguish space-time embedded random variables from the random variable itself using the caligraphic font for the former.  Here, the measurement events of the two parties are space-like separated. The diagonal lines represent light-like surfaces. The blue region is the joint future of both measurement outcomes, where the joint probability distribution $P(XY|AB)$ can be evaluated.
    }
\label{fig:Bell_bipart_sptime}
\end{figure}

{\bf Information-theoretic causal structure vs space-time configuration} In the above discussion of standard Bell scenarios, there are two distinct notions of causality at play: the information-theoretic causal structure of the experiment (Figures~\ref{fig:Bell_bipart_CS} and~\ref{fig: Bell_tripart2}) and the spatio-temporal causal structure in which we embed the physical systems of the experiment (Figures~\ref{fig:Bell_bipart_sptime} and~\ref{fig: Bell_tripart1}). The former captures the flow of information between physical systems in the experiment, for instance, if Alice does not communicate to Bob, then there would not be any causal arrows from systems in Alice's control to systems in Bob's control. This would be independent of whether or not Bob acts in the spatio-temporal future of Alice i.e., we can naturally have scenarios where Bob is in the future of Alice relative to the causal structure of the space-time but not so relative to the information-theoretic causal structure (due to absence of communication channels). However,  through principles of relativistic causality, the spatio-temporal structure constrains the possible information-theoretic structures. For instance, if Alice and Bob are space-like separated, we would expect, from relativistic principle of no superluminal causation, that the information-theoretic causal structure also does not have causal arrows between the variables describing Alice and Bob's measurements. Indeed, for the information-theoretic causal structures of the Bell experiment (illustrated in Figures~\ref{fig:Bell_bipart_CS} and~\ref{fig: Bell_tripart2}) and the corresponding space-time configurations (illustrated in Figures~\ref{fig:Bell_bipart_sptime} and~\ref{fig: Bell_tripart1}), all (information-theoretic) causal arrows flow from past to future in the space-time. 

Moreover, through a causal modelling approach, one can show that the information-theoretic causal structures of Figures~\ref{fig:Bell_bipart_CS} and~\ref{fig: Bell_tripart2} imply the non-signalling constraints \hyperref[cond: NS2]{(NS2)} and \hyperref[cond: NS3]{(NS3)} as a consequence of their topology, which highlights that these constraints are purely a property of the information-theoretic structure.\footnote{More formally, this would be a consequence of the d-separation property (cf.\ Definition~\ref{definition: compatdist}) applied to these causal structures, which has been proven to hold for acyclic causal models in classical, quantum and post-quantum theories~\cite{Pearl2009, Henson2014, Barrett2020A}.} However, when this structure is embedded in a space-time such that Alice and Bob's measurement variables are embedded are space-like separated locations, then these constraints become relevant for avoiding signalling outside the future light cone of the space-time. In Section~\ref{sec: framework} we review a framework that we have introduced in~\cite{VilasiniColbeckPRA, VilasiniColbeckPRL} within which such statements linking information-theoretic and spatio-temporal causal structures can be made mathematically precise in rather general scenarios, while maintaining a clear conceptual separation between these distinct notions of causation. 

\bigskip

{\bf Free choice:} Notice that in order to interpret the conditions \hyperref[cond: NS2]{(NS2)} and \hyperref[cond: NS3]{(NS3)} as no-signalling conditions, the free choice of the settings $A$ and $B$ is crucial. For instance, if the setting, $A$, was not freely chosen but rather, determined by some event in the past, then it could naturally be correlated with the outcome $Y$ of Bob. This is because an event in the common past lightcone of $A$ and $Y$ could influence and correlate these variables without having any causal influences from $A$ to $Y$ and without enabling Alice to signal information to Bob using these correlations. However, $A$ being freely chosen precludes any causes of $A$ that are relevant to the rest of the experiment, in which case correlations between $A$ and $Y$ would indicate signalling from $A$ to $Y$. Whenever $A$ and $Y$ are embedded at space-like separated locations in space-time, this signal would be superluminal. The condition that $A$, $B$ and $C$ are freely chosen is naturally captured through the information-theoretic causal structure of the experiment, by modelling them as parentless nodes in the causal graph. This is the case in the typical information-theoretic causal structures used for the bipartite and tripartite Bell experiment (Figures~\ref{fig:Bell_bipart_CS} and~\ref{fig: Bell_tripart2}).

\begin{figure}[t!]
\centering
\subfloat[\label{fig: Bell_tripart1}]{	 \includegraphics[]{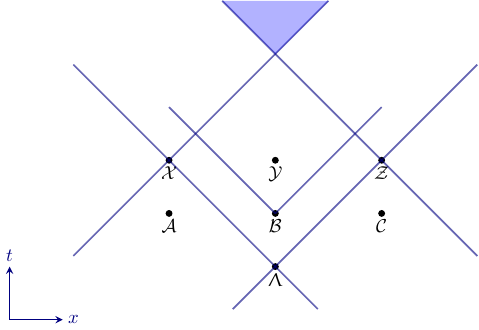}}\quad\quad
\subfloat[\label{fig: Bell_tripart2}]{ \includegraphics[]{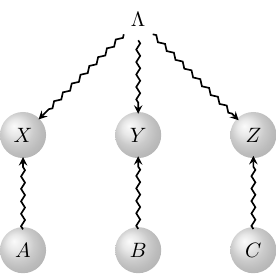}}	
	\caption{\textbf{A tripartite Bell experiment:}  \textbf{(a)} Space-time configuration for the jamming scenario~\cite{Grunhaus1996, Horodecki2019}: The measurement events of the three parties are space-like separated such that the future of Bob's input $B$ contains the entire joint future (in blue) of Alice and Charlie's outputs $X$ and $Z$. Here, $B$ is allowed to signal to $X$ and $Z$ jointly (which can only be verified in the blue region) but not individually. \textbf{(b)} Information-theoretic causal structure for the usual tripartite Bell experiment: 
 This causal structure is typically used for modelling a tripartite Bell experiment where correlations are constrained by the standard non-signalling conditions \hyperref[cond: NS3]{(NS3)}. In order to model the more general conditions \hyperref[cond: NS3p]{(NS3$'$)} that permit jamming, some modification is required. Figure~\ref{fig: tripart_jamming} is a natural modification of this causal structure which can admit jamming correlations. }
	\label{fig: Bell_tripart}
\end{figure}

\section{Jamming non-local correlations and claims about relativistic causality}
\label{sec:jam_review}

The post-quantum phenomenon known as ``jamming'' was originally proposed in~\cite{Grunhaus1996} to argue that in a tripartite Bell scenario with pairwise space-like separated measurements, the standard non-signalling conditions \hyperref[cond: NS3]{(NS3)} are not strictly necessary for ensuring no superluminal signalling. This defines a more general set of non-local correlations in tripartite Bell scenarios that go beyond correlations achievable in quantum theory or standard no-signalling probabilistic theories, but are nevertheless expected to not lead to any superluminal signalling. 
We now review the argument presented in the previous literature on jamming~\cite{Grunhaus1996, Horodecki2019}, while focussing our attention on the tripartite Bell scenario, which is the simplest Bell scenario that exhibits jamming.

The conditions \hyperref[cond: NS3]{(NS3)} require, in particular, that the outcomes $X$ and $Z$ of Alice and Charlie are independent of the setting $B$ of Bob, that is $P(XZ|ABC)=P(XZ|AC)$. However, it was noted in~\cite{Grunhaus1996} that in a space-time configuration where the joint future of the space-time locations of $X$ and $Z$ is entirely contained in the future of the location of $B$ (as shown in Figure~\ref{fig: Bell_tripart1}),
a violation of $P(XZ|ABC)= P(XZ|AC)$ would not lead to superluminal signalling, as long as $P(X|AB)= P(X|A)$ and $P(Z|BC)= P(Z|C)$ still hold. This is because whether or not $P(XZ|ABC)=P(XZ|AC)$ holds can only be checked within the joint future of $X$ and $Z$. The logic is that a failure of this condition only leads to signalling within $B$'s future in this space-time configuration, as long as the marginals of the space-like separated events, $P(X|A)$ and $P(Z|C)$ are independent of $B$, and therefore involves no superluminal signalling. Nevertheless, this would allow Bob to non-locally ``jam'' the correlations between Alice and Charlie, even though each of them are space-like separated to him. Using this observation,~\cite{Horodecki2019} propose the set of relaxed tripartite non-signalling conditions \hyperref[cond: NS3p]{(NS3$'$)} which allows for a larger set of correlations as compared to \hyperref[cond: NS3]{(NS3)} by virtue of being less constrained. The difference between \hyperref[cond: NS3]{(NS3)} and \hyperref[cond: NS3p]{(NS3$'$)} lies in whether or not $P(XZ|ABC)=P(XZ|AC)$ necessarily holds.

\begin{definition}[Relaxed tripartite non-signalling conditions \texorpdfstring{\hyperref[cond: NS3p]{(NS3$'$)}}{NS3p}]
\label{cond: NS3p}
\begin{align}
\label{eq: rel_tripartiteNS}
    \begin{split}
         P(x,y|a,b)&:=\sum_zP(x,y,z|a,b,c)=\sum_z P(x,y,z|a,b,c')\quad \forall x,y,a,b,c,c'\\
          P(y,z|b,c)&:=\sum_xP(x,y,z|a,b,c)=\sum_x P(x,y,z|a',b,c)\quad \forall y,z,a,a',b,c\\
      P(x|a)&:=\sum_{y,z} P(x,y,z|a,b,c)=\sum_{y,z} P(x,y,z|a,b',c')\quad \forall x,a,b,b',c,c'\\
       P(z|c)&:=\sum_{x,y} P(x,y,z|a,b,c)=\sum_{x,y} P(x,y,z|a',b',c)\quad \forall z,a,a',b,b',c\\
    \end{split}
\end{align}
\end{definition}

Notice that the first two of the above conditions imply the following independence. 

\begin{equation}
   P(y|b):=\sum_{x,z} P(x,y,z|a,b,c)=\sum_{x,z} P(x,y,z|a',b,c')\quad \forall y,a,a',b,c,c' 
\end{equation}

\begin{definition}[Tripartite jamming]
\label{def: jamming}
    We say that a tripartite Bell scenario admits jamming if and only if the corresponding joint distribution $P(XYZ|ABC)$ satisfies \hyperref[cond: NS3p]{(NS3$'$)} but not \hyperref[cond: NS3]{(NS3)} i.e., satisfies \hyperref[cond: NS3p]{(NS3$'$)} with $P(XZ|ABC)\neq P(XZ|AC)$.
\end{definition}


Based on the above arguments,~\cite{Grunhaus1996, Horodecki2019} claim that in the space-time configuration of Figure~\ref{fig: Bell_tripart1}, \hyperref[cond: NS3p]{(NS3$'$)} and not \hyperref[cond: NS3]{(NS3)} are the necessary conditions on tripartite Bell correlations required for avoiding superluminal signalling. 
Inspired by the special relativistic observation that the possibility of superluminal signalling in Minkowski space-time enables the possibility of signalling to the past,~\cite{Horodecki2019} propose the following definition for relativistic causality that forbids causal loops. Here, measurement settings and outcomes are viewed as space-time random variables, that is, random variables associated with a space-time location. The standard notation $(\vec{r},t)$ with $\vec{r}\in \mathbb{R}^3$ and $t\in \mathbb{R}$ is used to denote the space-time co-ordinates of a point in Minkowski space-time in some inertial frame of reference $\mathcal{I}$. A measurement event corresponds to a pair of random variables, a setting and an outcome variable.~\cite{Horodecki2019} associate the same location with both variables in a measurement event, as an idealisation, and use the following quoted text as their definition of relativistic causality.

\begin{center}
\label{definition: horodecki}
    Quotation from~\cite{Horodecki2019} (Relativistic causality condition): ``No causal loops occur, where a causal loop is a sequence of events, in which one event is among the causes of another event, which in turn is among the causes of the first event.''
\end{center}

They then go a step beyond the previous claim and make the following stronger claims for the bipartite and tripartite Bell scenarios.

\begin{claim}[Conditions for relativistic causality in bipartite Bell scenario~\cite{Horodecki2019}]
\label{claim: bipart}
In a bipartite Bell experiment the standard bipartite non-signalling conditions \hyperref[cond: NS2]{(NS2)} are necessary and sufficient for ensuring that no violation of relativistic causality occurs. 
\end{claim}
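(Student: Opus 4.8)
The plan is to read ``no violation of relativistic causality'' as the conjunction of no superluminal signalling (NSS) and no causal loops (NCL) --- the latter being the condition quoted from~\cite{Horodecki2019}, the former being what its Minkowski-space motivation rests on --- and to prove the two directions of the biconditional separately.

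For \textbf{necessity} I would argue the contrapositive. Suppose \hyperref[cond: NS2]{(NS2)} fails; without loss of generality $P(X|AB)\neq P(X|A)$, so there are values $x,a,b,b'$ with $P(x|a,b)\neq P(x|a,b')$. Since $B$ is freely chosen it is a parentless node in the information-theoretic causal structure of Figure~\ref{fig:Bell_bipart_CS}, hence it shares no common cause with $X$; by the free-choice reasoning already invoked in the text, a correlation between a parentless $B$ and $X$ can only be produced by a directed influence $B\to X$. As $B$ and $X$ are embedded at space-like separated locations (Figure~\ref{fig:Bell_bipart_sptime}), this influence is superluminal, and Bob actuates it by selecting between $b$ and $b'$, signalling to Alice superluminally. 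The symmetric argument covers a failure of $P(Y|AB)=P(Y|B)$. Thus any violation of \hyperref[cond: NS2]{(NS2)} yields superluminal signalling, establishing necessity.

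For \textbf{sufficiency} I would assume \hyperref[cond: NS2]{(NS2)} and attempt to exclude both phenomena. The transparent part concerns interventions on the \emph{settings}: because $P(X|AB)=P(X|A)$, flipping Bob's setting $b\mapsto b'$ leaves Alice's observable statistics unchanged, and symmetrically, so neither party can signal by intervening on their own setting; and since the only space-like separated pairs place one variable in each wing, the absence of a detectable $B\to X$ or $A\to Y$ influence blocks the two oppositely-directed space-like influences a loop would require, giving NCL.

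The \textbf{main obstacle}, and the step I expect to break, is that the preceding paragraph tacitly restricts to interventions on settings, whereas the framework of Section~\ref{sec: framework} demands general interventions --- in particular interventions that fix the \emph{outcome} variables $X$ or $Y$. Condition \hyperref[cond: NS2]{(NS2)} constrains only the passively observed distribution $P(XY|AB)$, and such a distribution underdetermines the causal model generating it: distinct models (fine-tuned, cyclic, or non-classical) can reproduce the same \hyperref[cond: NS2]{(NS2)}-satisfying correlations yet respond differently to an intervention that forces $X$ to a fixed value. I therefore expect to find a model satisfying \hyperref[cond: NS2]{(NS2)} that nonetheless carries a superluminal influence visible only under an outcome intervention and invisible in $P(XY|AB)$, so that sufficiency fails once general interventions are admitted. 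My honest conclusion is that the proof plan delivers necessity unconditionally but shows sufficiency to hold only under the implicit ``settings-only'' restriction; pinning down this restriction, and exhibiting the counterexample when it is dropped, is exactly the crux that the later sections must settle.
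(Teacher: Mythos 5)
Your instinct here is exactly right, and it is worth saying plainly: the paper does not prove this Claim --- it refutes it. Claim~\ref{claim: bipart} is quoted from~\cite{Horodecki2019} precisely so that it can be shown to fail once general interventions are admitted, and your ``honest conclusion'' lands on the paper's actual position. The counterexample you predict for sufficiency is, almost verbatim, Example~\ref{example: NS_insuff}: take $A\longrsquigarrow Y$ and $X\longrsquigarrow Y$ with $A,X$ uniform and $Y=A\oplus X$. The observed distribution satisfies \hyperref[cond: NS2]{(NS2)} (the $A\to Y$ influence is fine-tuned away at the level of correlations), yet $AX$ affects $Y$, so Alice can signal superluminally to Bob by intervening jointly on her setting and her outcome. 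Theorem~\ref{theorem: superlum_bipart} then shows what must be added to restore sufficiency for no superluminal signalling: the absence of any affects relations emanating from subsets of $\{X,Y\}$. Your necessity argument is also sound for the principle of no superluminal signalling and matches the necessity half of Theorem~\ref{theorem: superlum_bipart} (violation of \hyperref[cond: NS2]{(NS2)} is equivalent, via the parentlessness of the settings, to an affects relation such as $B$ affects $X$ given do$(A)$, which is incompatible with the space-like embedding).

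The one genuine gap is in your choice to read ``no violation of relativistic causality'' as the conjunction of no superluminal signalling and no causal loops and then to let the signalling half carry the necessity direction. The definition actually quoted from~\cite{Horodecki2019} is the no-causal-loops condition alone, and for that reading necessity fails too: Example~\ref{example: non-necc1} takes $A\longrsquigarrow Y$ with $Y=A$, which violates \hyperref[cond: NS2]{(NS2)} outright yet lives on an acyclic causal structure with no loop of any kind. This is the content of Theorem~\ref{THEOREM: LOOPSBELL2}, and it rests on the point your proposal glosses over: superluminal signalling does not by itself produce a causal loop (that inference needs extra assumptions such as frame-independence of the signalling), and conversely causal loops need not produce superluminal signalling. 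So the correct summary is stronger than the one you reach: under general interventions \hyperref[cond: NS2]{(NS2)} is necessary but not sufficient for no superluminal signalling, and neither necessary nor sufficient for no causal loops; the Claim fails in both directions depending on which relativistic principle one means.
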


\begin{claim}[Conditions for relativistic causality in tripartite Bell scenario~\cite{Horodecki2019}]
\label{claim: tripart}
Consider a tripartite Bell experiment performed in the space-time configuration of Figure~\ref{fig: Bell_tripart1}, where in an inertial frame of reference $\mathcal{I}$ the space-time locations $({\bf x}_X,t_X)$, $({\bf x}_Z,t_Z)$ and $({\bf x}_B,t_B)$ associated with the random variables $X$, $Z$ and $B$ are such that the joint future of $({\bf x}_X,t_X)$ and $({\bf x}_Z,t_Z)$ is entirely contained in the future of $({\bf x}_B,t_B)$. Then the relaxed tripartite non-signalling conditions \hyperref[cond: NS3p]{(NS3$'$)} are necessary and sufficient for ensuring that no violation of relativistic causality occurs. 
\end{claim}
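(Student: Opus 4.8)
The plan is to prove both directions following the space-time reasoning sketched in Section~\ref{sec:jam_review}, reading \emph{relativistic causality} operationally as the conjunction of no superluminal signalling with the no-causal-loop condition quoted from~\cite{Horodecki2019}, in the Minkowski configuration of Figure~\ref{fig: Bell_tripart1}. For \textbf{sufficiency} I would assume $P(XYZ|ABC)$ satisfies \hyperref[cond: NS3p]{(NS3$'$)} and argue that no superluminal signal can be sent. The retained conditions force $P(XY|ABC)=P(XY|AB)$, $P(YZ|ABC)=P(YZ|BC)$, and the marginals $P(X|A)$, $P(Z|C)$ (and the derived $P(Y|B)$) to be independent of the distant settings, so the \emph{only} correlation permitted to carry a dependence on a space-like separated setting is the joint distribution $P(XZ|ABC)$, which may depend on $B$. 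The key observation is that $P(XZ|ABC)$ is operationally accessible only by bringing the outcomes $X$ and $Z$ together, i.e.\ only in the joint future of $({\bf x}_X,t_X)$ and $({\bf x}_Z,t_Z)$; by hypothesis this region lies inside the future light cone of $({\bf x}_B,t_B)$, so any influence of $B$ detectable through this correlation registers only within $B$'s future and is therefore subluminal. Every marginal individually accessible at a space-like separated location is, by \hyperref[cond: NS3p]{(NS3$'$)}, independent of the distant free choices, so no party reads off a distant setting outside its future light cone.

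To pass from no superluminal signalling to \textbf{no causal loops}, I would invoke the standard relativistic fact that a composition of future-directed (subluminal) influences in Minkowski space-time cannot close a loop, since a loop would require some link to leave a future light cone. Having shown that all permitted influences are future-directed in $\mathcal{I}$, the sequence-of-events causal loop of the~\cite{Horodecki2019} definition cannot be realised. For \textbf{necessity} I would show that a failure of \hyperref[cond: NS3p]{(NS3$'$)} violates relativistic causality: the relaxed conditions drop exactly the one constraint $P(XZ|ABC)=P(XZ|AC)$ whose violation is confined to $B$'s future, while retaining precisely those constraints whose violation lets a distant setting be read off a marginal accessible at a space-like separated event. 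Thus a failure of, say, $P(X|A)$ being $B$-independent, or of $P(XY|ABC)=P(XY|AB)$, would let Alice (or the Alice--Bob pair) detect Charlie's setting outside Charlie's future light cone, i.e.\ superluminal signalling, which by relativity of simultaneity yields a loop in a suitably boosted frame.

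The crux -- and the step I expect to fail -- is the passage from \emph{correlations} to \emph{causation}. The argument silently identifies ``$B$ can signal to the $XZ$ correlation'' with ``$B$ is a cause of $X$ and $Z$'', identifies the absence of a signalling correlation with the absence of a causal arrow, and restricts attention to interventions that vary only the \emph{settings} $A,B,C$ on freely chosen parentless nodes. To make the claim rigorous one must fix a definition of \emph{cause} and of \emph{causal loop} in an underlying, possibly cyclic and non-classical, causal model and test it under \emph{general} interventions, including interventions on the outcomes $X,Y,Z$, not only on $A,B,C$. It is exactly here that I anticipate the proof breaking: a correlation constraint such as \hyperref[cond: NS3p]{(NS3$'$)} only restricts the response to setting-interventions on parentless nodes and need neither forbid a superluminal causal arrow nor rule out a signal detectable under a more general intervention. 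This gap is what the framework of Section~\ref{sec: framework} is built to expose, so I would expect that carrying the proposed proof through rigorously is impossible and that the claim as stated does not survive the correlation-to-causation step.
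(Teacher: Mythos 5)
You have read the situation correctly: this Claim is quoted from~\cite{Horodecki2019} precisely so that the paper can refute it, and the paper contains no proof of it. Your sketch of the sufficiency and necessity arguments faithfully reproduces the informal reasoning of~\cite{Grunhaus1996, Horodecki2019}, and the crux you flag at the end --- the silent identification of signalling with causation, and the restriction to interventions on the freely chosen settings while ignoring interventions on outcomes --- is exactly where the paper locates the failure.

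The paper makes the failure you anticipated concrete. For sufficiency: Theorem~\ref{theorem: jamming} shows that if the shared system $\Lambda$ is accessible to interventions, any jamming correlations satisfying \hyperref[cond: NS3p]{(NS3$'$)} in the jamming configuration force an irreducible affects relation $B\Lambda$ affects $X$ (or $Z$), which is superluminal signalling; and Example~\ref{example: NS_insuff} (extended to the tripartite case with dummy $C$ and $Z$) exhibits a model satisfying \hyperref[cond: NS3p]{(NS3$'$)} in which $AX$ affects $Y$, so an intervention on the \emph{outcome} $X$ lets Alice signal superluminally to Bob --- something no correlation constraint on $P(XYZ|ABC)$ can detect. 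On the loop side, the step "superluminal signalling $\Rightarrow$ causal loop" and its contrapositive both fail: Example~\ref{example: non-necc1} violates \hyperref[cond: NS3p]{(NS3$'$)} yet lives on an acyclic causal structure (so the condition is not necessary for no causal loops), while the model of Figure~\ref{fig: loop_PRL} realises an affects causal loop whose correlations satisfy \hyperref[cond: NS3p]{(NS3$'$)} without any superluminal signalling. The corrected statement is Theorem~\ref{theorem: superlum_tripart}: \hyperref[cond: NS3p]{(NS3$'$)} is indeed necessary for no superluminal signalling in the jamming configuration, but it is sufficient only when supplemented by the requirement that no (higher-order) affects relations emanate from the outcomes (with the stated exception for $Y$ when $\overline{\mathcal{F}}(\cX)\cap\overline{\mathcal{F}}(\cZ)\subseteq\overline{\mathcal{F}}(\cY)$); and even the strengthened conditions govern superluminal signalling rather than causal loops (Theorems~\ref{theorem: loops1} and~\ref{THEOREM: LOOPSBELL2}). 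So do not attempt to complete the proof: the Claim as stated does not survive the correlation-to-causation step, exactly as you predicted.
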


One of the aims of this work is to assess the validity of these claims; possible issues with them are outlined in the next section.

\section{Challenges in formulating jamming in a causal framework accounting for information-theoretic and relativistic causality}

Formulating the concept of jamming within a rigorous causal framework and analyzing whether theories that permit jamming non-local correlations adhere to relativistic principles—such as no superluminal signalling and no causal loops—requires addressing several broader conceptual and technical challenges. As mentioned above, previous works have claimed that jamming theories comply with these principles, but this has been insufficient to resolve the afore-mentioned challenges. In particular, this is because the claims are made in an intuitive manner rather than within a framework with rigorous definitions of concepts such as ``cause'', ``events'', ``signalling'', ``causal loops'' which are used in the informal claims. These ambiguities hinder a systematic evaluation of their validity. Below, we outline the key challenges that highlight the need for a clear and robust formalism, while also identifying the specific obstacles to verifying the previous claims in the absence of such a formal framework.

\bigskip

{\bf 1. How is causation defined?}\\
The relativistic causality condition proposed in~\cite{Horodecki2019} hinges on the concepts of ``cause'' and ``events'' but the precise meaning of these concepts is left ambiguous. There are at least two distinct notions of causation at play here: information-theoretic causation which can be defined as a directed graph capturing the flow of information in a network of physical systems, and spatio-temporal causation given by the light-cone structure of the space-time. These works focus on Minkowski space-time whose light-cone structure forms a partial order. If we interpret ``events'' here as space-time events and ``causation'' as given by the light-cone structure, then a claim of no causal loops holds by construction, irrespective of the correlations being considered and thus \hyperref[cond: NS2]{(NS2)} or \hyperref[cond: NS3p]{(NS3$'$)} would not be necessary for relativistic causality in this case, which would be inconsistent with the claims. 

On the other hand, if we interpret ``events'' and ``causation'' relative to the nodes (physical systems) and directed edges (given by the structure of information-theoretic channels) of the information-theoretic causal structure, then the standard causal structure of a Bell scenario is a directed acyclic graph and hence also has no causal loops. This is because the measurement settings being free choices have no incoming arrows and the measurement outcomes are typically assumed to not have any outgoing causal arrows (this assumption also appears to be made in~\cite{Horodecki2019}, in the proof of the claims). Then again, the validity of the claims comes under question.  Therefore it is unclear if there exists a mathematically rigorous definition of these concepts with respect to which Claims~\ref{claim: bipart} and~\ref{claim: tripart} hold true.

\bigskip

{\bf 2. How to incorporate general interventions into the analysis of relativistic causality?}\\
The claims suggest that a condition on the correlations is necessary and sufficient for ruling out causal loops. However, because correlation does not imply causation, one would expect that further assumptions relating to the causal structure are needed in order for these claims to be true. As noted in Point~1 above, the notion of causal structure featuring in the statements is ambiguous and consequently so are the assumptions made on it. Generally, when inferring causation in the real-world (e.g., when inferring whether a drug causes recovery from a disease), one must consider interventions~\cite{Spirtes2001, Pearl2009} in addition to correlations, which form the basis of controlled trials and of the general methods by which we infer causation from observed data in practice. 

Interventions give the general method by which parties can signal between each other, and allow us to infer the presence of causal influences. However, signalling and causation are generally not equivalent, as it is possible that underlying causal parameters are fine-tuned which can lead to certain causal influences being undetectable at the level of observable signalling relations. Previous works on relativistic causality have focused mainly on correlations, and on causal models without such fine-tuned influences. However, as suggested in~\cite{VilasiniColbeckPRA, VilasiniColbeckPRL}, and shown in Section~\ref{sec: jam_finetune_superlumcaus}, jamming correlations necessarily require fine-tuned causal explanations and if we assume no fine-tuning (which would allow signalling and causation to be equated), we can no longer study jamming correlations. It therefore becomes important, when analysing these questions, to account for correlations and interventions, while noting that signalling and causation are generally different (but related) concepts.

\bigskip

{\bf 3. Superluminal signalling and causal loops do not necessarily imply each other.}\\
The connection between the non-signalling conditions \hyperref[cond: NS2]{(NS2)} and \hyperref[cond: NS3p]{(NS3$'$)} and superluminal signalling in bi/tripartite Bell scenarios is more intuitively apparent from the original arguments of~\cite{Grunhaus1996}, which were also revisited in the previous sections. In~\cite{Horodecki2019} this is taken a step further by trying to link these non-signalling conditions to causal loops, arguing that a violation of \hyperref[cond: NS2]{(NS2)} or \hyperref[cond: NS3p]{(NS3$'$)} in the respective scenario would enable superluminal signalling which can be used to signal to the past and create causal loops. 
However, the possibility of superluminal signalling does not by itself imply causal loops: if superluminal signalling was possible only in a single preferred reference frame, this could not be used to create a causal loop as discussed in the proof of the claims in~\cite{Horodecki2019}. 

Conversely, one may consider whether the existence of a causal loop always enables parties to signal superluminally. In a recent work~\cite{VilasiniColbeckPRL}, we have shown that this is also not the case using an explicit example of an information-theoretic causal loop embedded in Minkowski space-time without leading to superluminal signalling. This highlights that these two principles (no causal loops and no superluminal signalling) which are both associated with relativistic causality are distinct principles that do not imply one another and one must therefore be careful to distinguish them when analysing relativistic causality.


\section{A framework for relating information-processing and relativistic causality in general theories}
\label{sec: framework}

We use the framework we developed in~\cite{VilasiniColbeckPRA} which has the following features, which together allow the problems identified in the previous section to be addressed:\footnote{A formalism incorporating similar desiderata was also developed in \cite{VilasiniRenner2022}, for the purpose of addressing open problems relating to so-called indefinite causal order quantum processes \cite{Oreshkov2012, Chiribella2013}. In this work we apply the framework of \cite{VilasiniColbeckPRA, VilasiniColbeckPRL} as we are interested in proving theory-independent statements in scenarios with a well-defined (but possibly cyclic) causal structure.}

\begin{itemize}
    \item Clearly disentangles and formalises spatio-temporal and information-theoretic causality notions
    \item Distinguishes between correlation, causation and signalling in a theory-independent manner
    \item Allows for non-classical, fine-tuned and cyclic causal influences in the information-theoretic causal structure
\item Connects the two causality notions in order to formalise different relativistic causality principles
\end{itemize}

We proceed by outlining this framework starting with causal models and then how they are embedded in space-time.

\subsection{Causal models, affects relations and causal loops}

To formalise an information-theoretic notion of causality, our framework from~\cite{VilasiniColbeckPRA} adopts a causal modelling approach formulated under minimal and theory-independent assumptions so as to be applicable to scenarios with cyclic and non-classical causal influences. Here we present an overview of the main ingredients of the formalism, illustrating the core physical intuition through examples, and refer the reader to Appendix~\ref{appendix: framework} and~\cite{VilasiniColbeckPRA, VilasiniColbeckPRL} for further details.

\bigskip
{\bf Causal structures and observed correlations}
A \emph{causal structure} is modelled as a directed graph $\mathcal{G}$ over a set of nodes $N$ with directed edges $\longrsquigarrow$ that denote direct causal influence from one node to another. This directed graph may either be cyclic or acyclic and is a priori independent of space and time.  Each node of the causal structure is labelled either as ``observed'' or ``unobserved''. The former are classical random variables (e.g., measurement settings and outcomes) while the latter may be a classical, quantum or post-quantum system, such as a system described by a generalised probabilistic theory (GPT). 
A causal structure is said to be classical, quantum or post-quantum depending on the nature of its unobserved nodes and the joint probability distribution $P_{\mathcal{G}}(N_{\mathrm{obs}})$ over the set $N_{\mathrm{obs}}\subseteq N$ of all observed nodes of a causal structure $\mathcal{G}$ is known as the \emph{observed distribution} of $\mathcal{G}$. Which observed distributions $P_{\mathcal{G}}$ are allowed by a causal structure $\mathcal{G}$ will in general depend on the theory describing its unobserved nodes. Whenever $\mathcal{G}$ is evident from context, we will simply denote $P_{\mathcal{G}}$ as $P$.

We use minimal, theory-independent constraints to define when we can associate a probability distribution over the variables $N_{\mathrm{obs}}$ with a causal structure $\mathcal{G}$. The idea behind this condition is that the topology of the causal graph $\mathcal{G}$, in particular whether or not there are graph separation relations between some sets of observed nodes, will imply corresponding conditional independences in any associated observed distribution over those nodes. The graph separation criterion used for this purpose is called \emph{d-separation} (where d stands for ``directed''), which was first introduced in the classical causal modelling literature~\cite{Pearl2009, Spirtes2001} and more recently considered in the context of quantum and GPT causal models~\cite{Henson2014, Barrett2020A}. We use $(X\perp^d Y|Z)_{\cG}$ to denote that $X$ is d-separated from $Y$ given $Z$ in $\cG$ (where $X$, $Y$, $Z$ are disjoint sets of observed nodes $N_{\mathrm{obs}}$ of $\cG$), and refer the reader to Appendix~\ref{appendix: framework} for the full technical definition. Then for a causal structure $\cG$ with a set of observed nodes $N_{\mathrm{obs}}$ we say that a distribution $P$ on $N_{\mathrm{obs}}$ satisfies the \emph{d-separation property} with respect to $\cG$ iff for all disjoint subsets $X$, $Y$, $Z$ of $N_{\mathrm{obs}}$,
\begin{equation}
    \label{eq: dsep_prop}
    (X\perp^d Y|Z)_{\cG} \quad \Rightarrow \quad (X\indep Y|Z)_{P},
\end{equation}

where $(X\indep Y|Z)_{P}$ denotes the conditional independence $P(XY|Z)=P(X|Z)P(Y|Z)$. This provides a theory-independent condition for associating a set of possible observed distributions with a graph $\cG$. The d-separation property implies certain intuitive features. For instance, in the typical bipartite Bell causal structure of Figure~\ref{fig:Bell_bipart_CS}, the non-signalling constraints \hyperref[cond: NS2]{(NS2)} on the observed correlations $P(XY|AB)$, along with the fact that $P(AB)=P(A)P(B)$ are implied by corresponding d-separations in the graph.

\bigskip

{\bf Causal model} The directed edges of $\mathcal{G}$ capture the flow of information through the network of observed and unobserved systems associated with its nodes. How this flow of information is mathematically modelled will depend on how states, information-processing channels and measurements (which we will broadly refer to as the \emph{causal mechanisms}) are modelled in the theory under consideration. For example, in classical deterministic theories, where all states are described by classical variables and all evolutions in terms of deterministic functions, then all the nodes of the causal structure would be classical random variables and we would require there to exist, for every node $N$, a deterministic function $f_N:\mathrm{par}(N)\to N$ from the set par$(N)$ of all the parents of $N$, to $N$, when $N$ has at least one parent, and the specification of a distribution $P(N)$ when $N$ is parentless. In classical probabilistic theories, one would require conditional distributions $P(N|\mathrm{par}(N))$ (or equivalently, stochastic maps) for each node $N$ that describe classical informational channels from the parents to the node. [Probabilistic models can be equivalently described as deterministic models by including additional (possibly hidden) variables associated with additional nodes.] 

More generally, these causal mechanisms can be quantum channels or dynamics described by a post-quantum theory. Previous approaches to causal modelling (in the classical and quantum literature) adopt a bottom-up approach where they start from assumptions on the causal mechanisms (possibly non-classical) associated with a graph and derive conditions (such as the d-separation property) on the observed correlations that they generate and these results are often derived for the case where the causal graph is acyclic. There are several such approaches to quantum and non-classical causal models (e.g.,~\cite{Leifer2013, Henson2014, Costa2016, Allen2017, Barrett2020}). In our framework~\cite{VilasiniColbeckPRA}, we adopt a top-down approach where we make no assumptions on the causal mechanisms or about the acyclicity of the causal structure. Instead we impose the d-separation property (Definition~\ref{definition: compatdist}) at the level of the observed correlations and study what this implies for inferring properties of the underlying causal structure (which may be fully or partially unknown). This gives the following minimal working definition of a causal model that does not make any assumptions about the underlying causal mechanisms of the theory (not even that the theory must have some circuit-like structure).

\begin{definition}[Causal model]
\label{def: causalmodel}
A causal model is a pair $(\cG, P_{\cG})$ where $\cG$ is a directed graph over a set $N_{\mathrm{obs}}$ of observed nodes (or equivalently random variables) and possibly additional unobserved nodes corresponding to classical/quantum/GPT systems, while $P_{\cG}$ is a joint probability distribution on $N_{\mathrm{obs}}$ that respects the d-separation property (Equation~\eqref{eq: dsep_prop}) relative to $\cG$.
\end{definition}

Throughout this paper we will often use $S_1S_2$ as shorthand for the union $S_1\cup S_2$ of two sets of variables.

\begin{figure}
    \centering
\subfloat[]{ \includegraphics[]{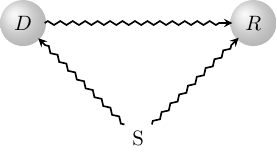}}\qquad\qquad\subfloat[]{ \includegraphics[]{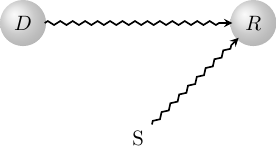}}
    \caption{If the original causal structure $\mathcal{G}$ between the observed nodes $D$, $R$ and unobserved node $S$ is that of (a), then the causal structure of (b) is the post-intervention causal structure $\mathcal{G}_{\mathrm{do}(D)}$ associated with an intervention on the node $D$.}
    \label{fig:intervention}
\end{figure}

\bigskip

{\bf Interventions and affects relations}
So far, we have discussed correlations, which are symmetric (unlike causation which is an asymmetric relation) and are not sufficient for inferring a causal structure. For instance, the existence of correlation between taking a drug (modelled as a variable $D$) and recovery from a disease (modelled as a variable $R$) in a population does not tell us whether or not the drug causes recovery, because it is possible to have a common factor $S$ (such as socio-economic background) that influences both an individual's access to a drug and their immunity to the disease, which could explain the same correlations. In addition to correlations, considering free interventions is
crucial for deducing causal relations from observable data. Such interventions play an important role in controlled trials that allow us to infer whether a drug causally influences recovery from a disease. 

Figure~\ref{fig:intervention} illustrates the concept of interventions in causal models. To deduce whether $D$ is a cause of $R$, we intervene on $D$ and this transforms the original causal structure $\cG$ (which may be unknown) to a post-intervention causal structure $\mathcal{G}_{\mathrm{do}(D)}$ that is a sub-graph of $\cG$ where the incoming arrows to $D$ are removed (making it parentless). An intervention then corresponds to fixing $D$ to take a certain value, $D=d$ in the post-intervention causal structure. Intuitively, this models the situation in a control trial where the values $d$ of $D$ would determine whether or not the drug (or a placebo) is advocated in a sub-population. Moreover, $D$ being parentless in the post-intervention causal structure models the natural assumption that the members of control group for the drug test and the placebo group are chosen freely and independently of prior causes of $D$ (such as $S$). Then, to infer whether $D$ causally influences another observed variable $R$, we check whether intervention on $D$ changes the distribution on $R$. In other words, whether there exists a value $d$ of $D$ such that the distribution on $R$ in the post-intervention scenario with $D$ fixed to $d$ differs from the original, pre-intervention distribution on $R$. If so, we will say that $D$ affects $R$. The following definitions formalise this intuition.

In previous causal modelling frameworks (prior to that of~\cite{VilasiniColbeckPRA}), the post-intervention causal model and causal inference are typically defined in terms of the causal mechanisms of the underlying theory. The framework of~\cite{VilasiniColbeckPRA} enables a more minimal definition of these concepts based on the minimal Definition~\ref{definition: compatdist} of a causal model, which is independent of the theory or causal mechanisms. 

\begin{definition}[Post-intervention causal model]
\label{def:post_intervention}
Consider a causal model on a graph $\cG$ specified by the graph together with a distribution $P_{\cG}$  satisfying Definition~\ref{definition: compatdist}. A post-intervention causal model associated with interventions on a subset $X$ of the observed nodes of $\cG$ is specified by a graph $\cG_{\mathrm{do}(X)}$ obtained from $\cG$ by removing all incoming directed edges to the set $X$, together with a distribution $P_{\cG_{\mathrm{do}(X)}}$ satisfying Definition~\ref{definition: compatdist} relative to $\cG_{\mathrm{do}(X)}$. Further, suppose $Y$ is a (possibly empty) subset of the observed nodes. Whenever another subset of the observed nodes, $X$ (disjoint from $Y$), contains only parentless nodes in $\cG_{\mathrm{do}(Y)}$\footnote{If $Y$ is empty, then  $\mathcal{G}_{\mathrm{do}(Y)}=\mathcal{G}$.}, we require that
\begin{equation}
    P_{\cG_{\mathrm{do}(XY)}}(S|XY)=P_{\cG_{\mathrm{do}(Y)}}(S|XY),
\end{equation}
where $S$ is the set of all other observed nodes.
\end{definition}

\begin{definition}[Affects relation]
\label{definition: affects}
Consider a causal model over a set of random variables $S$ with causal graph $\mathcal{G}$ and a joint distribution $P_{\cG}$ associated with it. For any disjoint and non-empty subsets $X$, $Y\subseteq S$, we say that $X$ \emph{affects} $Y$ if there exists a value $x$ of $X$ such that
\begin{equation*}
   P_{\cG_{\mathrm{do}(X)}}(Y|X=x)\neq P_{\cG}(Y).
\end{equation*}
\end{definition}

Operationally, $X$ \emph{affects} $Y$ tells us that $X$ \emph{signals to} $Y$, as it means that there is an operation (given by the choice of intervention) that can be performed locally on $X$ which results in an observable change at $Y$. In general, it has been shown in~\cite{VilasiniColbeckPRA} that affects relations are non-transitive ($X$ affects $Y$ and $Y$ affects $Z$ does not imply $X$ affects $Z$) and also that, correlations between $X$ and $Y$ in the post-intervention (on $X$) model implies that $X$ affects $Y$ i.e.,
\begin{equation}
\label{eq: correl_affects}
P_{\cG_{\mathrm{do}(X)}}(Y|X)\neq P_{\cG_{\mathrm{do}(X)}}(Y) 
\quad \Rightarrow \quad X \text { affects } Y.
\end{equation}

Interestingly, the converse of the above implication is not true, as shown by a counter-example in~\cite{VilasiniColbeckPRA}. Importantly, $X$ affects $Y$ allows us to infer that $X$ causally influences $Y$, i.e., if $X$ and $Y$ are sets of RVs, $X$ affects $Y$ implies that there exists $e_X\in X$ and $e_Y\in Y$ such that there is a directed path from $e_X$ to $e_Y$ in the underlying causal structure~\cite{VilasiniColbeckPRA}. However, the converse is not true, we can have $X\longrsquigarrow Y$ in the causal structure but have causal models on such a structure where $X$ does not affect $Y$. This is possible by fine-tuning the causal mechanisms of the model so as the hide the causal influence from being observable at the level of the affects relations. More formally, fine-tuning relates to the converse statement of the d-separation property (Equation~\eqref{eq: dsep_prop}), it tells that there is an absence of observed correlation (captured by conditional independence) despite a corresponding causal connection (captured by d-connection). 

\begin{definition}[Fine-tuned causal model]
A causal model $(\cG, P_{\cG})$  is said to be fine-tuned whenever for some disjoint sets $X$, $Y$ and $Z$ of observed nodes of $\cG$, we have $(X\indep Y|Z)_{P_{\cG}}$ but $(X\not\perp^d Y|Z)_{\cG}$. 
\end{definition}

In particular, in any causal structure where we have $X\longrsquigarrow Y$, $X$ and $Y$ are d-connected, so any causal model on such a causal structure where $P(XY)=P(X)P(Y)$ would be fine-tuned. For example, if we had a causal structure $X\longrsquigarrow Y$ and $Z\longrsquigarrow Y$ with a causal model where $X$, $Y$, $Z$ are binary, $Z$ is uniformly distributed and $Y=X\oplus Z$, then $P(XY)=P(X)P(Y)$ would hold for any distribution over $X$, thus making the causal model fine-tuned. Since $X$ is parentless, this also implies that $X$ does not affect $Y$ (even though $X$ is a cause of $Y$). Due to fine-tuning, a causal relationship $X\longrsquigarrow Y$ does not in general imply a corresponding affects relation $X$ affects $Y$, which motivates the following classification of the arrows $\longrsquigarrow$ into solid $\longrightarrow$ and dashed $\xdashrightarrow{}$ ones.

\begin{definition}[Solid and dashed arrows]
\label{definition: solidasharrows}
Given a causal graph $\mathcal{G}$, if two observed nodes $X$ and $Y$ in $\mathcal{G}$ sharing a directed edge $X\longrsquigarrow Y$ are such that $X$ affects $Y$, then the causal arrow $\longrsquigarrow$ between those nodes is called a \emph{solid arrow}, denoted $X\longrightarrow Y$. Further, all arrows $\longrsquigarrow$ between observed nodes in $\mathcal{G}$ that are \emph{not} solid arrows are called \emph{dashed arrows}, denoted $\xdashrightarrow{}$ i.e.,  $X\xdashrightarrow{} Y$ in $\cG$ implies that $X$ does not affect $Y$, for any observed nodes $X$ and $Y$.
\end{definition}

More generally, when considering interventions on multiple nodes, we can envisage protocols by which agents are able to signal to each other once information about other interventions are also given to them. The concept of \emph{higher-order affects relations} that we introduced in~\cite{VilasiniColbeckPRA} captures such signalling relations. 

\begin{definition}[Higher-order affects relation]
\label{definition:HOaffects}
Consider a causal model associated with a causal graph $\cG$ over a set $N_{\mathrm{obs}}$ of observed nodes and an observed distribution $P$. Let $X$ , $Y$, and $Z$ be three pairwise disjoint subsets of $N_{\mathrm{obs}}$, with $X$ and $Y$ non-empty. We say that $X$ affects $Y$ given do$(Z)$ if there exist values $x$ of $X$ and $z$ of $Z$ such that
\begin{equation}
\label{eq: HOaffects}
   P_{\cG_{\mathrm{do}(X Z)}}(Y|X=x,Z=z)\neq P_{\cG_{\mathrm{do}(Z)}}(Y|Z=z),
\end{equation}
which we denote in short as $ P_{\cG_{\mathrm{do}(X Z)}}(Y|X,Z)\neq P_{\cG_{\mathrm{do}(Z)}}(Y|Z)$.

When $Z= \emptyset$, we refer to this as a \emph{zeroth-order affects relation} and it coincides with Definition~\ref{definition: affects}. 
\end{definition}

As with (zeroth order) affects relations, higher-order affects relations $X$ affects $Y$ given do$(Z)$ also allow us to deduce that $X$ causally influences $Y$, or equivalently that \emph{at least one} variable $e_X$ in $X$ influences a variable $e_Y$ in $Y$, as shown below.
\begin{equation}
\label{eq: HO_cause}
X \text{ affects } Y \text{ given do} (Z)  \quad \Rightarrow\quad  \exists e_X\in X, e_Y\in Y, \quad e_X \longrsquigarrow...\longrsquigarrow e_Y \text{ in } \cG,
\end{equation}
where $e_X \longrsquigarrow...\longrsquigarrow e_Y$ denotes a directed path from $e_X$ to $e_Y$.
Notice that in a causal structure with $X_1 \longrsquigarrow Y$ and another disconnected node $X_2$ which has no in or outgoing edges, if we have $X_1$ affects $Y$, we will in general also have $X_1X_2$ affects $Y$. However we know that the latter affects relation does not add any new information beyond the former one since $X_2$ is completely irrelevant in this affects relation. The concept of \emph{reducibility} of (higher-order) affects relations formalises this idea just at the level of the observed affects relation (without assumptions on the causal structure). A formal definition can be found in Appendix~\ref{appendix: framework}. Applied to this simple example, we can say that $X_1X_2$ affects $Y$ is a reducible affects relation while $X_1$ is an irreducible affects relation, and $X_1X_2$ affects $Y$ can be reduced to $X_1$ affects $Y$. In the case that $X$ affects $Y$ given do$(Z)$ is an irreducible affects relation, we have a stronger causal inference statement that allows us to infer causation from \emph{every} element $e_X$ of $X$.
\begin{equation}
\label{eq: HOirred_cause}
X \text{ affects } Y \text{ given do} (Z) \text{ irreducible}   \quad \Rightarrow\quad  \forall e_X\in X, \exists e_Y\in Y, \quad e_X \longrsquigarrow...\longrsquigarrow e_Y \text{ in } \cG,
\end{equation}

Higher-order affects relations have several useful implications and results for causal inference in the presence of cyclic and fine-tuned causal influences as well as latent non-classical causes. In~\cite{VilasiniColbeckPRA} we show that \emph{Pearl's rules of do-calculus}~\cite{Pearl2009} originally proven for classical causal models hold in our general framework and follow from the d-separation property alone. The results mentioned here, along with other properties proven in~\cite{VilasiniColbeckPRA}, reveal the rich and highly non-trivial structure of (higher-order) affects relations and their precise relations to the concepts of causation and correlations. More generally, we have \emph{conditional higher-order affects relations} which also account for post-selection on non-interventional nodes, an interested reader can find further details on this in Appendix~\ref{appendix: framework}.

\bigskip

{\bf Causal loops}
We can now provide a definition of causal loops that is a priori independent of space-time. A causal loop simply corresponds to a directed cycle in a causal structure $\mathcal{G}$ and we only consider causal loops involving observed nodes i.e., where for two observed nodes $X$ and $Y$ in $\cG$, there exist directed paths from $X$ to $Y$ and from $Y$ to $X$. This itself is not a very useful definition because, as we have seen, due to fine-tuning, causation does not imply the existence of signalling (i.e., affects relations).  The latter provided motivation to classify causal arrows (Definition~\ref{definition: solidasharrows}) into solid and dashed based on the operational detectability of these causal influences through corresponding affects relations. Similarly, we can distinguish between different types of causal loops in our framework depending on whether they can be operationally detected through their affects relations. 

\begin{definition}[Affects causal loops (ACL)]
\label{def: ACL}
Any set of affects relations $\mathscr{A}$ that can only arise in a causal model associated with a cyclic causal structure $\mathcal{G}$ are said to contain an affects causal loop. In other words, affects causal loops certify the cyclicity of the underlying causal structure through the observed affects relations.
\end{definition}

The simplest example of an ACL is $\mathscr{A}=\{X \text{ affects } Y, Y \text{ affects } X\}$, where $X$ and $Y$ are individual variables (see also Example~\ref{example: affects_loop} later). The former relations implies that $X$ is a cause of $Y$ and the latter that $Y$ is a cause of $X$, implying that any causal model capable of producing these affects relations must contain a directed cycle between the nodes associated with $X$ and $Y$. More generally we can have affects causal loops characterised by affects relations between sets of observed nodes, e.g., $\mathscr{A}=\{AC \text{ affects } B, B \text{ affects } AC\}$ where both are irreducible (if they are not irreducible, this affects relations may not necessarily form an affects causal loop\footnote{Irreducibility in this case implies that $A$ is a cause of $B$ and $C$ is a cause of $B$, and $B$ is a cause of at least one of $A$ and $C$ (cf.\ Equation~\eqref{eq: HOirred_cause}), which implies a directed cycle between $A$ and $B$ or $C$ and $B$. However, if $AC$ affects $B$ was reducible, then it can be that only one of $A$ or $C$ is a cause of $B$ while $B$ is a cause of the other variable which does not cause $B$ (cf.\ Equation~\eqref{eq: HO_cause}), which would not necessarily be cyclic.}). In~\cite{VilasiniColbeckPRL} we have constructed an explicit causal model with these affects relations and shown that there exists a way to ``embed'' this affects causal loop in $(1+1)$-dimensional Minkowski space-time such that it does not lead to superluminal signalling, which establishes the (mathematical) possibility of operationally detectable causal loops that can be compatible with a partially ordered space-time. This example is illustrated and further explained in Figure~\ref{fig: loop_PRL}. Definitions and characterisations of several further classes of affects causal loops that arise in this framework can be found in \cite{VilasiniColbeckPRA}.

\begin{definition}[Hidden causal loop (HCL)]
\label{def: HCL}
Given a causal model whose causal structure contains a directed cycle, we say that this causal model contains a \emph{hidden causal loop} if the same set of affects relations and same correlations as this model are also realisable in a causal model on an acyclic causal structure.
\end{definition}

HCLs can never be operationally detected, and, as they can be explained also in an acyclic model, one would naturally prefer the simpler acyclic explanation of the associated operational predictions (see Example~\ref{example: hidden_loop} and~\cite{VilasiniColbeckPRA} for examples of HCLs). 

\subsection{Space-time embeddings and compatibility}
Having defined causality operationally, without reference to a space-time structure, we now bring space-time into the picture to define when a causal model can be said to be compatible with a space-time structure. To define this compatibility, we need a few concepts at hand. The first is that of an \emph{ordered random variable} (or ORV) which can be seen as an abstract version of space-time random variables. Here, we model space-time simply as a partially ordered set where elements appearing earlier in the order denote space-time points that are in the past of elements appearing later in the order, and unordered elements correspond to space-like separated points. More formally, we have the following.

\begin{definition}[Ordered random variable (ORV)]
\label{def: ORV}
Each ORV $\mathcal{X}$ is defined by a pair $\mathcal{X}:=(R(\mathcal{X}),O(\mathcal{X}))$ where $R(\mathcal{X})$ (equivalently denoted by the corresponding, non-caligraphic letter $X$) is a random variable and $O(\mathcal{X})\in \mathcal{T}$ specifies the location of $X$ with respect to a partially ordered set $\mathcal{T}$.
\end{definition}

We then have the following definition of what it means to embed a set of RVs in a space-time.\footnote{The definition of embedding provided here is a simplified version of the original definition proposed in~\cite{VilasiniColbeckPRA}. There, the embedding also involves an additional concept of an accessible region of an ORV, which corresponds to the subset of space-time locations at which it possible to find copies of that RV. Within standard relativity theories, the accessible region corresponds to the future light cone of the ORV, but in our general framework, this could be any subset of locations. For the results of present paper, we set the accessible region of an ORV to be the future light cone of the associated space-time location (while including that location).}

\begin{definition}[Embedding]
\label{definition: embedding}
Given a set of RVs $S$, an \emph{embedding of $S$} in a partially ordered set $\mathcal{T}$ produces a corresponding set of ORVs $\mathcal{S}$ by assigning a location $O(X)\in\mathcal{T}$ to each $X\in S$. An embedding of a set of RVs is called \emph{non-trivial} if no two RVs $X$ and $Y$ such that $X$ affects $Y$ are assigned the same location in $\mathcal{T}$. 
\end{definition}

\emph{Notation:} We use $\prec$, $\succ$ and $\nprec\nsucc$ to denote the order relations for a given partially ordered set $\mathcal{T}$, where for $\alpha$, $\beta\in \mathcal{T}$, $\alpha \nprec\nsucc \beta$ corresponds to $\alpha$ and $\beta$ being unordered with respect to $\mathcal{T}$. This is not to be confused with $\alpha=\beta$ which corresponds to the two elements being equal. These relations carry forth in an obvious way to ORVs and we say for example that 2 ORVs $\mathcal{X}$ and $\mathcal{Y}$ are ordered as $\mathcal{X}\prec \mathcal{Y}$ iff $O(\mathcal{X})\prec O(\mathcal{Y})$. In addition, $\mathcal{X}=\mathcal{Y}$ is equivalent to $X=Y$ and $O(\mathcal{X})=O(\mathcal{Y})$.

 \begin{definition}
 \label{definition: incfuture}
 The \emph{inclusive future} of an ORV is the set 
 \begin{equation*}
    \overline{\mathcal{F}}(\mathcal{X}):=\{\alpha \in \mathcal{T}: \alpha\succeq O(\mathcal{X})\}.
\end{equation*}
The inclusive future of a set $\cS$ of ORVs is given as the intersection of the inclusive futures of the ORVs in the set, and captures the region where they can be jointly accessed\footnote{In the original framework~\cite{VilasiniColbeckPRA} a separate concept called the \emph{accessible region} of an ORV is defined which is a priori distinct from the future of the ORV. Here, we have set the accessible region to be equal to inclusive future, such that the compatibility condition captures the impossibility of signalling outside the space-time future.}
 \begin{equation*}
    \overline{\mathcal{F}}(\mathcal{S}):=\bigcap_{\cX\in \cS}  \overline{\mathcal{F}}(\mathcal{X}).
\end{equation*}
 \end{definition}

 Once we assign space-time locations to observed RVs in a causal model to promote them to ORVs, a natural question that arises is whether the affects relations of the model can be used to signal outside the space-time's future.
 For this, notice that a higher-order affects relation $X$ affects $Y$ given do$(Z)$ (representing that $X$ can signal to $Y$ given information about interventions performed on $Z$) between disjoint sets of RVs essentially captures the most general way of signalling in such a causal modelling framework\footnote{In our original paper~\cite{VilasiniColbeckPRA} we also define conditional higher-order affects relations of the form $X$ affects $Y$ given $\{\mathrm{do}(Z),W\}$ where we have an additional conditioning on values of some set $W$ of nodes which are not intervened upon, this captures signalling from $X$ to $Y$ given an intervention on $Z$ and a post-selection on $W$. Our compatibility condition also takes this into account. In this paper, these conditional relations are not relevant and we therefore don't consider them, however they are defined in Appendix~\ref{appendix: framework} for completeness.}. When embedded in space-time, this becomes an affects relation $\cX$ affects $\cY$ given do$(\cZ)$ between corresponding sets of ORVs. Now, if Alice, who has access to $X$ and can intervene on it, wishes to signal to Bob through this affects relation,
 Bob needs to jointly access $Y$ and $Z$ to receive the signal from Alice. Once these variables are embedded in space-time, in order to ensure that the affects relation does not enable signalling outside the space-time future, we would require that the joint future of $\cY$ and $\cZ$ (which is where Bob can jointly access $Y$ and $Z$ to receive the signal) must be contained in the future of $\cX$ (where Alice performs her interventions). Further, we should only demand this for irreducible affects relations, as we would otherwise be imposing unnecessarily strong constraints. For instance in our example with $X_1\longrsquigarrow Y$ and $X_2$ with no in or outgoing edges, if we impose compatibility for the reducible relation $X_1X_2$ affects $Y$ then we would be requiring $Y$ to be embedded in the future of $X_2$ even though $X_2$ is completely causally disconnected from $Y$. However, if we reduce all our reducible relations to their minimal irreducible form, which in this case is $X_1$ affects $Y$, then compatibility only requires $Y$ to be embedded in the future of the location where we embed $X_1$, as we would expect. Therefore, the concept of irreducibility is key for ensuring that our compatibility condition is both necessary and sufficient for no superluminal signalling.

\begin{definition}[Compatibility of a set of affects relations with an embedding in a partial order ($\mathbf{compat}$)]
\label{definition: compatposet}
Let $\mathcal{S}$ be a set of ORVs formed by embedding a set of RVs $S$ in a space-time $\mathcal{T}$ with embedding $\mathscr{E}$. Then a set of affects relations $\mathscr{A}$ is said to be \emph{compatible} with the embedding $\mathscr{E}$ if whenever $X$ affects $Y$ given do$(Z)$ belongs to $\mathscr{A}$, and is irreducible with respect to the affects relations in $\mathscr{A}$, then $\overline{\mathcal{F}}(\cY)\bigcap \overline{\mathcal{F}}(\cZ)\subseteq \overline{\mathcal{F}}(\cX) $ with respect to $\mathscr{E}$. 
\end{definition}

\begin{definition}[Compatibility of a causal model with an embedding in a partial order]
We say that a causal model over a set of RVs $S$ is compatible with an embedding in a partial order if the set of affects relations $\mathscr{A}$ implied by the causal model are compatible with the embedding (cf.\ Definition~\ref{definition: compatposet}).
\end{definition}

\section{Formal definitions of Bell scenarios and corresponding space-time embeddings}

Based on the framework reviewed in the previous section, we now provide a formal definition of Bell scenarios, as well as the associated space-time embeddings, that will be used in the rest of the paper. 

\begin{definition}[Bi and tripartite Bell scenarios]
\label{def: Bell_scenario}
A bi (tri) partite Bell scenario refers to any experiment involving two (three) parties Alice and Bob (and Charlie) who share a common system $\Lambda$ and measure their respective sub-system as given by their measurement settings $A$, $B$ (, $C$) obtaining the corresponding measurement outcomes $X$, $Y$ (, $Z$) respectively. We assume that the settings are freely chosen i.e., have no causes which are relevant to the experiment. Any causal model associated with a bi (tri) partite Bell scenario will have the setting variables $A$, $B$ (, $C$) as parentless nodes (capturing the free choice assumption) and $\Lambda$ as a common cause of the outcomes $X$, $Y$ (, $Z$) and itself being parentless. The settings and outcomes are always observed classical nodes, while $\Lambda$ may be observed or unobserved and can thus be described using a general, non-classical theory. 
\end{definition}

The above is a minimal definition of a Bell scenario which does not fix the (information-theoretic) causal structure. In particular, the typical causal structures associated with Bell scenarios (Figures~\ref{fig:Bell_bipart_CS} and~\ref{fig: Bell_tripart2}), satisfy this definition. More generally, the definition also allows for causal structures where the setting of one party (e.g., $B$) is the cause of the outcome of another party (e.g., $X$), as well as cyclic causal structures where two different outcome variables can mutually influence each other. Note that although it would be natural to require that the settings are causes of the respective outcomes, our minimal definition of Bell scenarios does not enforce this, and allows situations where a party's outcome is not caused by their setting.

\begin{definition}[Space-time configuration for standard Bell scenarios]
    \label{def: spacetime_config_bipart}
    A bipartite (or tripartite) Bell scenario is said to take place in a standard space-time configuration if the random variables corresponding to the measurement settings $A$, $B$ (, $C$) and outcomes $X$, $Y$ (,$Z$) are embedded in space-time such that: $\cA\prec \cX$, $\cB\prec \cY$ (and $\cC\prec \cZ$) and the remaining pairs of space-time embedded variables are space-like separated (related by $\not\prec \not \succ$), where the order relations $\prec, \succ$ are defined by the light cone structure of the space-time. Such a space-time configuration is illustrated in Figure~\ref{fig:Bell_bipart_sptime} (and Figure~\ref{fig: Bell_tripart1}).
\end{definition}

Notice that our definition of a Bell scenario does not a priori forbid superluminal causal influences when the Bell scenario takes place in the standard space-time configurations. This will allow us to derive non-trivial statements about relativistic causality principles in Bell scenarios without assuming causal structures (such as those of Figures~\ref{fig:Bell_bipart_CS} and~\ref{fig: Bell_tripart2}) that already enforce these principles once embedded in space-time. Further details on how we define and model information-theoretic causal structures and their space-time embeddings can be found in Section~\ref{sec: framework}. 

The jamming space-time configuration in a tri-partite Bell scenario, is slightly different from the standard configuration as it involves as additional condition, as defined below.

\begin{definition}[Space-time configuration for jamming tripartite Bell scenario]
\label{def: spacetime_config_jamming}
   A tripartite Bell scenario is said to take place in a jamming space-time configuration if the ORVs corresponding to the measurement settings $A$, $B$, $C$ and outcomes $X$, $Y$, $Z$ are embedded in space-time such that:  $\mathcal{A}\prec \mathcal{X}$, $\mathcal{B}\prec \mathcal{Y}$, $\mathcal{C}\prec \mathcal{Z}$,  $\overline{\mathcal{F}}(\mathcal{X})\bigcap\overline{\mathcal{F}}(\mathcal{Z})\subseteq \overline{\mathcal{F}}(\mathcal{B})$, all other pairs of ORVs are space-like separated. Such a space-time configuration is illustrated in Figure~\ref{fig: Bell_tripart1}.
\end{definition}

Note that $X$, $B$ and $Z$ being spacelike separated is not enough to ensure that $\overline{\mathcal{F}}(\mathcal{X})\bigcap\overline{\mathcal{F}}(\mathcal{Z})\subseteq \overline{\mathcal{F}}(\mathcal{B})$, even if the spatial coordinates of $X$, $B$ and $Z$ lie on a straight line except in $(1+1)$ dimensional Minkowski space-time --- see Appendix~\ref{app:ST} for a discussion of the possible space-time locations in $(1+1)$ and $(2+1)$D Minkowski space-time.

\section{Causal modelling analysis of jamming and implications for relativistic causality}
\label{sec: jamming_sig}

In this section, we analyse jamming correlations in a Bell scenario using a causal modelling approach. We then show that in a Bell scenario where $\Lambda$ is an observed node and which takes place in the space-time configuration of Figure~\ref{fig: Bell_tripart1}, jamming correlations will lead to superluminal signalling. 

\subsection{Jamming requires fine-tuning and superluminal causal influence}
\label{sec: jam_finetune_superlumcaus}

Using our causal modelling framework formulated under minimal assumptions, the following observations about correlation and causation in scenarios with jamming follow immediately. Firstly, jamming correlations can only be explained by a fine-tuned causal model; secondly, jamming in the space-time configuration of~\cite{Grunhaus1996, Horodecki2019} necessarily involves superluminal causal influences (whether or not it involves superluminal signalling).

\begin{restatable}{lemma}{JamCause}
\label{lemma: jamming_causation}
In any tripartite Bell scenario (cf.\ Definition~\ref{def: Bell_scenario}) where $B$ is neither a cause of $X$ nor of $Z$, it is impossible to generate jamming correlations in any theory i.e., irrespective of the nature (classical, quantum, post-quantum) of the common cause $\Lambda$. 
\end{restatable}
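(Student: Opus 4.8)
The plan is to proceed by contradiction: assume the scenario admits jamming and derive that $B$ must be a cause of $X$ or of $Z$, contradicting the hypothesis. By Definition~\ref{def: jamming}, jamming means the observed distribution obeys \hyperref[cond: NS3p]{(NS3$'$)} while $P(XZ|ABC)\neq P(XZ|AC)$; that is, the $(X,Z)$-marginal $\sum_y P(xyz|abc)$ genuinely depends on $b$ for some $a,c$. The goal is to recognise this dependence as a higher-order affects relation and then invoke the causal-inference property~\eqref{eq: HO_cause}. (A more pedestrian alternative would be to establish the d-separation $(B\perp^d XZ\mid AC)_{\cG}$ directly from the absence of directed paths $B\longrsquigarrow X$, $B\longrsquigarrow Z$ and apply~\eqref{eq: dsep_prop}, but tracking colliders there is messier, so I favour the affects-relation route.)

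First I would use the free-choice assumption of Definition~\ref{def: Bell_scenario}: since $A$, $B$, $C$ are parentless, deleting their incoming edges leaves $\cG$ unchanged, so $\cG_{\mathrm{do}(ABC)}=\cG_{\mathrm{do}(AC)}=\cG$, and the consistency clause of Definition~\ref{def:post_intervention} collapses interventions on these nodes to ordinary conditioning. Concretely, $P_{\cG_{\mathrm{do}(BAC)}}(XZ|b,a,c)=\sum_y P(xyz|abc)$ while $P_{\cG_{\mathrm{do}(AC)}}(XZ|a,c)=\sum_{b'} P(b')\sum_y P(xyz|ab'c)$, the latter being the $B$-average of the former because $B$ is independent of $A,C$. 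The computational heart of the argument is then the observation that a function $b\mapsto \sum_y P(xyz|abc)$ that is non-constant on the support of $P(B)$ cannot equal its own $B$-average for every $b$; hence the jamming inequality guarantees values $a,b,c$ with $P_{\cG_{\mathrm{do}(BAC)}}(XZ|b,a,c)\neq P_{\cG_{\mathrm{do}(AC)}}(XZ|a,c)$. Taking the disjoint triple $\{B\}$, $\{X,Z\}$, $\{A,C\}$ in the roles of $X,Y,Z$ of Definition~\ref{definition:HOaffects}, this is precisely the statement that $B$ affects $XZ$ given $\mathrm{do}(AC)$.

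To finish, I would apply~\eqref{eq: HO_cause} to this higher-order affects relation: it yields some $e_Y\in\{X,Z\}$ together with a directed path $B\longrsquigarrow\cdots\longrsquigarrow e_Y$ in $\cG$, meaning $B$ is a cause of $X$ or of $Z$, the desired contradiction. I expect the main subtlety, rather than a deep obstacle, to be exactly this translation step: correctly matching the post-intervention graphs to $\cG$ and checking that a genuine $b$-dependence of the $(X,Z)$-marginal survives the comparison against its $B$-average (using that freely chosen settings have full support), so that the affects relation really fires. Crucially, the entire argument lives at the level of the observed variables and the graph-theoretic causal-inference statement~\eqref{eq: HO_cause}, never referencing $\Lambda$ or the theory describing it; this is precisely what makes the conclusion hold uniformly for classical, quantum and post-quantum common causes, as the lemma asserts.
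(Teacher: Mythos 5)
Your proposal is correct and follows essentially the same route as the paper's proof: the paper likewise identifies the jamming inequality $P(XZ|ABC)\neq P(XZ|AC)$ with the higher-order affects relation $B$ affects $XZ$ given do$(AC)$ (using the exogeneity of $A$, $B$, $C$) and then invokes the causal-inference implication of Lemma~\ref{lemma:HOaffects} to obtain a directed path from $B$ to $X$ or to $Z$. The only difference is that you spell out the translation step (interventions on parentless nodes collapsing to conditioning, and the non-constant-function-versus-its-average argument) which the paper leaves implicit.
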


\begin{corollary}
The presence of jamming in a tripartite Bell scenario taking place in a jamming space-time configuration (Definition~\ref{def: spacetime_config_jamming}) implies the existence of superluminal causal influence between at least one pair of space-like separated parties. 
\end{corollary}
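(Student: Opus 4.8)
The plan is to obtain the corollary as the contrapositive of Lemma~\ref{lemma: jamming_causation} read against the geometry of Definition~\ref{def: spacetime_config_jamming}. By Definition~\ref{def: jamming}, the presence of jamming means that the observed distribution $P(XYZ|ABC)$ realises jamming correlations in the given tripartite Bell scenario. Lemma~\ref{lemma: jamming_causation} asserts that such correlations are impossible, in \emph{any} theory for $\Lambda$, whenever $B$ is neither a cause of $X$ nor of $Z$. Taking the contrapositive, jamming forces $B$ to be a cause of $X$ or a cause of $Z$ in the underlying causal structure $\mathcal{G}$, i.e.\ there is a directed path from $B$ to $X$ or from $B$ to $Z$. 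First I would make this step explicit and fix, without loss of generality, the case in which $B$ is a cause of $X$.

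Next I would bring in the space-time data. In the jamming configuration of Definition~\ref{def: spacetime_config_jamming} the only time-like related pairs are $\mathcal{A}\prec\mathcal{X}$, $\mathcal{B}\prec\mathcal{Y}$ and $\mathcal{C}\prec\mathcal{Z}$; every remaining pair of embedded variables, and in particular $\mathcal{B}$ with $\mathcal{X}$ and $\mathcal{B}$ with $\mathcal{Z}$, is space-like separated. Hence Bob's setting $B$ is a cause of a variable embedded at a location space-like separated from it, which is exactly a superluminal causal influence between the two space-like separated parties (Bob and Alice if $B$ causes $X$, Bob and Charlie if $B$ causes $Z$). This already yields the statement if ``causal influence between a pair of parties'' is read as the existence of a directed path between variables belonging to those parties.

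If one instead wants to localise the superluminal influence to a single direct causal arrow, I would argue along the path. A directed path from $B$ can only pass through outcome nodes $X$, $Y$, $Z$: the setting nodes $A$, $C$ are parentless (free choices) and $\Lambda$ is parentless by Definition~\ref{def: Bell_scenario}, so none of them can occur as an intermediate node. Every edge of the path whose endpoints belong to the \emph{same} party joins time-like related nodes, whereas any edge joining nodes of two distinct parties joins space-like separated nodes. Since the path runs from Bob's node $B$ to Alice's node $X$, it must contain at least one inter-party edge, and that edge is a direct superluminal causal arrow between a space-like separated pair of parties. The only point requiring care is this last bookkeeping: one must rule out the path sneaking back into Bob's own future and returning, which is why the argument is phrased over all edges of the path rather than over its endpoints alone. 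However, the endpoint reading above already suffices for the corollary as stated, so this refinement is optional; the genuine content of the result lives entirely in Lemma~\ref{lemma: jamming_causation}, and the corollary is essentially immediate once the lemma and the space-like separations of the jamming configuration are combined.
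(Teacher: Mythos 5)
Your proposal is correct and follows exactly the route the paper intends: the corollary is obtained as an immediate consequence of Lemma~\ref{lemma: jamming_causation} (which forces $B$ to be a cause of $X$ or of $Z$) combined with the fact that in the jamming configuration of Definition~\ref{def: spacetime_config_jamming} the pairs $(\mathcal{B},\mathcal{X})$ and $(\mathcal{B},\mathcal{Z})$ are space-like separated. The paper gives no separate proof beyond this, and your optional refinement localising the influence to a single inter-party edge is sound but not needed for the statement as written.
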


\begin{restatable}{lemma}{JamFinetune}
\label{lemma: jamming_finetune}
Any causal model reproducing the jamming correlations must be fine-tuned, irrespective of the theory describing the common cause $\Lambda$. In other words, there exists no faithful causal model (classical or non-classical) in any tripartite Bell scenario that generates jamming correlations.
\end{restatable}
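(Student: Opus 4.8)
The plan is to exhibit, in any causal model producing jamming correlations, an explicit conditional independence in the observed distribution that is \emph{not} matched by a corresponding d-separation in the graph, which is precisely the definition of a fine-tuned causal model. The entry point is Lemma~\ref{lemma: jamming_causation} read in its contrapositive form: since jamming correlations are present, $B$ must be a cause of at least one of $X$ or $Z$, i.e., the graph $\cG$ of the model contains a directed path $B \longrsquigarrow \cdots \longrsquigarrow X$ or $B \longrsquigarrow \cdots \longrsquigarrow Z$. I would then show that this causal connection is rendered invisible at the level of correlations by the specific marginal constraints built into jamming, which is exactly fine-tuning.

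Concretely, I would split into the two symmetric cases. Suppose first that $B$ is a cause of $X$. On the graph side, I claim $(B \not\perp^d X \mid A)_{\cG}$: the directed path from $B$ to $X$ is a chain containing no colliders, and conditioning on $A$ could only block it if $A$ lay in its interior as a non-collider; but $A$ is parentless (a freely chosen setting, Definition~\ref{def: Bell_scenario}) and so has no incoming edge, hence cannot be an interior node of any directed path, so the path stays active and the d-connection holds. On the distribution side, the relaxed non-signalling conditions \hyperref[cond: NS3p]{(NS3$'$)} include $P(X \mid ABC) = P(X \mid A)$; combined with the mutual independence of the free settings ($P(ABC)=P(A)P(B)P(C)$, from their being parentless) this yields $P(X \mid AB) = P(X \mid A)$, i.e.\ the conditional independence $(X \indep B \mid A)_{P}$. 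The coexistence of $(X \indep B \mid A)_{P}$ with $(B \not\perp^d X \mid A)_{\cG}$ is fine-tuning by definition. The case where $B$ is instead a cause of $Z$ is identical under the relabelling $(X,A)\to(Z,C)$, using $P(Z \mid ABC) = P(Z \mid C)$ from \hyperref[cond: NS3p]{(NS3$'$)} to obtain $(Z \indep B \mid C)_{P}$ against the d-connection $(B \not\perp^d Z \mid C)_{\cG}$.

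I expect the main obstacle to be the graph-theoretic step establishing d-connection in a theory-independent way, since $\cG$ is only partially known and, under Definition~\ref{def: Bell_scenario}, may contain extra edges and even directed cycles rather than being the canonical Bell graph. The subtlety is to argue that the relevant directed path genuinely survives conditioning on the chosen setting for \emph{any} such $\cG$; I would phrase the blocking argument purely in terms of the parentlessness of the conditioning setting, a property robust to additional edges and cyclicity. Once the d-connection is secured, translating the \hyperref[cond: NS3p]{(NS3$'$)} marginals into the stated conditional independences is routine, and the final claim that no faithful (non-fine-tuned) model can reproduce jamming correlations follows immediately.
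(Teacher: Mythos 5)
Your proof is correct, but it reaches the d-connection side of the fine-tuning witness by a different route than the paper. The paper's proof never invokes Lemma~\ref{lemma: jamming_causation} or directed paths at all: it applies the contrapositive of the d-separation property directly to the jamming inequality, so that $P(XZ|ABC)\neq P(XZ|AC)$, i.e.\ $(B\not\indep XZ|AC)_P$, immediately forces $(B\not\perp^d XZ|AC)_{\cG}$, which decomposes into $(B\not\perp^d X|AC)_{\cG}$ or $(B\not\perp^d Z|AC)_{\cG}$; paired with $(B\indep X|AC)_P$ and $(B\indep Z|AC)_P$ from \hyperref[cond: NS3p]{(NS3$'$)}, this is fine-tuning in one step. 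You instead pass through the causal conclusion of Lemma~\ref{lemma: jamming_causation} (a directed path from $B$ to $X$ or to $Z$) and then must argue separately that this path survives conditioning on the relevant setting --- which you do correctly: a directed path has no colliders, so it can only be blocked by a non-collider interior node in the conditioning set, and a parentless setting cannot be an interior node of any directed path, an argument that indeed remains valid for the cyclic and edge-augmented graphs permitted by Definition~\ref{def: Bell_scenario}. What your route buys is a more explicit picture of the mechanism (a concrete hidden directed influence from $B$) and a fine-tuning witness with a smaller conditioning set, $(B\indep X|A)_P$ versus $(B\not\perp^d X|A)_{\cG}$, at the cost of the extra path-blocking lemma and the marginalisation step $P(X|AB)=P(X|A)$ (which you justify correctly via $P(ABC)=P(A)P(B)P(C)$); the paper's route is more economical, since a single application of the contrapositive d-separation property does all the graph-theoretic work.
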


Proofs of these results can be found in Appendix~\ref{appendix: proof_jamming}.

\begin{figure}[t!]
    \centering
 \includegraphics[]{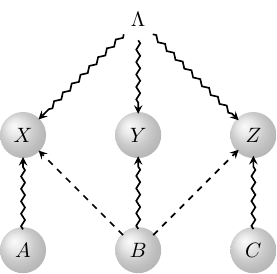}
    \caption{{\bf Causal structure of the tripartite Bell scenario with jamming} As shown in Lemma~\ref{lemma: jamming_causation}, in order for a causal model over the settings and outcomes of a tripartite Bell scenario to admit jamming correlations, the original tripartite Bell causal structure of Figure~\ref{fig: Bell_tripart2} must be modified to allow for causal influence from $B$ to $X$ and/or $Z$. Further, as shown in Lemma~\ref{lemma: jamming_finetune}, assuming that the settings $A$, $B$, $C$ are parentless nodes both these causal influences must be fine-tuned, hence represented by dashed arrows.}
    \label{fig: tripart_jamming}
\end{figure}

\subsection{Jamming can lead to superluminal signalling}
\label{sec: jammingproof}

Here we show that jamming can indeed lead to superluminal signalling i.e., that \hyperref[cond: NS3p]{(NS3$'$)} being satisfied in a tripartite Bell scenario is not sufficient for ruling out superluminal signals (contrary to the claims reviewed in Section~\ref{sec:jam_review}).  We first illustrate the possibility of signalling by constructing a causal model for the simplest jamming scenario, where Alice and Charlie have no settings and Bob has no outcome. Here the only relevant variables are Alice and Charlie's outcomes $X$ and $Z$, and Bob's setting $B$. We show this simplest jamming scenario can already lead to superluminal signalling in the space-time embedding considered in~\cite{Grunhaus1996, Horodecki2019} i.e., that of Definition~\ref{def: spacetime_config_jamming}. This protocol captures the main intuition behind our general Theorem~\ref{theorem: jamming} that will follow.

\bigskip
{\bf A causal model for jamming.}
Consider the causal structure shown in Figure~\ref{fig: jammingproof} where $\Lambda$ and $B$ are direct causes of $X$ and $\Lambda$ is a direct cause of $Z$. 
Suppose that $\Lambda$ is observed (and hence a classical RV) and all four variables, $\Lambda$, $X$, $B$ an $Z$ are binary and uniformly distributed. Then a causal model over this causal structure is given by specifying functions from the parents of $X$ ($B$ and $\Lambda$) to $X$ and from the parents of $Z$ (just $\Lambda$) to $Z$. We take $X=\Lambda\oplus B$ and $Z=\Lambda$. This immediately gives the correlation $B=X\oplus Z$ between Bob's setting $B$ and Alice and Charlie's outcomes $X$ and $Z$. Moreover, we can verify that these correlations do satisfy \hyperref[cond: NS3p]{(NS3$'$)} i.e., $P(X|B)=P(X)$ and $P(Z|B)=P(Z)$ and also admit jamming since $P(XZ|B)\neq P(XZ)$. Since $B$ is parentless, this is equivalent to: $B$ does not affect $X$ (hence the causal arrow from $B$ to $X$ is dashed in Figure~\ref{fig: jammingproof}), $B$ does not affect $Z$ but $B$ affects $XZ$. These affects relations are compatible with the space-time configuration of the figure.

However, since $\Lambda$ is observed, we additionally have the affects relation $B\Lambda$ affects $X$ and $\Lambda$ affects $Z$ (hence the solid causal arrow), along with $\Lambda$ does not affect $X$ (hence the dashed causal arrow). The affects relation $B\Lambda$ affects $X$ is irreducible in this model (since $B$ affects $X$ given do$(\Lambda)$ and $\Lambda$ affects $X$ given do$(B)$) and would lead to superluminal signalling unless $X$ is embedded in future of $B$ and of $\Lambda$. However, in the given space-time configuration (which is the one proposed in~\cite{Grunhaus1996, Horodecki2019}), the ORVs $\cX$ and $\cB$ are space-like separated and therefore the aforementioned affects relation leads to superluminal signalling in this space-time configuration, irrespective of the space-time location of $\Lambda$. The following theorem (proven in Appendix~\ref{appendix: proof_jamming}) generalises this idea and establishes the possibility of superluminal signalling using an observed $\Lambda$ in any tripartite Bell scenario with jamming. 

\begin{figure}[t!]
 \centering
  \subfloat[\label{fig:jammingsptime}]{ \includegraphics[]{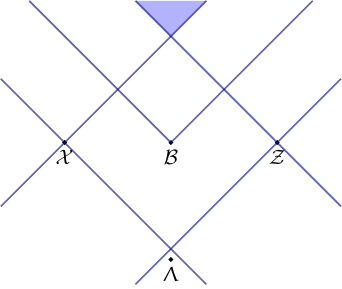}} \qquad\subfloat[\label{fig: jammingobs}]{ \includegraphics[]{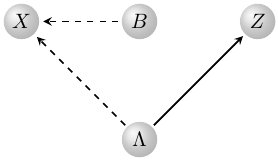}}\qquad\subfloat[\label{fig: jammingtable}]{ \includegraphics[]{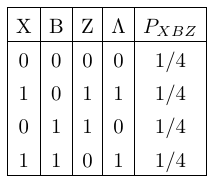}}
    \caption{\textbf{A simple example of how jamming as described in~\cite{Grunhaus1996, Horodecki2019} can lead to superluminal signalling:} \textbf{(a)} Space-time embedding of the variables $X$, $B$, $Z$ and $\Lambda$ suggested in~\cite{Grunhaus1996, Horodecki2019}, where the joint future of the space-time variables $\cX$ and $\cZ$ (blue region) is contained in the future of $\cB$. Here, $B$ corresponds to the input of a party, Bob while $X$ and $Z$ are the outputs of Alice and Charlie respectively. In this configuration,~\cite{Grunhaus1996, Horodecki2019} claim that there will be no superluminal signalling whenever the correlations $P(XZ|B)$ satisfy \hyperref[cond: NS3p]{(NS3$'$)} (where $A$, $C$ and $Y$ are taken to be trivial). \textbf{(b)} Causal structure in which the classical causal model described in the text satisfies \hyperref[cond: NS3p]{(NS3$'$)} and admits jamming, but leads to superluminal signalling in the space-time configuration if $\Lambda$ is observed. This shows that the claim does not hold in general, and it is necessary to assume that $\Lambda$ must be unobserved and fundamentally inaccessible in any theory that allows for jamming in this space-time configuration. \textbf{(c)} The correlations obtained from the causal model, which satisfy \hyperref[cond: NS3p]{(NS3$'$)}.}
    \label{fig: jammingproof}
\end{figure}

\begin{restatable}{theorem}{JammingLambda}
\label{theorem: jamming}
Consider a tripartite Bell scenario (Definition~\ref{def: Bell_scenario}) where the outcomes $X$, $Y$ and $Z$ do not causally influence any systems relevant to the experiment and thus are modelled as childless nodes in the causal structure. Suppose that the scenario admits jamming (Definition~\ref{def: jamming}), and its causal model is embedded in Minkowski space-time as per the jamming space-time configuration (Definition~\ref{def: spacetime_config_jamming}). Then, irrespective of the space-time embedding of $\Lambda$, the causal model is incompatible with the space-time and leads to superluminal signalling whenever $\Lambda$ is an observed system.
\end{restatable}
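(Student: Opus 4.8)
The plan is to exhibit a single irreducible affects relation, created by the observability of $\Lambda$, whose compatibility requirement (Definition~\ref{definition: compatposet}) is violated by the jamming space-time configuration. Since by hypothesis the scenario admits jamming, Lemma~\ref{lemma: jamming_causation} guarantees that $B$ is a cause of $X$ or of $Z$; as the jamming configuration (Definition~\ref{def: spacetime_config_jamming}) is symmetric under exchanging $(\mathcal{X},\mathcal{A})\leftrightarrow(\mathcal{Z},\mathcal{C})$, I may assume without loss of generality that $B$ is a cause of $X$. The target relation is ``$B\Lambda$ affects $X$'' (conditioned on interventions on the settings), and the strategy mirrors the worked example preceding the theorem, now carried out for arbitrary settings and an arbitrary common cause.

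First I would extract the two ingredients needed for irreducibility from the jamming correlations. Because the outcomes are childless and $\Lambda$ together with the settings screen off $X$ from $Z$, the d-separation property gives $P(XZ|ABC)=\sum_\lambda P(X|\lambda ABC)\,P(Z|\lambda ABC)\,P(\lambda)$, where I used that $\Lambda$ is parentless and hence independent of the freely chosen settings. Jamming (Definition~\ref{def: jamming}) makes this sum depend on $B$; since any variable that is not a cause of a given outcome leaves its conditional $B$-independent by d-separation, and $B$ is (by the reduction above) a cause of $X$, the $B$-dependence sits in $P(X|\lambda ABC)$, giving a fixed assignment $a^*,c^*$ and values $b\neq b'$ with $P(X|\lambda^* a^* b c^*)\neq P(X|\lambda^* a^* b' c^*)$. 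The relaxed marginal $P(X|ABC)=P(X|A)$ of (NS3$'$) makes the $\lambda$-average of this conditional $B$-independent, so the dependence on $b$ at $\lambda^*$ cannot survive averaging; comparing the two sides then forces $P(X|\lambda a^* b c^*)$ to be non-constant in $\lambda$ as well. As $A,B,C,\Lambda$ are all parentless, the post-intervention rule of Definition~\ref{def:post_intervention} identifies these interventional conditionals with the observational ones, so the two dependencies translate (via Definition~\ref{definition:HOaffects}) into $B$ affects $X$ given $\mathrm{do}(\Lambda A C)$ and $\Lambda$ affects $X$ given $\mathrm{do}(B A C)$.

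These two statements say that neither $B$ nor $\Lambda$ is redundant in the signalling set, which is exactly the criterion used in the worked example for ``$B\Lambda$ affects $X$ given $\mathrm{do}(AC)$'' to be irreducible; the causal consequence of irreducibility, Equation~\eqref{eq: HOirred_cause}, is consistent here since both $B$ and $\Lambda$ are indeed causes of $X$. Applying Definition~\ref{definition: compatposet} to this irreducible relation yields $\overline{\mathcal{F}}(\mathcal{X})\cap\overline{\mathcal{F}}(\mathcal{A})\cap\overline{\mathcal{F}}(\mathcal{C})\subseteq\overline{\mathcal{F}}(\mathcal{B})\cap\overline{\mathcal{F}}(\Lambda)$. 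Because $\mathcal{A}\prec\mathcal{X}$ we have $\overline{\mathcal{F}}(\mathcal{X})\subseteq\overline{\mathcal{F}}(\mathcal{A})$, so $\mathcal{A}$ drops out; if $C$ is not a cause of $X$ (whence $X$ is d-separated from $C$ given its parents and the conditional does not depend on $c$), then $C$ never enters the do-set and the requirement collapses to $\overline{\mathcal{F}}(\mathcal{X})\subseteq\overline{\mathcal{F}}(\mathcal{B})$, i.e.\ $O(\mathcal{X})\succeq O(\mathcal{B})$, contradicting the space-like separation $\mathcal{X}\nprec\nsucc\mathcal{B}$ demanded by the configuration. The location of $\Lambda$ never entered, giving the ``irrespective of the embedding of $\Lambda$'' clause, and incompatibility of an irreducible affects relation is precisely the statement that the corresponding signal escapes the space-time future, i.e.\ that superluminal signalling occurs.

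The main obstacle is the bookkeeping of the settings in the do-set, since extra futures on the left-hand side of the compatibility inclusion can weaken it. The clean contradiction above needs $C$ (and symmetrically $A$ in the $Z$-branch) not to be a genuine cause of the outcome being signalled to; I would discharge the remaining case by observing that a setting such as $C$ acting as a cause of the space-like separated outcome $X$ is itself a superluminal causal influence, and would produce its own incompatible lower-order affects relation, so superluminal signalling arises in that branch as well. A secondary technical point is to confirm that the reduction criterion I invoke for irreducibility matches the formal definition in Appendix~\ref{appendix: framework} rather than only the sufficient condition illustrated in the example, and to check that no alternative reduction of ``$B\Lambda$ affects $X$'' into a compatible relation exists. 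The geometric facts about which triples $(\mathcal{X},\mathcal{Z},\mathcal{B})$ realise $\overline{\mathcal{F}}(\mathcal{X})\cap\overline{\mathcal{F}}(\mathcal{Z})\subseteq\overline{\mathcal{F}}(\mathcal{B})$, collected in Appendix~\ref{app:ST}, would be used to control these settings-dependent futures in full generality.
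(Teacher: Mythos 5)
Your proposal takes a genuinely different route from the paper's. The paper argues by contraposition: assuming the model is compatible with the jamming configuration, it writes the Markov factorisation $P(XYZ\Lambda|ABC)=P(\Lambda)P(X|ABC\Lambda)P(Y|ABC\Lambda)P(Z|ABC\Lambda)$ (valid since the causal structure is acyclic with all nodes observed), observes that no superluminal signalling forces the two screening-off conditions $P(X|ABC\Lambda)=P(X|A\Lambda)$ and $P(Z|ABC\Lambda)=P(Z|C\Lambda)$, and marginalises to get $P(XZ|ABC)=P(XZ|AC)$, contradicting jamming. Your direct route --- exhibiting a single incompatible irreducible affects relation of the form ``$B\Lambda$ affects $X$ given do$(\cdot)$'' --- is the generalisation of the worked example preceding the theorem rather than of the theorem's own proof. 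Your two non-constancy claims (in $b$ at fixed $\lambda^*$, and in $\lambda$ from the \hyperref[cond: NS3p]{(NS3$'$)} marginal) are sound and do establish irreducibility, but note one inverted inference: the fact that $\sum_\lambda P(\lambda)P(X|\lambda abc)P(Z|\lambda abc)$ depends on $b$ tells you at least one factor depends on $b$ (and you may take it to be the $X$-factor without loss of generality); you cannot deduce this from ``$B$ is a cause of $X$'' via Lemma~\ref{lemma: jamming_causation}, since in a fine-tuned model a causal arrow need not produce any probabilistic dependence. Lemma~\ref{lemma: jamming_causation} is in fact not needed here at all.

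The genuine gap is the case you flag and then close incorrectly: when $C$ must remain in the do-set. If $P(X|\lambda abc)$ depends on $c$ and the $(b,\lambda)$-dependence does not survive averaging over $c$, you are left with ``$B\Lambda$ affects $X$ given do$(AC)$'', whose compatibility condition under Definition~\ref{definition: compatposet} is $\overline{\mathcal{F}}(\cX)\cap\overline{\mathcal{F}}(\cC)\subseteq\overline{\mathcal{F}}(\cB)\cap\overline{\mathcal{F}}(\Lambda)$ rather than $\overline{\mathcal{F}}(\cX)\subseteq\overline{\mathcal{F}}(\cB)$. This is not automatically violated: in a legitimate jamming configuration (Definition~\ref{def: spacetime_config_jamming}) one may place $\cC$ just to the timelike past of $\cZ$, so that $\overline{\mathcal{F}}(\cX)\cap\overline{\mathcal{F}}(\cC)$ is barely larger than $\overline{\mathcal{F}}(\cX)\cap\overline{\mathcal{F}}(\cZ)$ and still lies inside $\overline{\mathcal{F}}(\cB)$; the inclusion can then hold. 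Your proposed discharge --- that $C$ being a cause of the spacelike-separated $X$ ``would produce its own incompatible lower-order affects relation'' --- is precisely the inference the framework forbids: a fine-tuned influence $C\longrsquigarrow X$ yields no affects relation and hence no compatibility violation (this is the content of the solid/dashed distinction and of Lemma~\ref{lemma: jamming_finetune}, which says jamming models are necessarily fine-tuned). So that branch is not closed. The paper sidesteps this entirely: arguing contrapositively, the negation of each screening-off condition is an affects relation with $C$ (respectively $A$) on the \emph{sender} side, so its future never appears on the left of the compatibility inclusion, and both the $X$- and $Z$-channels are fed into the factorisation simultaneously so that no single witness relation has to carry the whole contradiction.
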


{\bf Non-classical causal models exhibiting jamming.} The causal model for jamming described above is entirely classical. Quantum and post-quantum extensions of the same protocol which generate jamming correlations satisfying \hyperref[cond: NS3p]{(NS3$'$)} are also possible. For example, consider Alice and Charlie who share a bipartite state, measure it using the inputs $A$ and $C$ and obtain the outcomes $X$ and $Z$, and Bob making an input $B$ that can causally influence at least one of $X$ or $Z$. Suppose, depending on $B$, we would like Alice and Charlie to share either correlations arising from measurements on the Bell state $\ket{\psi_0}=\frac{1}{\sqrt{2}}(\ket{00}+\ket{11})$ or those arising from the same measurements on a different Bell state,  $\ket{\psi_1}=\frac{1}{\sqrt{2}}(\ket{01}+\ket{10})$. One causal structure that could be used for this is an extension of the one in Figure~\ref{fig: jammingobs} (where $\Lambda\longrsquigarrow X$, $B\longrsquigarrow X$ and $\Lambda\longrsquigarrow Z$) with additional causal influences $A\longrsquigarrow X$ and $C\longrsquigarrow Z$ and where $\Lambda$ now represents an unobserved, quantum node. In the causal model, $\Lambda$ could correspond to a bipartite quantum system in the state $\frac{1}{\sqrt{2}}(\ket{00}+\ket{11})$, $X$ is generated from the first subsystem of $\Lambda$, $A$ and $B$ by first applying a controlled {\sc not} on the subsystem of $\Lambda$ with $B$ as control, and then measuring the resulting state of the subsystem in the basis choice given by $A$, while $Z$ is obtained from the second subsystem of $\Lambda$ by directly measuring in the basis choice given by $C$. Then the parties would measure $\ket{\psi_{0}}$ whenever $B=0$ and $\ket{\psi_{1}}$ whenever $B=1$ as required and generate the required correlations. Similarly, $B$ can be used to decide whether Alice and Charlie share one relabelling of a PR box or another and we can analogously construct a post-quantum causal model that achieves this.\footnote{Similarly, one can devise protocols to realise other types of jamming correlations by allowing the underlying mechanisms (like the choice of local operation prior to measurement) to depend on $B$ in a way that this does not reflect in the local statistics. Further one can also consider jamming correlations where Bob's output $Y$ is non-trivially involved in the process, which we do not analyse here as it is not directly relevant to the results of this paper.}

\bigskip

{\bf The causal model as a physical protocol vs dynamics of a post-quantum theory.} The causal models described above are purely information-theoretic, and have been classified as classical, quantum or post-quantum depending on the nature of the node $\Lambda$. On the other hand, whether such a causal model is physically implementable in space-time, according to classical, quantum or some other theory, depends not only on the causal model but also on the space-time embedding. For instance, if every causal influence in the model flows from past to future in the space-time (in particular, this would mean that $\cB$ is embedded in the past light-cone of $\cX$), then the model would not lead to any superluminal causal influences relative this embedding and describes a physical protocol within classical, quantum or a post-quantum theory (according to the nature of the model). Notice that jamming correlations can be realised whether the model is classical, quantum or post-quantum. 

On the other hand, consider the classical causal model for jamming relating to the causal structure in Figure~\ref{fig: jammingobs} embedded in space-time according to to the jamming space-time embedding illustrated in Figure~\ref{fig:jammingsptime}. In order to avoid superluminal signalling in this case, it is necessary that $\Lambda$ is unobserved or inaccessible to physical interventions (as implied by Theorem~\ref{theorem: jamming}). Withing our framework, the proposed causal model is compatible with the space-time embedding of Figure~\ref{fig: jammingproof} if and only if $\Lambda$ is unobserved: the only if part is implied by Theorem~\ref{theorem: jamming} and the if part is immediate from applying the compatibility condition to the affects relations of the example (also shown explicitly in~\cite{VilasiniColbeckPRA, VilasiniColbeckPRL}). However, the causal influence from $B$ to $X$ is then superluminal and goes outside the future lightcone, which is not something we can realise within standard quantum theory. More explicitly, the correlations between Bob's setting $B$ and Alice and Charlie's outcomes $X$ and $Z$ arising in the above mentioned protocols cannot be realised in any tripartite quantum mechanical experiment where the three parties are space-like separated\footnote{This is due to the non-signalling property of quantum theory which will ensure that measurements on different subsystems in any tripartite Bell scenario always lead to correlations that satisfy the stronger non-signalling condition \hyperref[cond: NS3]{(NS3)} which forbids jamming.}. Therefore a theory that generates jamming correlations in a jamming space-time configuration is truly post-quantum even though it may admit a classical or quantum causal model.

\bigskip

{\bf Comparison to non-local hidden variable theories.} Theorem~\ref{theorem: jamming} demonstrates that post-quantum theories admitting jamming correlations in the space-time configuration of~\cite{Grunhaus1996, Horodecki2019} are similar to non-local hidden variable explanations of Bell non-local correlations in that they involve superluminal signalling at the level of certain underlying variables without allowing superluminal signalling at the observed level, if these variables remain inaccessible. Explicitly, in non-local HV models for bipartite Bell correlations, we have the no-signalling conditions $P(X|AB)=P(X|A)$ and $P(Y|AB)=P(Y|B)$ by construction even though one of $P(X|AB\Lambda)=P(X|A\Lambda)$ and $P(Y|AB\Lambda)=P(Y|B\Lambda)$ necessarily fails. If $\Lambda$ were entirely accessible, then Alice and Bob could exploit this to signal to each other in a space-like separated Bell experiment. Similarly, jamming correlations satisfy $P(X|ABC)=P(X|A)$ and $P(Z|ABC)=P(Z|C)$ along with $P(XZ|ABC)\neq P(XZ|AC)$. However we have shown that we must necessarily have at least one of $P(X|ABC\Lambda)\neq P(X|A\Lambda)$ or $P(Z|ABC\Lambda)\neq P(Z|C\Lambda)$ whenever $\Lambda$ is classical in order to have jamming. This implies that $\Lambda$ must be unobserved even when it is classical, in order to avoid superluminal signalling. Another point of similarity is that both non-local HV explanations of Bell inequality violating correlations as well as causal explanations of jamming correlations must necessarily be fine-tuned causal explanations. This is established for non-local HV explanations in~\cite{Wood2015} and for the jamming in case in our Lemma~\ref{lemma: jamming_finetune}. A crucial point of difference is however that in the former case, there exists a faithful explanation of the correlations through a quantum (or post-quantum) causal model while our results establish that any explanation of jamming correlations, through classical quantum or post-quantum causal model must necessarily be fine-tuned.

\section{Relativistic causality in Bell scenarios: accounting for arbitrary interventions}

In this section, we apply our framework for causal inference and compatibility with space-time to the bi- and tripartite Bell scenarios. The usual treatment of these scenarios, including discussions on relativistic causality violations, focuses on correlation constraints such as \hyperref[cond: NS2]{(NS2)}, \hyperref[cond: NS3]{(NS3)} and \hyperref[cond: NS3p]{(NS3$'$)}. More generally however, we would like to rule out relativistic causality violations even when arbitrary interventions are considered. Our formalism allows us to systematically treat interventions and derive stronger results for relativistic causality that can rule out superluminal signalling under all possible interventions on observed variables. Proofs of all results can be found in Appendix~\ref{appendix: proofs}.

\subsection{Non-signalling constraints and higher-order affects relations}
\label{sec: NS_affects}
We first start by casting the non-signalling constraints \hyperref[cond: NS2]{(NS2)} and \hyperref[cond: NS3p]{(NS3$'$)} (which are expressed in terms of correlations) in terms of higher-order affects relations which we have introduced in our framework for capturing the general interventions on observed nodes. Proofs of these theorems can be found in Appendix~\ref{appendix: proof_ns_affects}.

\begin{restatable}{theorem}{NSAffectsBipart}
\label{theorem: ns2_affects}
    In any bipartite Bell scenario (Definition~\ref{def: Bell_scenario}), the following non-affects relations are equivalent to the bipartite no-signalling conditions \hyperref[cond: NS2]{(NS2)}: 
    \begin{align}
    \label{eq: ns2_affects}
        \begin{split}
            B &\text{ does not affect } X \text{ given do}(A),\\
            A &\text{ does not affect } Y \text{ given do}(B).
        \end{split}
    \end{align}

\end{restatable}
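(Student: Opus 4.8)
The plan is to show that each of the two non-affects relations collapses, term by term, onto one of the two lines of \hyperref[cond: NS2]{(NS2)}, by exploiting the defining feature of a Bell scenario from Definition~\ref{def: Bell_scenario}: the settings $A$ and $B$ are \emph{parentless} nodes. Because of this, interventions on $A$ and on $B$ coincide with ordinary conditioning, so the interventional distributions appearing in the affects relations can be rewritten as conditionals of the observed distribution $P$. Since every manipulation below is an equivalence, proving that the two non-affects relations hold if and only if \hyperref[cond: NS2]{(NS2)} holds will establish both directions simultaneously.

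First I would spell out the non-affects relations using Definition~\ref{definition:HOaffects}. Taking the generic roles there to be $(B,X,A)$, the statement ``$B$ does not affect $X$ given do$(A)$'' is the negation of the existential clause, i.e.\ for all values $a$ of $A$ and $b$ of $B$,
\[
P_{\cG_{\mathrm{do}(BA)}}(X\mid B=b,A=a)=P_{\cG_{\mathrm{do}(A)}}(X\mid A=a).
\]
Next I would reduce each side to a conditional of $P$. Since $A$ is parentless in $\cG$ we have $\cG_{\mathrm{do}(A)}=\cG$, and the consistency clause of Definition~\ref{def:post_intervention} (applied with empty prior-intervention set and new intervention on $\{A\}$) gives $P_{\cG_{\mathrm{do}(A)}}(XYB\mid A)=P_{\cG}(XYB\mid A)$; marginalising over $Y$ and $B$ shows the right-hand side equals the observed marginal $P(X\mid A=a)$, which is manifestly independent of $b$. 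For the left-hand side, $B$ is parentless in $\cG_{\mathrm{do}(A)}$, so applying the consistency clause again (now with prior intervention $\{A\}$ and new intervention $\{B\}$) gives $P_{\cG_{\mathrm{do}(BA)}}(XY\mid B=b,A=a)=P(XY\mid A=a,B=b)$, and marginalising over $Y$ leaves $P(X\mid A=a,B=b)$.

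Substituting these reductions, the non-affects relation becomes: for all $a,b$, the conditional $P(X\mid A=a,B=b)$ equals a quantity not depending on $b$. This is exactly the requirement that $P(X\mid A=a,B=b)$ be independent of $b$, i.e.\ $P(x\mid a,b)=P(x\mid a,b')$ for all $b,b'$, which is the first line of Equation~\eqref{eq: bipartiteNS}. Running the identical argument with the two parties exchanged (generic roles $(A,Y,B)$) turns ``$A$ does not affect $Y$ given do$(B)$'' into the $b$-independence encoded in the second line of \eqref{eq: bipartiteNS}. Hence the pair of non-affects relations is equivalent to \hyperref[cond: NS2]{(NS2)}.

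I expect the main obstacle to lie not in the final bookkeeping but in correctly justifying that the interventional distributions reduce to conditionals. The load-bearing step is verifying, at each stage, the hypothesis of the consistency clause of Definition~\ref{def:post_intervention}, namely that the freshly intervened setting is parentless in the relevant post-intervention graph; this is precisely where the free-choice assumption of Definition~\ref{def: Bell_scenario} is essential. A secondary subtlety worth stating carefully is that the reduced right-hand side is the $B$-averaged marginal rather than a single fixed-$b$ slice, so it is the universal quantifier over $b$ together with the $b$-independence of that right-hand side that forces the full \hyperref[cond: NS2]{(NS2)} equality, rather than a naive term-by-term matching.
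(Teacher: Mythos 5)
Your proposal is correct and follows essentially the same route as the paper's proof: both hinge on the fact that the settings are parentless, so $\cG_{\mathrm{do}(S)}\equiv\cG$ for $S\subseteq\{A,B\}$ and the do-conditionals reduce to ordinary conditionals of $P$, turning each non-affects relation directly into the corresponding line of \hyperref[cond: NS2]{(NS2)}. The paper states this in two sentences; your version merely makes explicit the appeal to the consistency clause of Definition~\ref{def:post_intervention} and the quantifier bookkeeping, which is a faithful elaboration rather than a different argument.
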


\begin{restatable}{theorem}{NSAffectsTripart}
\label{theorem: ns3p_affects}
    In any tripartite Bell scenario (Definition~\ref{def: Bell_scenario}) the following non-affects relations are equivalent to the relaxed tripartite no-signalling conditions \hyperref[cond: NS3p]{(NS3$'$)}: 

\begin{align}
 \label{eq: ns3p_affects}
     \begin{split}
         &C \text{ does not affect } XY \text{ given do}(AB),\\
         &A \text{ does not affect } YZ \text{ given do}(BC),\\   
         &BC \text{ does not affect } X \text{ given do}(A),\\
          & AB \text{ does not affect } Z \text{ given do}(C).\\
     \end{split}
 \end{align}

\end{restatable}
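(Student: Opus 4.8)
The plan is to exploit the defining feature of a Bell scenario from Definition~\ref{def: Bell_scenario}, namely that the setting variables $A$, $B$ and $C$ are parentless nodes. In each of the four candidate non-affects relations, the full intervention set (the union of the ``signalling'' set with the do-set) consists entirely of setting variables: for the four relations these sets are $\{C\}\cup\{A,B\}$, $\{A\}\cup\{B,C\}$, $\{B,C\}\cup\{A\}$ and $\{A,B\}\cup\{C\}$, respectively. I would first observe that since these intervention sets contain only parentless nodes, deleting their incoming edges leaves the graph unchanged, so $\cG_{\mathrm{do}(\cdot)}=\cG$ for every do-operation appearing. Invoking the post-intervention model property of Definition~\ref{def:post_intervention}---that intervening on a set of parentless nodes coincides with conditioning on them---each interventional distribution then reduces to an ordinary conditional of the observed distribution $P(XYZ|ABC)$.

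Next I would unfold each non-affects relation using Definition~\ref{definition:HOaffects} and apply this reduction. For example, ``$C$ does not affect $XY$ given do$(AB)$'' asserts that for all values $c,a,b$ one has $P_{\cG_{\mathrm{do}(ABC)}}(XY|C,A,B)=P_{\cG_{\mathrm{do}(AB)}}(XY|A,B)$, which by the reduction becomes $P(XY|ABC)=P(XY|AB)$---precisely the first line of \hyperref[cond: NS3p]{(NS3$'$)}. Carrying this out for the remaining three relations yields $P(YZ|ABC)=P(YZ|BC)$, $P(X|ABC)=P(X|A)$ and $P(Z|ABC)=P(Z|C)$, which are exactly the other three lines of \hyperref[cond: NS3p]{(NS3$'$)}. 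Because each step is an equivalence---the negation of ``there exist values making the two distributions differ'' is ``for all values the two distributions agree''---and the correspondence between relations and conditions is one-to-one, the conjunction of the four non-affects relations is equivalent to \hyperref[cond: NS3p]{(NS3$'$)}.

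The only real subtlety is verifying that the translation is legitimate, so I would be careful on two points. First, I must check that in each relation the disjointness hypotheses of Definition~\ref{definition:HOaffects} hold and that the entire intervention set is parentless, so the do-to-conditioning reduction applies cleanly; this is immediate here since outcomes are never intervened upon and all settings are parentless. Second, I would make explicit that conditioning on the extra setting on the left-hand side (say, on $C$ in addition to $A,B$) is exactly the statement that the corresponding marginal of $P(XYZ|ABC)$ does not depend on that setting, which matches the ``$\forall c,c'$'' form in which \hyperref[cond: NS3p]{(NS3$'$)} is written in Equation~\eqref{eq: rel_tripartiteNS}. I do not anticipate any genuine obstacle beyond this bookkeeping: the statement is in essence a dictionary between the correlation language of \hyperref[cond: NS3p]{(NS3$'$)} and the interventional language of higher-order affects relations, made available precisely by the free-choice (parentless) status of the settings.
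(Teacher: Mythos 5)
Your proposal is correct and follows essentially the same route as the paper: both use the parentless status of the settings to identify $\cG_{\mathrm{do}(S)}$ with $\cG$ for any $S\subseteq\{A,B,C\}$ and to replace do-conditionals by ordinary conditionals, after which each non-affects relation unfolds term-by-term into the corresponding line of \hyperref[cond: NS3p]{(NS3$'$)}. Your extra care about disjointness and the $\forall c,c'$ bookkeeping is implicit in the paper's (much terser) argument but introduces nothing different in substance.
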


Correlation does not imply causation or signalling, however, correlation constraints such as \hyperref[cond: NS2]{(NS2)}, \hyperref[cond: NS3]{(NS3)} and \hyperref[cond: NS3p]{(NS3$'$)} capture the idea of no-signalling between parties through interventions on freely chosen settings (Definition~\ref{def: Bell_scenario}). Higher-order affects relations that we have introduced in~\cite{VilasiniColbeckPRA} capture signalling/no-signalling constraints in more general scenarios beyond Bell-type situations, where we have no freely chosen or parentless variables and where signalling cannot be always captured by the correlations alone. The above theorems show that in the special case of Bell scenarios, some of the higher-order affects relations are equivalent to the correlation constraints \hyperref[cond: NS2]{(NS2)} and \hyperref[cond: NS3p]{(NS3$'$)}. In essence, this is because in any causal model (according to Definition~\ref{def: causalmodel}), correlation between a parentless variable $A$ and another variable $X$ implies causal influence from $A$ to $X$ as well as an affects relation $A$ affects $X$.

\subsection{Conditions for preserving relativistic causality}
\label{sec: Bell_relcaus}

Theorem~\ref{theorem: jamming} highlights that jamming correlations arising in tri-partite Bell scenarios in the relevant space-time configuration, can lead to superluminal signalling if the common cause is accessible to agents. This raises the question of whether additionally taking $\Lambda$ to be unobserved is sufficient for preserving relativistic causality. We analyse this question within the causality framework reviewed in Section~\ref{sec: framework} and find that in general, superluminal signalling is possible in the standard space-time configurations of a tripartite (/bipartite) Bell scenario even when we impose \hyperref[cond: NS3p]{(NS3$'$)} (/\hyperref[cond: NS2]{(NS2)}) along with the requirement that $\Lambda$ is unobserved. Intuitively, this is because \hyperref[cond: NS2]{(NS2)} and \hyperref[cond: NS3p]{(NS3$'$)} are constraints only on the correlations but signalling and causation are concepts that are not fully determined by correlations alone, but also require us to consider interventions. Some of the previous analyses of jamming~\cite{Horodecki2019} (cf.\ Section~\ref{sec:jam_review}) focus on the relativistic causality principle of ``no causal loops'', here we consider both this and the related relativistic principle ``no superluminal signalling''.  These principles do not imply one another in general, as demonstrated by our recent work~\cite{VilasiniColbeckPRL}, which highlights the importance of disentangling these different causality principles.

\subsubsection{Conditions for ruling out superluminal signalling}

The compatibility of a causal model with a space-time embedding (cf.\ Definition~\ref{definition: compatposet}) formalises the requirement of no superluminal signalling while taking into account general interventions. Here, we apply this to characterise conditions for no superluminal signalling in the bipartite and tripartite Bell scenarios, and the relevant configurations in Minkowski space-time. Proofs of all results of this subsection can be found in Appendix~\ref{appendix: proof_superlum}.

\bigskip
{\bf Accounting for interventions on outcomes.} In Bell scenarios where measurement settings are freely chosen, the usual non-signalling constraints on correlations capture the inability of different parties to communicate to each other through interventions on the setting variables. More generally, one can consider situations where it is impossible to signal between parties through interventions on settings, but still possible to signal once we also consider interventions on outcomes, which is not captured by the usual non-signalling constraints. 

By applying the causal inference formalism, we account for interventions on all observed variables (settings and outcomes) of the scenario. If the causal structure is one where the settings are free (parentless) then interventions on settings need not be explicitly considered (see Definition~\ref{def:post_intervention}). Typically, interventions on outcomes are not considered because the causal structure is taken to be of the form of Figures~\ref{fig:Bell_bipart_sptime} or~\ref{fig: Bell_tripart2} where the outcomes are assumed to be childless, i.e., to not be causes of any other variables relevant to the experiment. However, these causal structures are acyclic, and when embedded in Minkowski space-time in the usual way, all the causal arrows in these structures flow from past to future in the space-time and do not allow for superluminal signals. Thus relativistic causality violations such as superluminal signalling and causal loops are ruled out by assumption. Here, we have defined Bell scenarios without fixing the causal structure to be of this form, but only requiring that the setting variables are parentless (to model that they are freely chosen). This definition of Bell scenarios, and their standard space-time embedding does not rule out superluminal signalling or causal loops by construction. Furthermore, the following simple example illustrates that imposing \hyperref[cond: NS2]{(NS2)} and \hyperref[cond: NS3p]{(NS3$'$)} in such Bell scenarios is insufficient for ruling out signalling.

\begin{example}[\texorpdfstring{\hyperref[cond: NS2]{(NS2)}/\hyperref[cond: NS3]{(NS3)}}{NS2/3} are insufficient for ruling out superluminal signalling in Bell scenarios]
\label{example: NS_insuff}
Consider a causal structure over variables $A$, $B$, $X$ and $Y$ where $A\longrsquigarrow Y$ and $X\longrsquigarrow Y$ are the only edges (Figure~\ref{fig: examples_maintext}).
Consider a causal model over this causal structure where all variables are binary, $A$ and $X$ are uniformly distributed and $Y=A\oplus X$, while $B$ can have any distribution (note that it is a dummy variable, which is both parentless and childless). This causal structure satisfies our minimal definition of a bipartite Bell scenario (Definition~\ref{def: Bell_scenario}), and the associated model leads to probabilities $P(XY|AB)$ obeying the non-signalling conditions \hyperref[cond: NS2]{(NS2)} (since $A$ does not affect $Y$ given do$(B)$ and $B$ does not affect $X$ given do$(A)$, cf.\ Theorem~\ref{theorem: ns2_affects}). However we have $AX$ affects $Y$ (since $A$ and $X$ are parentless and correlated with $Y$), which enables Alice (who has access to $A$ and $X$) to signal to Bob (who has access to $Y$). This signal would be superluminal in any space-time configuration where Alice and Bob's measurements take place at space-like separated regions. The same construction can be trivially embedded in a tripartite Bell scenario, by choosing $C$ and $Z$ to be dummy variables, such that \hyperref[cond: NS3]{(NS3)} (and hence \hyperref[cond: NS3p]{(NS3$'$)} is also trivially satisfied and we still have the possibility of signalling from Alice to Bob. Notice that $\Lambda$ is irrelevant to this example, so imposing that $\Lambda$ is unobserved (which is necessary for avoiding superluminal signalling as shown in Theorem~\ref{theorem: jamming}) is insufficient to avoid superluminal signalling in this example. 
\end{example}

The above example emphasizes the need to account for interventions in addition to correlation constraints such as \hyperref[cond: NS2]{(NS2)} and \hyperref[cond: NS3]{(NS3)} to rule out superluminal signalling in Bell type scenarios in space-time, which we do in the following theorems. In the following, when we say there are no affects relations emanating from a set $S_1$ we mean that there are no (possibly higher-order) affects relations $S_1$ affects $S_2$ given do$(S_3)$ (cf.\ Definition~\ref{definition:HOaffects}) with $S_1$ being the first argument. 

\begin{restatable}{theorem}{SuperlumBipart}
\label{theorem: superlum_bipart}
Consider a bipartite Bell scenario according to Definition~\ref{def: Bell_scenario} where $\Lambda$ is unobserved. Let the observed nodes $\{A,B,X,Y\}$ of the causal structure be embedded in Minkowski space-time according to the standard bipartite configuration of Definition~\ref{def: spacetime_config_bipart}. Then the following conditions are necessary and sufficient for the causal model of the scenario to be compatible with such a space-time embedding.
\begin{enumerate}
    \item \label{cond:bi1} The correlations $P(XY|AB)$ satisfy \hyperref[cond: NS2]{(NS2)}. 
    \item \label{cond:bi2} There are no affects relations emanating from any subset of $\{X,Y\}$.
\end{enumerate}
\end{restatable}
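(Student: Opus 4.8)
The plan is to prove both implications by first translating Condition~\ref{cond:bi1} into the language of (higher-order) affects relations via Theorem~\ref{theorem: ns2_affects}, and then testing the irreducible affects relations of the model against the compatibility inclusion of Definition~\ref{definition: compatposet} using the light-cone geometry of the standard configuration. Two elementary geometric facts will drive the argument: since $\mathcal{A}\prec\mathcal{X}$ and $\mathcal{B}\prec\mathcal{Y}$ we have $\overline{\mathcal{F}}(\mathcal{X})\subseteq\overline{\mathcal{F}}(\mathcal{A})$ and $\overline{\mathcal{F}}(\mathcal{Y})\subseteq\overline{\mathcal{F}}(\mathcal{B})$; and for any two space-like separated ORVs $\mathcal{P}\nprec\nsucc\mathcal{Q}$ one has $O(\mathcal{P})\in\overline{\mathcal{F}}(\mathcal{P})\setminus\overline{\mathcal{F}}(\mathcal{Q})$, hence $\overline{\mathcal{F}}(\mathcal{P})\not\subseteq\overline{\mathcal{F}}(\mathcal{Q})$. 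I will also use repeatedly that $A$, $B$ (and $\Lambda$) are parentless, so that no directed path terminates on a setting; by Equation~\eqref{eq: HO_cause} no affects relation can be ``caused into'' a setting, and an affects relation emanating from a subset of $\{X,Y\}$ can only reach nodes in $\{X,Y\}$.

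For necessity I would assume the model is compatible. If $B$ affected $X$ given do$(A)$, this singleton-source relation is automatically irreducible, so compatibility would force $\overline{\mathcal{F}}(\mathcal{X})\cap\overline{\mathcal{F}}(\mathcal{A})\subseteq\overline{\mathcal{F}}(\mathcal{B})$; but $\overline{\mathcal{F}}(\mathcal{X})\subseteq\overline{\mathcal{F}}(\mathcal{A})$ collapses the left-hand side to $\overline{\mathcal{F}}(\mathcal{X})$, and $\mathcal{X}\nprec\nsucc\mathcal{B}$ gives $\overline{\mathcal{F}}(\mathcal{X})\not\subseteq\overline{\mathcal{F}}(\mathcal{B})$, a contradiction. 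The same argument with the roles swapped excludes $A$ affecting $Y$ given do$(B)$, and Theorem~\ref{theorem: ns2_affects} then yields Condition~\ref{cond:bi1}. For Condition~\ref{cond:bi2}, suppose some nonempty $S_1\subseteq\{X,Y\}$ has an affects relation; reducing it gives an irreducible relation $S_1'$ affects $S_2'$ given do$(S_3')$ with $S_1'\subseteq S_1$. By reachability $S_2'$ must contain the other outcome $W\in\{X,Y\}\setminus S_1'$, which is space-like to every element of $\mathcal{S}_1'$, and compatibility would demand $\overline{\mathcal{F}}(\mathcal{S}_2')\cap\overline{\mathcal{F}}(\mathcal{S}_3')\subseteq\overline{\mathcal{F}}(\mathcal{S}_1')$. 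I would then exhibit a space-time point in the left-hand region lying outside $\overline{\mathcal{F}}(\mathcal{S}_1')$, contradicting compatibility.

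For sufficiency I would assume Conditions~\ref{cond:bi1}--\ref{cond:bi2} and take any irreducible affects relation $S_1$ affects $S_2$ given do$(S_3)$. Using the characterisation of irreducibility (each source element individually signals once the remaining source elements are also intervened on), an outcome lying in $S_1$ would produce an affects relation emanating from a subset of $\{X,Y\}$, contradicting Condition~\ref{cond:bi2}; hence $S_1\subseteq\{A,B\}$. I would then split into $S_1=\{A\}$, $\{B\}$, $\{A,B\}$. Because the settings are parentless and independent ($P(AB)=P(A)P(B)$, cf.\ Definition~\ref{def:post_intervention}), Condition~\ref{cond:bi1} upgrades to exclude $A$ affecting any target containing $Y$ (and symmetrically $B$ affecting a target containing $X$) in any conditioning that does not already place $X$ (resp.\ $Y$) in the receiving region $S_2\cup S_3$. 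Consequently every surviving relation has $X\in S_2\cup S_3$ whenever $A\in S_1$ and $Y\in S_2\cup S_3$ whenever $B\in S_1$, so the receiving region is contained in $\overline{\mathcal{F}}(\mathcal{X})\subseteq\overline{\mathcal{F}}(\mathcal{A})$ and/or $\overline{\mathcal{F}}(\mathcal{Y})\subseteq\overline{\mathcal{F}}(\mathcal{B})$, giving exactly the required inclusion $\overline{\mathcal{F}}(\mathcal{S}_2)\cap\overline{\mathcal{F}}(\mathcal{S}_3)\subseteq\overline{\mathcal{F}}(\mathcal{S}_1)$.

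The main obstacle I anticipate is the geometric step in the necessity of Condition~\ref{cond:bi2}: when $S_3'$ contains the settings, the receiving region $\overline{\mathcal{F}}(\mathcal{S}_2')\cap\overline{\mathcal{F}}(\mathcal{S}_3')$ shrinks (worst case $\overline{\mathcal{F}}(\mathcal{Y})\cap\overline{\mathcal{F}}(\mathcal{A})$ for $S_1'=\{X\}$), and I must verify that this common future of space-like separated events is genuinely not contained in the future of the source outcome. This is not a formal consequence of the order relations alone but a property of Minkowski geometry in the standard configuration, so I expect to establish it using the characterisation of when one future cone contains the intersection of two others developed in Appendix~\ref{app:ST} (and in particular the strict timelike separation $\mathcal{A}\prec\mathcal{X}$ of a party's own setting and outcome). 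The bookkeeping of set-valued targets --- reducing affects-to-a-set to affects-to-an-outcome using the parentlessness of the settings --- is the other place requiring care, but it should be routine given the do-calculus available in the framework.
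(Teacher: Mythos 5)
Your proposal is correct and follows essentially the same route as the paper's proof: necessity of \hyperref[cond: NS2]{(NS2)} via Theorem~\ref{theorem: ns2_affects} together with the light-cone geometry, necessity of Condition~2 from the fact that no variable or joint future of variables is contained in $\overline{\mathcal{F}}(\mathcal{X})$ or $\overline{\mathcal{F}}(\mathcal{Y})$, and sufficiency by showing every irreducible affects relation must emanate from $\{A,B\}$ and then eliminating the incompatible cases using \hyperref[cond: NS2]{(NS2)}, Condition~2 and $P(AB)=P(A)P(B)$. The only cosmetic differences are that the paper pins the relevant outcome into the target set rather than the target-or-conditioning set (both suffice for the compatibility inclusion), and that it leaves the Minkowski-geometry step you rightly flag implicit rather than routing it explicitly through Appendix~\ref{app:ST}.
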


In the above, we have provided necessary and sufficient conditions formulated at the level of the observed correlations and affects relations in the Bell scenario, which are operationally accessible quantities. However, often one makes assumptions about the underlying causal structure (which may include unobserved systems and may not always be directly operationally accessible). Such assumptions at the level of the causal structure are often much stronger than their counterparts at the level of affects relations, thus the above theorem has the following corollary.

\begin{corollary}
\label{corollary: bipart}
    In any bipartite Bell scenario (Definition~\ref{def: Bell_scenario}) where $\Lambda$ is unobserved and we assume that the outcomes $X$ and $Y$ are not causes of any variables relevant to the experiment, the bipartite correlation constraints \hyperref[cond: NS2]{(NS2)} are necessary and sufficient for no-superluminal signalling in the standard space-time configuration of such a scenario (Definition~\ref{def: spacetime_config_bipart}). 
\end{corollary}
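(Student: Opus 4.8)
The plan is to derive this corollary directly from Theorem~\ref{theorem: superlum_bipart} by showing that the additional structural hypothesis---that the outcomes $X$ and $Y$ are not causes of any variables relevant to the experiment---forces Condition~\ref{cond:bi2} of that theorem to hold automatically. Once Condition~\ref{cond:bi2} is discharged, the necessary-and-sufficient conditions of the theorem collapse to Condition~\ref{cond:bi1} alone, which is precisely \hyperref[cond: NS2]{(NS2)}, giving the claim.

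First I would translate the hypothesis into the graph-theoretic statement that $X$ and $Y$ are childless nodes in the causal structure $\cG$, i.e.\ have no outgoing directed edges to any variable in $\{A,B,\Lambda\}$ nor to one another. The key step is then to invoke the causal-inference property of higher-order affects relations, Equation~\eqref{eq: HO_cause}: any affects relation $S_1$ affects $S_2$ given do$(S_3)$ implies the existence of $e_{S_1}\in S_1$ and $e_{S_2}\in S_2$ joined by a directed path $e_{S_1}\longrsquigarrow\dots\longrsquigarrow e_{S_2}$ in $\cG$. Contraposing this, take any nonempty $S_1\subseteq\{X,Y\}$ and suppose some relation $S_1$ affects $S_2$ given do$(S_3)$ held; this would require a directed path emanating from $X$ or $Y$, contradicting childlessness. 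Hence there are no affects relations emanating from any subset of $\{X,Y\}$, which is exactly Condition~\ref{cond:bi2}.

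With Condition~\ref{cond:bi2} established unconditionally under the childlessness hypothesis, Theorem~\ref{theorem: superlum_bipart} reduces to an equivalence between compatibility of the causal model with the standard bipartite embedding---our formalisation of no superluminal signalling---and Condition~\ref{cond:bi1}. Concretely I would spell out both directions: for \emph{sufficiency}, if \hyperref[cond: NS2]{(NS2)} holds then Condition~\ref{cond:bi1} holds and Condition~\ref{cond:bi2} holds by the previous paragraph, so the theorem yields compatibility; for \emph{necessity}, if the model is compatible then the theorem supplies both conditions, and in particular Condition~\ref{cond:bi1}, i.e.\ \hyperref[cond: NS2]{(NS2)}. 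This closes the argument.

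I expect the only genuine subtlety---and the step to state most carefully---to be the faithful reading of ``not causes of any variables relevant to the experiment'' as childlessness of $X$ and $Y$ in $\cG$, together with the clean passage from this graph-level statement to the operational absence of affects relations via Equation~\eqref{eq: HO_cause}. Everything beyond that is a direct application of Theorem~\ref{theorem: superlum_bipart}, with no further computation required.
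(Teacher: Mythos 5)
Your proposal is correct and follows essentially the same route as the paper: the paper's own justification is precisely that childlessness of $X$ and $Y$ forces Condition~\ref{cond:bi2} of Theorem~\ref{theorem: superlum_bipart} to hold (since a variable that is not a cause of anything cannot have affects relations emanating from it, by the contrapositive of Equation~\eqref{eq: HO_cause}), after which the theorem reduces to the equivalence of compatibility with \hyperref[cond: NS2]{(NS2)}. Your write-up just makes the appeal to the affects-implies-cause lemma explicit where the paper leaves it implicit.
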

The corollary follows because in any such causal structure, Condition~\ref{cond:bi2} of Theorem~\ref{theorem: superlum_bipart} is automatically satisfied: if $X$ and $Y$ are not causes of any other variables, then they also cannot have any affects relations emanating from them.  The causal structure of Figure~\ref{fig:Bell_bipart_CS} is of this form, and is the one that is typically used to describe physical bipartite Bell scenarios. Note however that Condition~\ref{cond:bi2} of Theorem~\ref{theorem: superlum_bipart} is weaker than assuming such a causal structure because is possible to have fine-tuned causal models defined on causal structures where the outcomes are a cause of other variables, without having any affects relations emanating from them i.e., while satisfying Condition~\ref{cond:bi2} of the theorem.

Moreover, $\Lambda$ being unobserved is also necessary in the bipartite Bell scenario when we wish to argue for no superluminal signalling using \hyperref[cond: NS2]{(NS2)} alone. This is because non-local hidden variable models explain Bell non-local correlations satisfying \hyperref[cond: NS2]{(NS2)} by allowing for superluminal influences between Alice and Bob, which would lead to superluminal signalling if $\Lambda$ was fully accessible. For example, in a Bohmian explanation of Bell inequality violating correlations, where $\Lambda$ includes the (usually hidden) particle positions, the non-signalling conditions \hyperref[cond: NS2]{(NS2)} are satisfied by construction, in particular we have $P(X|AB)=P(X|A)$. However this would generally also have $P(X|AB\Lambda)\neq P(X|A\Lambda)$, which would enable superluminal signalling between Alice and Bob, if they were able to access $\Lambda$.

\begin{restatable}{theorem}{SuperlumTripart}
\label{theorem: superlum_tripart}
Consider a tripartite Bell scenario according to Definition~\ref{def: Bell_scenario} where $\Lambda$ is unobserved. Let the observed nodes $\{A,B,C,X,Y,Z\}$ of the causal structure be embedded in Minkowski space-time according to a jamming configuration (Definition~\ref{def: spacetime_config_jamming}). Further, if $\overline{\mathcal{F}}(\cX)\cap \overline{\mathcal{F}}(\cZ)\not\subseteq \overline{\mathcal{F}}(\cY)$, the following conditions are necessary and sufficient for the causal model of the scenario to be compatible with such a space-time embedding.

\begin{enumerate}
    \item \label{cond:tri1} The correlations $P(XYZ|ABC)$ satisfy \hyperref[cond: NS3p]{(NS3$'$)}  
    \item \label{cond:tri2} There are no affects relations emanating from any subset of $\{X,Y,Z\}$.
\end{enumerate}

On the other hand, if $\overline{\mathcal{F}}(\cX)\cap \overline{\mathcal{F}}(\cZ)\subseteq \overline{\mathcal{F}}(\cY)$ also holds, then condition~\ref{cond:tri1} from above along with the following condition are necessary and sufficient for compatibility. 

\begin{enumerate}[1$'$]\setcounter{enumi}{1}
    \item \label{cond:tri2p} There are no affects relations emanating from any subset of $\{X,Y,Z\}$, except affects relations of the form $Y$ affects $S_1$ given do$(S_2)$ where $\{X,Z\}\subseteq S_1$. 
\end{enumerate}

\end{restatable}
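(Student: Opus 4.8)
The plan is to work directly from the compatibility condition of Definition~\ref{definition: compatposet}: the causal model is compatible with the embedding precisely when every \emph{irreducible} higher-order affects relation $S_1$ affects $S_2$ given do$(S_3)$ that the model produces obeys $\overline{\mathcal{F}}(\mathcal{S}_2)\cap\overline{\mathcal{F}}(\mathcal{S}_3)\subseteq\overline{\mathcal{F}}(\mathcal{S}_1)$. I would first record the containments that the jamming configuration (Definition~\ref{def: spacetime_config_jamming}) forces on inclusive futures: $\mathcal{A}\prec\mathcal{X}$, $\mathcal{B}\prec\mathcal{Y}$ and $\mathcal{C}\prec\mathcal{Z}$ give $\overline{\mathcal{F}}(\mathcal{X})\subseteq\overline{\mathcal{F}}(\mathcal{A})$, $\overline{\mathcal{F}}(\mathcal{Y})\subseteq\overline{\mathcal{F}}(\mathcal{B})$ and $\overline{\mathcal{F}}(\mathcal{Z})\subseteq\overline{\mathcal{F}}(\mathcal{C})$; the jamming condition gives $\overline{\mathcal{F}}(\mathcal{X})\cap\overline{\mathcal{F}}(\mathcal{Z})\subseteq\overline{\mathcal{F}}(\mathcal{B})$; and for every space-like separated pair $\mathcal{P},\mathcal{Q}$ one has $\overline{\mathcal{F}}(\mathcal{Q})\not\subseteq\overline{\mathcal{F}}(\mathcal{P})$. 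Since $\overline{\mathcal{F}}(\mathcal{Y})\subseteq\overline{\mathcal{F}}(\mathcal{B})$, the hypothesis $\overline{\mathcal{F}}(\mathcal{X})\cap\overline{\mathcal{F}}(\mathcal{Z})\subseteq\overline{\mathcal{F}}(\mathcal{Y})$ that separates the two cases is strictly stronger than the jamming condition, and this is exactly what controls the permitted $Y$-family.

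For sufficiency I would take an arbitrary irreducible relation and split on its source $S_1$. Because settings are parentless they can never be targets, so every target lies in $\{X,Y,Z\}$. Condition~\ref{cond:tri2} (resp.~\ref{cond:tri2p}) removes every relation whose source is a nonempty subset of $\{X,Y,Z\}$, with the sole exception, in the case $\overline{\mathcal{F}}(\mathcal{X})\cap\overline{\mathcal{F}}(\mathcal{Z})\subseteq\overline{\mathcal{F}}(\mathcal{Y})$, of those of the form $Y$ affects $S_1$ given do$(S_2)$ with $\{X,Z\}\subseteq S_1$; for these the containment is immediate since $\overline{\mathcal{F}}(\mathcal{S}_1)\subseteq\overline{\mathcal{F}}(\mathcal{X})\cap\overline{\mathcal{F}}(\mathcal{Z})\subseteq\overline{\mathcal{F}}(\mathcal{Y})$ under the case hypothesis. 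The surviving relations have a source meeting $\{A,B,C\}$, and here I would invoke Theorem~\ref{theorem: ns3p_affects} to read \hyperref[cond: NS3p]{(NS3$'$)} as the four non-affects relations it is equivalent to, then check that, combined with the future-containments above, each remaining setting-sourced relation (for instance one targeting $X$ from a source containing $A$, via $\overline{\mathcal{F}}(\mathcal{X})\subseteq\overline{\mathcal{F}}(\mathcal{A})$) meets its containment requirement.

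For necessity I would prove the contrapositive condition by condition. If \hyperref[cond: NS3p]{(NS3$'$)} fails, Theorem~\ref{theorem: ns3p_affects} yields one of $C$ affects $XY$ given do$(AB)$, $A$ affects $YZ$ given do$(BC)$, $BC$ affects $X$ given do$(A)$ or $AB$ affects $Z$ given do$(C)$; passing to the irreducible core, the core retains a target variable space-like separated from its source while its do-set stays within $\{A,B,C\}$, so, as no containment involving a lone setting other than $\mathcal{A}\prec\mathcal{X}$, $\mathcal{B}\prec\mathcal{Y}$, $\mathcal{C}\prec\mathcal{Z}$ is available, the containment fails and compatibility breaks. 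If instead condition~\ref{cond:tri2} (resp.~\ref{cond:tri2p}) fails, there is an irreducible relation sourced in $\{X,Y,Z\}$ outside the permitted $Y$-family; sources containing $X$ or $Z$ fail because those variables are space-like separated from every target, while a $Y$-sourced relation fails precisely when $\overline{\mathcal{F}}(\mathcal{X})\cap\overline{\mathcal{F}}(\mathcal{Z})\not\subseteq\overline{\mathcal{F}}(\mathcal{Y})$, which is where the two cases diverge.

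The main obstacle is the bookkeeping forced by reducibility together with mixed source sets and arbitrary do-sets. Dropping a source element enlarges $\overline{\mathcal{F}}(\mathcal{S}_1)$ and relaxes the containment, whereas dropping a do-set element enlarges $\overline{\mathcal{F}}(\mathcal{S}_3)$ and tightens it; one must therefore pin down which irreducible core is actually forced to exist and verify the containment for that exact triple. The sharpest point is matching the permitted family of Condition~\ref{cond:tri2p} with the geometrically admissible relations: a relation such as $Y$ affects $X$ given do$(Z)$ is geometrically compatible whenever $\overline{\mathcal{F}}(\mathcal{X})\cap\overline{\mathcal{F}}(\mathcal{Z})\subseteq\overline{\mathcal{F}}(\mathcal{Y})$, so to make the stated conditions exactly necessary I would need to show that in irreducible form an outcome appearing in the do-set can be absorbed into the target, ensuring that every compatible $Y$-sourced irreducible relation indeed carries $\{X,Z\}$ in its target and that no compatible relation is wrongly excluded. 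Establishing this normal form for irreducible affects relations, and confirming that the jamming containments are exactly strong enough to certify the permitted relations and no others, is the crux of the argument.
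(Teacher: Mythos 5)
Your overall strategy coincides with the paper's: both arguments reduce compatibility to checking the future-containment condition on each \emph{irreducible} higher-order affects relation, split by the source set, use Theorem~\ref{theorem: ns3p_affects} to translate \hyperref[cond: NS3p]{(NS3$'$)} into non-affects relations, and observe that irreducibility pushes outcome variables out of the source set under condition~\ref{cond:tri2}/\ref{cond:tri2p}. The case analysis you sketch (settings cannot be targets of causal inference, the permitted $Y$-family is certified by $\overline{\mathcal{F}}(\cX)\cap\overline{\mathcal{F}}(\cZ)\subseteq\overline{\mathcal{F}}(\cY)$, necessity of condition 1 via the four non-affects relations) matches the paper's Steps 1--4 and Cases N1--N2.

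However, there is a genuine gap: the two steps you yourself flag as ``the crux'' are exactly where the paper's proof does its real work, and your proposal does not supply them. First, for sufficiency, \hyperref[cond: NS3p]{(NS3$'$)} only constrains affects relations whose do-sets consist of \emph{settings}; to rule out, say, $B$ affects $X$ given do$(A,Y)$ you must combine it with condition~\ref{cond:tri2}. The paper does this by deriving three identities --- $P(S_1|BS^i)=P(S_1|S^i)$ from the \hyperref[cond: NS3p]{(NS3$'$)} manipulations, and $P_{\cG_{\mathrm{do}(S^o)}}(S_1|S^oS^i)=P(S_1|S^i)$, $P_{\cG_{\mathrm{do}(S^o)}}(S_1|S^oBS^i)=P(S_1|BS^i)$ from the no-affects-from-outcomes condition --- and chaining them to get $B$ does not affect $S_1$ given do$(S^oS^i)$; even the $S^o=\emptyset$ case requires a nontrivial derivation of e.g.\ $B$ does not affect $X$ given do$(AC)$ from the four relations \hyperref[cond: NS3p]{(NS3$'$)} literally provides. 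Your proposal asserts this check can be done but does not do it. Second, for the exact necessity of condition~\ref{cond:tri2p}, you correctly worry that a relation like $Y$ affects $X$ given do$(Z)$ appears geometrically admissible when $\overline{\mathcal{F}}(\cX)\cap\overline{\mathcal{F}}(\cZ)\subseteq\overline{\mathcal{F}}(\cY)$, so excluding it might make the condition non-necessary. The paper resolves this (Case N1.1) by showing that compatibility itself forces $Z$ does not affect $XS^i$ given do$(\bar S^i)$, which converts $Y$ affects $XS^i$ given do$(Z\bar S^i)$ into $YZ$ affects $XS^i$ given do$(\bar S^i)$, whose irreducible core is then shown incompatible in every branch; without this ``absorption'' argument the equivalence between the permitted $Y$-family and the geometrically admissible relations is unproven. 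As it stands the proposal is a correct plan with the same architecture as the paper's proof, but the load-bearing computations are acknowledged rather than established.
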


\begin{corollary}
\label{corollary: tripart}
    In any tripartite Bell scenario (Definition~\ref{def: Bell_scenario}) where $\Lambda$ is unobserved and we assume that the outcomes $X$, $Y$ and $Z$ are not causes of any variables relevant to the experiment, the relaxed tripartite correlation constraints \hyperref[cond: NS3p]{(NS3$'$)} are necessary and sufficient for no-superluminal signalling in the jamming space-time configuration of such a scenario (Definition~\ref{def: spacetime_config_jamming}). 
\end{corollary}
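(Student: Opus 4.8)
The plan is to derive this corollary directly from Theorem~\ref{theorem: superlum_tripart}, exactly as Corollary~\ref{corollary: bipart} was obtained from Theorem~\ref{theorem: superlum_bipart}. Theorem~\ref{theorem: superlum_tripart} already establishes that, for $\Lambda$ unobserved and the jamming embedding of Definition~\ref{def: spacetime_config_jamming}, compatibility of the causal model with the space-time (equivalently, no superluminal signalling under arbitrary interventions) is equivalent to the conjunction of Condition~\ref{cond:tri1} (that $P(XYZ|ABC)$ obeys \hyperref[cond: NS3p]{(NS3$'$)}) and a second condition restricting which affects relations may emanate from the outcomes. The task is therefore to show that, under the extra hypothesis that $X$, $Y$ and $Z$ are not causes of any variable relevant to the experiment, this second condition is automatically met, so that compatibility collapses to \hyperref[cond: NS3p]{(NS3$'$)} alone.

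The central step is to argue that childless outcomes admit no emanating affects relations. The hypothesis that $X$, $Y$, $Z$ cause no relevant variable means that in the underlying causal structure none of these nodes has an outgoing edge, so no directed path originates at any of them. Suppose, for contradiction, that some subset $S_1\subseteq\{X,Y,Z\}$ carried an emanating (possibly higher-order) affects relation $S_1$ affects $S_2$ given do$(S_3)$. By the causal-inference implication of higher-order affects relations (Equation~\eqref{eq: HO_cause}), this would force a directed path $e\longrsquigarrow\dots\longrsquigarrow e'$ from some $e\in S_1$ to some $e'\in S_2$, contradicting the absence of outgoing edges at the outcomes. Hence no affects relation can have a subset of $\{X,Y,Z\}$ as its first argument, which is precisely Condition~\ref{cond:tri2} in its strongest form.

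It remains to feed this into both regimes of Theorem~\ref{theorem: superlum_tripart}. When $\overline{\mathcal{F}}(\cX)\cap\overline{\mathcal{F}}(\cZ)\not\subseteq\overline{\mathcal{F}}(\cY)$, the necessary and sufficient conditions are Condition~\ref{cond:tri1} together with Condition~\ref{cond:tri2}, and we have just shown the latter holds automatically, so compatibility is equivalent to \hyperref[cond: NS3p]{(NS3$'$)}. When instead $\overline{\mathcal{F}}(\cX)\cap\overline{\mathcal{F}}(\cZ)\subseteq\overline{\mathcal{F}}(\cY)$ also holds, the relevant pair is Condition~\ref{cond:tri1} and the weaker Condition~\ref{cond:tri2p}; since ``no emanating affects relations at all'' (Condition~\ref{cond:tri2}) trivially implies ``no emanating affects relations except the permitted $Y$-type ones'' (Condition~\ref{cond:tri2p}), that second condition is again satisfied, and compatibility once more reduces to \hyperref[cond: NS3p]{(NS3$'$)}. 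In both cases \hyperref[cond: NS3p]{(NS3$'$)} is thus necessary and sufficient for no superluminal signalling, proving the corollary.

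I do not expect a genuine obstacle, since the heavy lifting sits in Theorem~\ref{theorem: superlum_tripart}; the only points demanding care are to verify that the strong no-affects condition (Condition~\ref{cond:tri2}) indeed subsumes the weaker one (Condition~\ref{cond:tri2p}), so that the single childless hypothesis discharges the second condition simultaneously in both light-cone regimes, and to confirm that the passage from ``no children'' to ``no emanating affects relations'' invokes only Equation~\eqref{eq: HO_cause} and no acyclicity or faithfulness assumption.
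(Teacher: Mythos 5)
Your proposal is correct and follows essentially the same route as the paper: the paper's own justification is precisely that under the childless-outcomes hypothesis, conditions~\ref{cond:tri2} and~\ref{cond:tri2p} of Theorem~\ref{theorem: superlum_tripart} are automatically satisfied (because a node with no outgoing edges cannot be the origin of any affects relation, by the causal-inference implication of Equation~\eqref{eq: HO_cause}), so compatibility reduces to \hyperref[cond: NS3p]{(NS3$'$)} in both light-cone regimes. Your extra care in checking that condition~\ref{cond:tri2} subsumes condition~\ref{cond:tri2p} and that no acyclicity or faithfulness assumption is invoked is a faithful elaboration of what the paper leaves implicit.
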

The corollary follows because in any such causal structure, condition~\ref{cond:tri2} and condition~\ref{cond:tri2p} of Theorem~\ref{theorem: superlum_tripart} are automatically satisfied. Note that the causal structure of Figure~\ref{fig: tripart_jamming} is of this form, and is the one that is typically used to describe physical experiments implementing tripartite Bell scenarios. Therefore, Corollaries~\ref{corollary: bipart} and~\ref{corollary: tripart} illustrate how we recover the standard understanding of the non-signalling conditions as fully capturing non-violation of relativistic causality in usual treatment of bi/tripartite Bell scenarios. 

\begin{remark}
    For the above theorems, we have taken into account higher order affects relations of the form $S_1$ affects $S_2$ given do$(S_3)$ which capture signalling from $S_1$ to $S_2$ given information about interventions performed on $S_3$ (where these are disjoint subsets of observed nodes in some causal model). More generally, the concept of conditional higher-order affects relations was introduced in~\cite{VilasiniColbeckPRA}, these are relations of the form $S_1$ affects $S_2$ given (do$(S_3)$, $S_4$) which capture signalling from $S_1$ to $S_2$ given information about interventions performed on $S_3$ and when post-selecting on $S_4$ (without performing interventions). The definition of such affects relations is reviewed in Appendix~\ref{appendix: framework}. However, such conditional relations are not relevant for our main results because we are already able to find necessary and sufficient conditions for no superluminal signalling in terms of the simpler, non-conditional relations. Intuitively, this is because a conditional relation such as $S_1$ affects $S_2$ given (do$(S_3)$, $S_4$) always implies an unconditional relation $S_1$ affects $S_2 S_4$ given do$(S_3)$~\cite{VilasiniColbeckPRA}. 
\end{remark}

\begin{figure}
    \centering
  
 \includegraphics[]{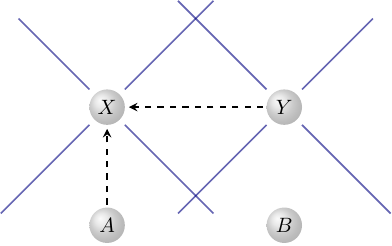}
    \caption{ Causal structure of Example~\ref{example: NS_insuff} along with the associated space-time embedding of the observed variables (circled) of this causal structure. We recall that in causal structures, a dashed causal arrow $N_1\xdashrightarrow{} N_2$ indicates that $N_1$ is a direct cause of $N_2$, but $N_1$ does not affect $N_2$ in the given causal model. This example involves superluminal causal influence from $Y$ to $X$, even though $Y$ does not affect $X$ (as captured by the dashed arrow) and the associated correlations satisfy the non-signalling constraints \hyperref[cond: NS2]{(NS2)}, the higher-order affects relation of this example enable superluminal signalling from Bob to Alice.}
    \label{fig: examples_maintext}
\end{figure}

\subsubsection{Conditions for ruling out causal loops}

We now turn our attention to the principle of no causal loops. This is another principle that is often associated with relativistic causality, and comprises the relativistic causality condition proposed in~\cite{Horodecki2019} (Definition~\ref{definition: horodecki}). However, in the absence of a rigorous definition of ``cause'' and ``events'' the meaning and scope of this principle, and claims based on them~\cite{Horodecki2019} remained unclear. Distinguishing between information-theoretic and spatio-temporal causality notions we see that in Minkowski space-time, we have no causal loops according to the spatio-temporal notion of causality.\footnote{Since the light-cone structure of the space-time forms a partially ordered set which defines a directed acyclic graph.} We can nevertheless consider cyclic information-theoretic causal structures to model scenarios where variables, possibly under the control of different parties, may causally influence each other, which are the kind of causal loops we wish to avoid in information processing protocols in Minkowski space-time. In Section~\ref{sec: framework}, we formalised and distinguished between two classes of information-theoretic causal loops (formulated in a theory-independent manner), namely affects loops and hidden causal loops which correspond to operationally detectable and operationally undetectable causal loops respectively. 

The previous section focused on the relativistic principle of no superluminal signalling, which is generally neither implied by nor implies the principle of no causal loops~\cite{VilasiniColbeckPRA, VilasiniColbeckPRL}. The following results address the sufficiency and necessity of the conditions of Theorems~\ref{theorem: superlum_bipart} and~\ref{theorem: superlum_tripart} (which were shown to be necessary and sufficient for no superluminal signalling in Bell scenarios) for the requirement of having no causal loops, while clearly distinguishing between affects and hidden causal loops. 

\begin{restatable}{theorem}{CausalLoopsBell}
\label{theorem: loops1}
    Condition~\ref{cond:bi2} of Theorem~\ref{theorem: superlum_bipart} and condition~\ref{cond:tri2p} of Theorem~\ref{theorem: superlum_tripart} (and therefore the stronger condition~\ref{cond:tri2} of the same theorem) are sufficient for having no affects causal loops in bi- and tripartite Bell scenarios respectively. They are however insufficient for ruling out hidden causal loops. 
\end{restatable}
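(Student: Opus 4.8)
The plan is to handle the two halves of the statement separately, since they are of opposite character: the sufficiency for ruling out affects causal loops (Definition~\ref{def: ACL}) is a genuine structural argument, whereas the insufficiency for hidden causal loops (Definition~\ref{def: HCL}) is essentially forced by the definition of an HCL. For the sufficiency half I would first isolate the structural feature of Bell scenarios that does the work: by Definition~\ref{def: Bell_scenario} the settings $A,B$ (and $C$) and the common cause $\Lambda$ are all parentless, so none of them can lie on a directed cycle. Hence any directed cycle among observed nodes, and therefore any affects causal loop, must be confined to the outcome variables $\{X,Y\}$ (respectively $\{X,Y,Z\}$). The goal is then to show that condition~\ref{cond:bi2} / condition~\ref{cond:tri2p} prevent the observed affects relations from forcing such a cycle.

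The engine of the argument is a localisation lemma that I would prove from the definition of irreducibility: if $S_1$ affects $S_2$ given $\mathrm{do}(S_3)$ is irreducible and $e\in S_1$, then $e$ affects $S_2$ given $\mathrm{do}(S_3)$. The point is that irreducibility forbids moving $e$ into the intervention set, so $S_1\setminus e$ does not affect $S_2$ given $\mathrm{do}(S_3 e)$, giving the identity $P_{\cG_{\mathrm{do}(S_1 S_3)}}(S_2|\cdot)=P_{\cG_{\mathrm{do}(S_3 e)}}(S_2|\cdot)$; combining this with the defining inequality of the affects relation relative to $P_{\cG_{\mathrm{do}(S_3)}}(S_2|\cdot)$ yields that $e$ \emph{alone} affects $S_2$ given $\mathrm{do}(S_3)$. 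After reducing every affects relation to its irreducible core (these are the relations that carry causal force through Equation~\eqref{eq: HOirred_cause}), the lemma shows that an outcome $e$ can appear in the first argument of an irreducible relation only if the pure relation ``$e$ affects $S_2$ given $\mathrm{do}(S_3)$'' itself holds. But condition~\ref{cond:bi2} forbids any outcome from being a first argument outright, and condition~\ref{cond:tri2p} forbids this except for $e=Y$ with $\{X,Z\}\subseteq S_2$; so in the bipartite case no outcome occurs in an irreducible first argument, and in the tripartite case only $Y$ does.

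To finish I would form the digraph recording, for each irreducible relation and via Equation~\eqref{eq: HO_cause}/\eqref{eq: HOirred_cause}, a forced directed path from each first-argument element to the target set. Under condition~\ref{cond:bi2} every such path originates at a setting, so no outcome has a forced outgoing path and the restriction to $\{X,Y\}$ is trivially acyclic; under condition~\ref{cond:tri2p} the only outcome with a forced outgoing path is $Y$, while $Y$ receives no forced path from an outcome, so the restriction to $\{X,Y,Z\}$ is again acyclic (a fork out of $Y$ at worst). Since every candidate cycle must lie among the outcomes, no cycle is forced, and the affects relations therefore admit an acyclic realisation, i.e.\ they contain no ACL. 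I expect the one delicate step to be the passage from ``the forced-path skeleton among outcomes is acyclic'' to ``there exists an acyclic causal model reproducing exactly the same set of affects relations''; I would either invoke the corresponding realisability result of the framework~\cite{VilasiniColbeckPRA}, or construct the acyclic witness explicitly by placing the forced edges in a DAG and fine-tuning the mechanisms to suppress every absent affects relation.

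For the insufficiency half the argument is almost immediate from Definition~\ref{def: HCL}: an HCL is by construction a cyclic model whose affects relations \emph{and} correlations are also realisable on an acyclic structure, so its affects relations coincide with those of some acyclic Bell model. Because condition~\ref{cond:bi2} and condition~\ref{cond:tri2p} are stated purely in terms of affects relations, any HCL whose acyclic partner satisfies them satisfies them as well. I would therefore exhibit a single explicit witness: take a standard acyclic Bell model whose outcomes are childless (so the conditions hold trivially), and adjoin fully fine-tuned, hence dashed, edges $X\longrsquigarrow Y$ and $Y\longrsquigarrow X$ among the outcomes, arranged so that no new affects relation or correlation appears --- for instance by routing the cyclic dependence through a uniform hidden share of $\Lambda$ so that its effect cancels, exactly as in Example~\ref{example: hidden_loop}. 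The resulting model has a genuine directed cycle yet is observationally identical to its acyclic parent, so it is an HCL; and since the cyclic edges are dashed, no affects relation emanates from $X,Y$ (or $Z$), so condition~\ref{cond:bi2} / condition~\ref{cond:tri2p} still hold. This establishes that the conditions cannot rule out hidden causal loops.
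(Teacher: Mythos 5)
Your overall strategy coincides with the paper's: for sufficiency, localise any candidate cycle to the outcome variables (settings and $\Lambda$ being parentless), observe that conditions~\ref{cond:bi2} and~\ref{cond:tri2p} leave no affects relations emanating from outcomes except possibly from $Y$ into sets containing $\{X,Z\}$, and conclude that no causal influence into a setting, and none out of $X$ or $Z$, can be certified, so the affects relations cannot force a directed cycle. For insufficiency, the witness you describe is exactly the paper's Example~\ref{example: hidden_loop} ($X=Y\oplus\Lambda$, $Y=X\oplus\Lambda$ with uniform unobserved $\Lambda$), used for exactly the reason you give: the conditions are stated purely at the level of affects relations and correlations, which an HCL reproduces acyclically by definition.

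However, your ``localisation lemma'' is stated and derived incorrectly; the error is an inversion of Definition~\ref{definition: ReduceAffects}. Irreducibility of $S_1$ affects $S_2$ given do$(S_3)$ does \emph{not} say that $S_1\setminus e$ fails to affect $S_2$ given do$(S_3 e)$ --- that failure is precisely what would make the relation \emph{reducible}. Irreducibility asserts the opposite: every proper subset $s$ of $S_1$ \emph{does} affect $S_2$ given do$(S_3 (S_1\setminus s))$. Consequently your claimed conclusion, that $e$ alone affects $S_2$ given do$(S_3)$ with the unaugmented do-set, is false in general: take $Y=A\oplus B$ with $A,B$ uniform and parentless; then $AB$ affects $Y$ is irreducible, yet $A$ does not affect $Y$. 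The correct consequence of irreducibility is only that $e$ affects $S_2$ given do$(S_3(S_1\setminus e))$. Fortunately this is all you need: conditions~\ref{cond:bi2} and~\ref{cond:tri2p} forbid \emph{every} higher-order affects relation whose first argument is a subset of the outcomes, whatever the do-set, so the corrected extraction still shows that no irreducible affects relation can contain an outcome in its first argument (except $Y$, under~\ref{cond:tri2p}, with $\{X,Z\}$ in the target). With that repair your argument goes through and matches the paper's; the remaining ``delicate step'' you flag (exhibiting an acyclic realisation of the same affects relations) is handled in the paper simply by noting that, since no causation out of $X,Z$ and none into the parentless settings is certified, a structure with childless outcomes (or, under~\ref{cond:tri2p}, with $Y$ influencing only $X$ and $Z$) realises the same affects relations acyclically.
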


\begin{restatable}{theorem}{CausalLoopsBell2}
\label{THEOREM: LOOPSBELL2}
    Condition 1 of Theorems~\ref{theorem: superlum_bipart} and~\ref{theorem: superlum_tripart} are insufficient for having no causal loops in bi- and tripartite Bell scenarios respectively. Neither of the conditions~\ref{cond:bi1} nor~\ref{cond:bi2} of Theorem~\ref{theorem: superlum_bipart}, and neither of the conditions~\ref{cond:tri1},~\ref{cond:tri2} nor~\ref{cond:tri2p} of Theorem~\ref{theorem: superlum_tripart} 
    are necessary for having no causal loops in the bi- and tripartite scenarios respectively. These statements apply to both affects and hidden causal loops.
\end{restatable}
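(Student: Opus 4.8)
The plan is to prove each clause by explicit counterexamples, organised around the observation (made precise by Theorems~\ref{theorem: ns2_affects} and~\ref{theorem: ns3p_affects}) that the correlation constraints \hyperref[cond: NS2]{(NS2)} and \hyperref[cond: NS3p]{(NS3$'$)} only constrain higher-order affects relations whose \emph{first} argument contains a setting variable, whereas a causal loop among outcomes manifests through affects relations whose first arguments are subsets of the outcomes. This gap is exactly what makes Condition~1 blind to loops among outcomes, and conversely makes the no-loop property insensitive to violations of Condition~1. I would state this as the guiding principle and then instantiate it.

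For the insufficiency of Condition~1, I would first build a bipartite model in which $A,B$ are dummy (parentless and childless) variables and the outcomes $X,Y$ carry the affects causal loop $\{X\text{ affects }Y,\ Y\text{ affects }X\}$ of Example~\ref{example: affects_loop}, with $\Lambda$ a (possibly trivial) common cause. Since no setting causes anything, $P(XY|AB)=P(XY)$, so \hyperref[cond: NS2]{(NS2)} holds (equivalently, $B$ does not affect $X$ given do$(A)$ and $A$ does not affect $Y$ given do$(B)$), yet the two outcome affects relations certify cyclicity via Definition~\ref{def: ACL}. The tripartite case is identical after adjoining a further dummy pair $(C,Z)$, so \hyperref[cond: NS3p]{(NS3$'$)} holds trivially. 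For hidden loops I would instead take a genuinely cyclic structure $X\leftrightarrow Y$ (again with dummy settings), fine-tuned so that neither outcome affects the other and $X,Y$ are independent; then \hyperref[cond: NS2]{(NS2)} again holds, there is no affects loop, but the cyclic structure produces the same (trivial) affects relations and correlations as the edgeless acyclic model, hence is a hidden causal loop in the sense of Definition~\ref{def: HCL} (the adaptation of Example~\ref{example: hidden_loop}).

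For the non-necessity clauses I would use acyclic counterexamples, which automatically possess neither affects nor hidden causal loops: an acyclic structure contains no directed cycle, so no HCL by Definition~\ref{def: HCL}, and its affects relations are realisable acyclically, so no ACL. To show Condition~\ref{cond:bi1} is not necessary, take the acyclic model with edge $B\longrightarrow X$ where $B$ affects $X$: this violates \hyperref[cond: NS2]{(NS2)} but has no loop. For Condition~\ref{cond:bi2}, take the acyclic $X\longrightarrow Y$ with $X$ affecting $Y$ (e.g.\ $Y=X$, settings dummy): there is now an affects relation emanating from $\{X\}$, violating \ref{cond:bi2}, yet no loop. The tripartite conditions are handled the same way: an acyclic $A\longrightarrow Z$ with $A$ affecting $Z$ violates \hyperref[cond: NS3p]{(NS3$'$)} (Condition~\ref{cond:tri1}); and an acyclic $X\longrightarrow Z$ with $X$ affecting $Z$ produces an outcome affects relation not of the permitted form $Y\text{ affects }S_1\supseteq\{X,Z\}$, hence violates both Condition~\ref{cond:tri2} and the weaker Condition~\ref{cond:tri2p}, while remaining loop-free. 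Since all these witnesses are acyclic, they simultaneously establish non-necessity for both affects and hidden causal loops.

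The routine part is checking that each small model is a legitimate Bell-scenario causal model (Definition~\ref{def: Bell_scenario}) and satisfies or violates exactly the intended condition, which follows by inspection using Theorems~\ref{theorem: ns2_affects} and~\ref{theorem: ns3p_affects}. The main obstacle lies in the cyclic constructions for the insufficiency clauses: one must exhibit a genuine, fine-tunable causal model realising $\{X\text{ affects }Y,\ Y\text{ affects }X\}$ within the Bell-scenario template (parentless settings and $\Lambda$ a common cause), and, for the hidden-loop witness, explicitly produce the acyclic model reproducing its affects relations and correlations so that Definition~\ref{def: HCL} applies. I would import these from the companion constructions in~\cite{VilasiniColbeckPRA, VilasiniColbeckPRL} and verify that adjoining dummy settings and the common cause $\Lambda$ disturbs neither the loop structure nor the correlation constraints.
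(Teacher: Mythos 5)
Your proposal is correct and follows essentially the same route as the paper: the paper proves this theorem entirely by counterexamples (Examples~\ref{example: affects_loop}--\ref{example: non-necc2} in Appendix~\ref{appendix: examples}), using a cyclic outcome-loop model with dummy settings for the insufficiency clauses, a fine-tuned $X\leftrightarrow Y$ model with a common cause for the hidden-loop case, and acyclic single-edge models ($A\longrsquigarrow Y$ with $Y=A$, and $X\longrsquigarrow Y$ with $Y=X$) for the non-necessity clauses. Your specific witnesses differ only in inessential relabellings, and your observation that acyclic witnesses settle both the affects-loop and hidden-loop versions simultaneously is exactly the paper's argument.
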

The first of the above two theorems in proven in Appendix~\ref{appendix: proof_loop} and the statements of the second are proven through counter-examples which are detailed in Appendix~\ref{appendix: examples}. 
In particular, these results highlight that condition 1 of the two theorems, namely the bipartite and relaxed tripartite non-signalling conditions \hyperref[cond: NS2]{(NS2)} and \hyperref[cond: NS3p]{(NS3$'$)} are neither sufficient nor necessary for having no affects causal loops, contrary to the claims reviewed in Section~\ref{sec:jam_review}. 

The conditions featuring in the above theorems are at the level of affects relations and correlations, the stronger condition that the outcomes in the Bell scenario have no outgoing arrows in the causal structure is in fact sufficient for ruling out affects and hidden causal loops because this implies that the causal structure is itself acyclic. However this assumption is too strong as it rules out all causal loops by construction, and independently of whether \hyperref[cond: NS2]{(NS2)} or \hyperref[cond: NS3p]{(NS3$'$)} are satisfied in the Bell scenario. In general, to rule out hidden causal loops one would have to make strong enough assumptions on the causal structure to ensure that it is acyclic, and once we have these assumptions, the fact that there are no causal loops will be independent of the causal model we construct or the correlations it generates. Therefore it is more interesting to consider the conditions for ruling out affects causal loops, as these can be characterised purely at the level of observable data on interventions and correlations. And once these are ruled out, we know that there will exist an acyclic causal structure that generates the affects relations of the scenario. We refer the interested reader to~\cite{VilasiniColbeckPRA}  for a discussion on the different necessary and sufficient conditions that one may consider for ruling out such loops in general (and not necessarily Bell-type) scenarios.

\bigskip

{\bf Bell scenarios with pre or post-processing.} The above results apply to superluminal signalling and causal loops involving the variables of the original Bell scenario. More generally, one may consider that the scenario is part of a larger protocol which involves information processing steps before or after the Bell experiment under consideration. In that case, additional conditions involving observed variables appearing in these pre and post processing steps will also be needed for ensuring that we have no superluminal signalling or no causal loops in the extended protocol. For example, we may have $A$ does not affect $Y$ in a bipartite Bell scenario but there may exist an additional variable $W$ such that $AW$ affects $Y$ which may allow superluminal signalling from Alice to Bob if Alice can access the ORVs $\cA$ and $\cW$ at a space-time location that is space-like to $\cY$. 

The framework we have developed in~\cite{VilasiniColbeckPRA} is general and the compatibility condition proposed there captures necessary and sufficient conditions for avoiding superluminal signalling in arbitrary information processing protocols (even those involving causal loops in the causal model). These definitions can be applied to a given scenario to derive relativistic causality conditions specific to that scenario (as we have done for Bell scenarios in the above theorems). Sufficient conditions for ruling out superluminal signalling will depend on several factors such as which variables are accessible to agents, which correlations can be observed, which interventions are physically allowed, which are the variables and space-time locations between which we wish to avoid superluminal signalling/causal loops. While the set of all conditions in an arbitrary protocol may end up looking rather complicated, our framework nevertheless provides a systematic way to define and check these conditions while accounting for all these factors.

 \begin{figure}[t!]
    \centering
 \includegraphics[]{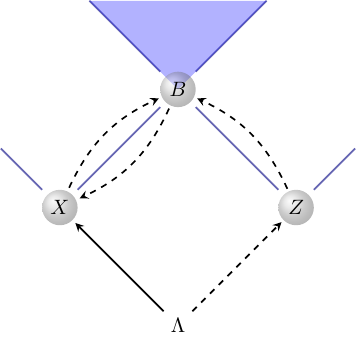}
    \caption{{\bf A causal loop that does not lead to superluminal signalling in Minkowski space-time~\cite{VilasiniColbeckPRL}.} The operational causal structure associated with the model is given in black, circled variables are observed nodes, while uncircled ones are not and the black arrows denote causation. Space-time information is given in blue with time along the vertical and space along the horizontal axis. The solid lines represent light-like surfaces and the shaded region corresponds to the joint future of the space-time locations of $X$ and $Z$ in all cases. In~\cite{VilasiniColbeckPRL}, we have constructed a causal model associated with this cyclic causal structure which satisfies the following properties 1. the model leads to correlations satisfying the conditions \hyperref[cond: NS3p]{(NS3$'$)} 2. it leads to affects relations that allow the cyclic causal influences to be operationally certified i.e., it is an affects causal loop and 3. it does not lead to superluminal signalling in the given space-time embedding. Notice that even though the correlations satisfy the non-signalling constraints \hyperref[cond: NS3p]{(NS3$'$)}, this example does not correspond to a Bell scenario according to our Definition~\ref{def: Bell_scenario} since $B$ is not freely chosen (it is not a parentless node).}
   \label{fig: loop_PRL}
\end{figure}

\begin{remark}[Different definitions of free choice]
In this paper, we have modelled the freely chosen variables (such as measurement settings) as parentless nodes in our information-theoretic causal structure, which is a common way to model free choice within causal modelling approaches to Bell scenarios. This ensures (by the d-separation property) that if $B$ is freely chosen, then it can only be correlated with variables $Y$ that are descendants of $B$ in the causal structure. When a causal model is embedded in space-time such that every directed edge in its causal structure flows from past to future in the space-time (i.e., there is no superluminal causation), then this model of free choice implies the definition of free choice proposed by Colbeck and Renner (CR)~\cite{CR_ext,CR2013} which states that a freely chosen variable $B$ can only be correlated with variables $Y$ that are in its future lightcone. In Bell scenarios with jamming correlations however, the principle of no superluminal causation is violated (cf.\ Lemma~\ref{lemma: jamming_causation}), and in general, the settings of the scenario would not be free according to the CR notion of free choice (although they are, according to the causal modelling definition). In~\cite{Horodecki2019}, a relaxation of the CR free choice definition was proposed, in order to be applicable to jamming correlations. This states that a freely chosen variable $B$ can be correlated with sets of variables whose joint future is contained in the future of $B$. Using this definition of free choice, as well, the claim of~\cite{Horodecki2019} that \hyperref[cond: NS3p]{(NS3$'$)} is necessary and sufficient for ruling out causal loops does not hold. This is because, the causal loop of~\cite{VilasiniColbeckPRL} (see also Figure~\ref{fig: loop_PRL}) leads to correlations satisfying \hyperref[cond: NS3p]{(NS3$'$)} and moreover, according to the definition of free choice suggested in~\cite{Horodecki2019}, the variable $B$ in this example would be regarded as a freely chosen variable even though it is involved in a causal loop.  In contrast, according to the definition used in the present work, that free choices are parentless, the variable $B$ of this example is not a free choice. 
\end{remark}

\section{Conclusions and outlook}
\label{sec: conclusions}

This work formalises the connections between non-signalling constrains in Bell scenarios and relativistic principles of causality through an approach based on causal modelling and causal inference~\cite{VilasiniColbeckPRA, VilasiniColbeckPRL}. This enabled us to go beyond previous analyses that typically only consider correlations, to consider both correlations and the effects of arbitrary interventions which are captured by the concept of \emph{higher-order affects relations}. In particular, this highlights the precise assumptions under which our usual understanding of relativistic causality in Bell scenarios holds true as well as how we can formulate such fundamental principles without these assumptions. Moreover, we have shown that post-quantum theories admitting jamming non-local correlations in Bell scenarios necessarily violate certain fundamental causality principles (both for information-theoretic and spatio-temporal causality notions), which has important consequences for the physicality of such theories. We note here another recent work pointing to problems with the physicality of jamming correlations, through arguments based on monogamy relations~\cite{Weilenmann2023}. Together with~\cite{VilasiniColbeckPRA, VilasiniColbeckPRL}, this work lays the basis for a rigorous causal modelling approach to studying causation, correlations and interventions in quantum and post-quantum theories, as well as their connection to relativistic causality principles in a space-time. We now outline some interesting future directions stemming from our work. 

\bigskip

{\bf Jamming theories and physical principles.} An important motivation for analysing post-quantum theories is to gain deeper insights into the physical principles that single out quantum theory. Our results of Section~\ref{sec: jamming_sig} do so by bringing to light further properties of post-quantum theories that admit jamming non-local correlations, such as the violation of certain causality principles (and the potential preservation of others). An interesting future direction would be to formalise post-quantum jamming theories in a similar manner to generalised probabilistic theories which will enable a further systematic study of operational principles that are satisfied/not satisfied in these theories. One immediate challenge in doing this is that the post-quantum jamming scenario requires a particular space-time configuration, while GPTs can be formalised independently of a space-time structure. Disentangling the notions of causality and space-time through operational considerations, as here is a first step towards such a more general characterisation of post-quantum jamming theories.

\bigskip

{\bf Other principles of relativistic causality.}
In this work, we have considered three principles associated with relativistic causality: no superluminal causation, no superluminal signalling and no causal loops. We have discussed how these are distinct, for instance jamming theories violate the first but can be made consistent with the second, and there are scenarios that satisfy the second and violate the third~\cite{VilasiniColbeckPRL} as well as those satisfying the third and violating the second (Example~\ref{example: non-necc1}). 
Our framework~\cite{VilasiniColbeckPRA, VilasiniColbeckPRL} also enables other relativistic causality principles to be formalised and their relationships to be studied. Other principles include (a) no signalling to the past, (b) no causation to the past, (c) a space-time variable can be accessed only within its future lightcone, or (d) a space-time variable can be accessed  everywhere within its future lightcone and nowhere else. Principles (a)--(d) are distinct from each other; (c) and (d) are distinct because a theory where a space-time variable (more generally, ordered random variable in our framework) can only be accessed within a strict subset of the future lightcone would impose different necessary and sufficient conditions for ensuring no superluminal signalling that a theory where it can be accessed everywhere in the future light-cone (see~\cite{VilasiniColbeckPRA} for details). Moreover the relationships between these principles that hold within special relativity theory need not hold within this more general framework, as witnessed by the example of causal loops without superluminal signalling~\cite{VilasiniColbeckPRL}. An interesting avenue for future work, which also relates to the questions in the previous paragraph, is to apply this approach to study the relationship between these principles in different physical theories. 

\bigskip

{\bf Relativistic information processing tasks in post-quantum theories.}
A closely related avenue for future research, relating to the previous paragraphs is to study relativistic information processing tasks in different post-quantum theories. In the context of quantum theory, it has been shown that when considering information processing tasks (such as cryptographic tasks) distributed in space-time and imposing relativistic principles of causality, new possibility and impossibility results emerge due to the trade-offs between the additional constraints imposed by relativistic principles both on honest players and on adversaries~\cite{Kent_RBC, CK1, Colbeck2, Vilasini_crypto, sandfuchs2023}. In a similar manner, one could consider tasks within post-quantum theories constrained by different relativistic principles, which can also provide an operational way to distinguish between different relativistic causality principles in a theory-independent manner. Ref.~\cite{Horodecki2019} suggests that theories with jamming correlations can violate the monogamy of non-local correlations and thus offer a different information-processing potential compared to quantum theory and other post-quantum theories such as box-world. Our work suggests that a distinguishing feature between jamming and these other theories is that the former is constrained only by no superluminal signalling but the latter also respect the stronger principle no superluminal causation. In an upcoming work~\cite{PRLinprep}, we further analyse the distinction between these two principles in terms of information-processing. 

\bigskip

{\bf Causal inference using higher-order affects relations.}
Higher-order affects relations introduced in~\cite{VilasiniColbeckPRA, VilasiniColbeckPRL} capture general interventions in a causal model and can allow more causal relations to be inferred from observed interventional data. In particular, already in a causal model over just 4 observed nodes (such as the bipartite Bell scenario), we can have almost 200 distinct higher-order affects relations of the form $S$ affects $S_1$ given do$(S_2)$ where $S$, $S_1$, $S_2$ are arbitrary disjoint subsets of the observed nodes with $S$ and $S_1$ being non-empty. In general, given two observed nodes $N_1$ and $N_2$ the following three conditions are distinct and do not imply one another~\cite{VilasiniColbeckPRA}, (a) $N_1$ and $N_2$ are correlated (i.e., $P_{\mathcal{G}}(N_1N_2)\neq P_{\mathcal{G}}(N_1)P_{\mathcal{G}}(N_2)$), (b) $N_1$ affects $N_2$ and (c) there exists a set of observed nodes $S$ such that $N_1$ affects $N_2$ given do$(S)$. This means that to perform causal inference in a general setting, one would typically have to consider, correlations, zeroth and higher-order affects relations. In the special case of Bell scenarios, Theorems~\ref{theorem: superlum_bipart} and~\ref{theorem: superlum_tripart} show that the usual non-signalling constraints on the correlations along with knowing the absence of some affects relations is necessary and sufficient to infer the absence of several other (possibly higher-order) affects relations which result from arbitrary interventions on the model, and to rule out all affects relations that could lead to superluminal signalling in the given space-time embedding. 

Thus these theorems which are about relativistic causality principles may be of independent interest for causal inference in non-classical theories, and where we can have causal fine-tuning.  In future work, it would be promising to study this problem beyond Bell scenarios, and to identify different sets of conditions/assumptions under which we can infer the presence/absence of higher-order affects relations from correlations. This is closely related to the problem of identifiability of causal structures which is widely studied within classical causal models. It is also of practical interest because it may be physically or ethically infeasible to perform certain interventions in a real-world setting, and it is therefore desirable to infer as many causal influences as possible while performing as few interventions as possible. The techniques of~\cite{VilasiniColbeckPRA, VilasiniColbeckPRL} and the present work may bear relevance beyond physics, and to other scientific disciplines where causal inference is widely used.

\bigskip
\noindent{\it Acknowledgements.}---Part of this research was carried out when VV was a PhD candidate supported by a Scholarship from the Department of Mathematics at the University of York. VV also acknowledges support from an ETH Postdoctoral Fellowship, the ETH Zurich Quantum Center, the Swiss National Science Foundation via project No.\ 200021\_188541 and the QuantERA programme via project No.\ 20QT21\_187724. 

\newpage
\appendix 

\section{Further technical details relating to our framework}
\label{appendix: framework}

\subsection{D-separation and causal models}
Here we provide formal definitions of some of the technical concepts that are relevant for our proofs which were only intuitively introduced in the main text. The first of these is d-separation, originally introduced in~\cite{Pearl2009, Spirtes2001}, which is based on the idea of blocked paths in a directed graph.

\begin{definition}[Blocked paths]
Let $\mathcal{G}$ be a directed graph in which $X$ and $Y\neq X$ are nodes and $Z$ be a
set of nodes not containing $X$ or $Y$.  A path from $X$ to $Y$ is
said to be \emph{blocked} by $Z$ if it contains either $A\longrsquigarrow W\longrsquigarrow B$
with $W\in Z$, $A\longlsquigarrow W\longrsquigarrow B$
with $W\in Z$ or $A\longrsquigarrow W\longlsquigarrow B$ such that neither $W$ nor any descendant of $W$ belongs to $Z$.
\end{definition}

\begin{definition}[d-separation]
\label{def: d-sep}
Let $\mathcal{G}$ be a directed graph in which $X$, $Y$ and $Z$ are disjoint
sets of nodes.  $X$ and $Y$ are \emph{d-separated} by $Z$ in
$\mathcal{G}$, denoted as $(X\perp^d Y|Z)_{\cG}$ (or simply $X\perp^d Y|Z$ if $\cG$ is obvious from context) if for every variable in $X$ and variable in $Y$ there is no path between them, or if every path from a variable in $X$ to a variable in
$Y$ is \emph{blocked} by $Z$. Otherwise, $X$ is said to be \emph{d-connected} with $Y$ given $Z$.
\end{definition}

For example, if $X$ and $Y$ are two disjoint subsets of nodes in $\cG$, $(X\perp^d Y)_{\cG}$ (here $Z=\emptyset$) is equivalent to saying that there are no directed paths between $X$ and $Y$, and also no common ancestors of $X$ and $Y$. In the graphs $X\longlsquigarrow Z \longrsquigarrow Y$ and $X\longrsquigarrow Z \longrsquigarrow Y$, we have $X\not \perp^d Y$ but $X\perp^d Y|Z$ and in the graph $X\longrsquigarrow Z \longlsquigarrow Y$, we have $X \perp^d Y$ but $X\not \perp^d Y|Z$. 

For completeness of the discussion here, we recall the d-separation property outlined in Equation~\eqref{eq: dsep_prop} in the following definition. Note that this property only considers d-separation relations between observed nodes which correspond to classical random variables.
\begin{definition}[d-separation property for an observed distribution relative to a causal structure] 
\label{definition: compatdist}
Let $\mathcal{G}$ be a causal structure associated with a set $N_{\mathrm{obs}}$ of observed nodes. A distribution $P$ on $N_{obs}$ is said to satisfy the \emph{d-separation property} with respect to $\mathcal{G}$ iff for all disjoint subsets $X$, $Y$ and $Z$ of $\{X_1,...,X_n\}$,
\begin{equation*}
    (X\perp^d Y|Z)_{\cG} \quad\Rightarrow\quad (X\indep Y|Z)_P, \quad \text{where the latter is shorthand for}\quad P(XY|Z)=P(X|Z)P(Y|Z).
\end{equation*}

\end{definition}

Our framework of~\cite{VilasiniColbeckPRA} then employs a rather minimal and theory-independent definition of a causal model (Definition~\ref{def: causalmodel}) as a directed graph $\mathcal{G}$ together with an observed distribution $P_{\mathcal{G}}$ over the observed nodes of $\mathcal{G}$ which respects the d-separation property. Definition~\ref{def: causalmodel} of a causal model is general enough that most previous approaches to classical and non-classical causal models also satisfy its requirements. Specifically, previous frameworks have defined classical and non-classical (quantum or post-quantum) causal models on directed acyclic graphs (in terms of the causal mechanisms) and such models have been shown to satisfy the d-separation property~\cite{Pearl2009, Spirtes2001, Henson2014, Barrett2020A}. In particular, for classical causal models defined on acyclic graphs, one can associate a conditional probability $P(X|\mathrm{par}(X))$ with every node $X$ in the causal graph where par$(X)$ is the set of all parents of $X$ in the graph (in particular, this can be readily computed given a distribution $P(X)$ over every parentless node $X$ and a function $f_X: \mathrm{par}(X)\mapsto X$ for every non-parentless node in the graph). The overall distribution over all nodes (observed and unobserved), is given as follows where we take $\{X_1,...,X_n\}$ to be the set of all nodes in the graph $\mathcal{G}$

\begin{equation}
\label{eq: markov}
    P_{\mathcal{G}}(X_1...X_n)=\prod_{i=1}^n P_{\mathcal{G}}(X_i|\mathrm{par}(X_i))
\end{equation}

This is often known as the causal Markov, global directed Markov or simply the Markov property in the literature. The observed distribution is then obtained by marginalising over the unobserved nodes and any observed distribution obtained in this manner from an acyclic graph will satisfy the d-separation property. Conversely, for any causal model (according to our Definition~\ref{def: causalmodel}), imposing that the graph is acyclic and that there are no latent nodes, we can recover the Markov property from the d-separation property~\cite{Pearl2009}. However, the Markov property generally fails when considering cyclic graphs, even in the case of classical causal models. Nevertheless, there are several examples of classical~\cite{Forre2017} and non-classical models~\cite{VilasiniColbeckPRA} defined on cyclic graphs that satisfy the more general, d-separation property. 

\subsection{Higher-order affects relations and causal inference}

We review some useful lemmas about (higher-order) affects relations derived in~\cite{VilasiniColbeckPRA}, which are relevant for causal inference in cyclic, fine-tuned and non-classical causal models. These results apply to all causal models that satisfy our minimal definition, i.e., they are independent of the theory describing the unobserved nodes and follow entirely from the validity of the d-separation property for the observed nodes. We then outline the notion of irreducibility for higher-order affects relations which is also central to our compatibility condition for space-time embeddings of a causal model, which captures relativistic causality principles.

\begin{lemma}
\label{lemma:exogenous}
If $X$ is a subset of observed parentless nodes of a causal graph $\cG$, then for any subset $Y$ of nodes disjoint to $X$ the do-conditional and the regular conditional with respect to $X$ coincide i.e.,
$$P_{\cG_{do(X)}}(Y|X)=P_{\cG}(Y|X).$$
In other words, for any subset $X$ of the observed parentless nodes, correlation between $X$ and a disjoint set of observed nodes $Y$ in $\cG$ guarantees that $X$ affects $Y$. 
\end{lemma}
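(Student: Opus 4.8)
The plan is to recognize that this lemma is essentially a specialization of the consistency requirement already built into the definition of a post-intervention causal model (Definition~\ref{def:post_intervention}), followed by a routine marginalization. The key preliminary observation is that, since every node in $X$ is parentless in $\cG$, intervening on $X$ removes no edges, so $\cG_{\mathrm{do}(X)}=\cG$ as directed graphs. This alone does \emph{not} force $P_{\cG_{\mathrm{do}(X)}}$ and $P_{\cG}$ to agree, because a single graph admits many distributions satisfying the d-separation property (Definition~\ref{definition: compatdist}); hence the real content must come from the consistency clause of Definition~\ref{def:post_intervention}, which is precisely the indispensable ingredient.

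First I would invoke that consistency clause with its outer intervention set taken to be empty. Writing the generic subset appearing in Definition~\ref{def:post_intervention} as $W$ (to avoid a clash with the target set $Y$ of the lemma) and setting $W=\emptyset$, we have $\cG_{\mathrm{do}(\emptyset)}=\cG$, and the hypothesis that $X$ consists of parentless nodes of $\cG$ is exactly the premise that $X$ contains only parentless nodes of $\cG_{\mathrm{do}(W)}$. The clause then yields
\[
 P_{\cG_{\mathrm{do}(X)}}(S|X)=P_{\cG}(S|X),
\]
where $S=N_{\mathrm{obs}}\setminus X$ is the set of all remaining observed nodes. Next I would descend from the full joint $S$ to an arbitrary observed subset $Y\subseteq S$ disjoint from $X$: since both sides are conditioned on the same $X$, summing over the values of $S\setminus Y$ commutes with the conditioning and gives $P_{\cG_{\mathrm{do}(X)}}(Y|X)=P_{\cG}(Y|X)$, which is the first assertion. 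Here $Y$ must consist of observed (hence classical) nodes for the conditional to be defined, consistent with the ``in other words'' restatement.

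For the second assertion I would unpack the definition of the affects relation (Definition~\ref{definition: affects}). Assume $X$ and $Y$ are correlated in $\cG$, i.e.\ $P_{\cG}(XY)\neq P_{\cG}(X)P_{\cG}(Y)$. Then there exists a value $x$ with $P_{\cG}(x)>0$ and $P_{\cG}(Y|X=x)\neq P_{\cG}(Y)$ as distributions over $Y$. Combining with the first part gives $P_{\cG_{\mathrm{do}(X)}}(Y|X=x)=P_{\cG}(Y|X=x)\neq P_{\cG}(Y)$, which is exactly the statement that $X$ affects $Y$.

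I do not anticipate a genuine obstacle: the argument is short once one sees that the hard work is already discharged by the definition's consistency clause, with only marginalization and the unpacking of Definition~\ref{definition: affects} remaining. The points requiring care are purely bookkeeping: renaming the definition's generic subset to avoid the collision with the lemma's $Y$, verifying that the marginalization is legitimate because the conditioning variable is unchanged across the equality, and restricting $Y$ to observed nodes. A longer ``from scratch'' route through d-separation would stall precisely because $\cG_{\mathrm{do}(X)}=\cG$ renders the graph-level information vacuous, which reconfirms that the consistency condition is the essential step.
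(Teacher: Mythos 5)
Your proof is correct and follows the intended route: the paper states this lemma without proof (importing it from~\cite{VilasiniColbeckPRA}), but the argument you give --- that $X$ parentless makes $\cG_{\mathrm{do}(X)}=\cG$, that the consistency clause of Definition~\ref{def:post_intervention} with empty outer intervention set then forces $P_{\cG_{\mathrm{do}(X)}}(S|X)=P_{\cG}(S|X)$, and that marginalization plus Definition~\ref{definition: affects} finishes both assertions --- is exactly how the paper itself deploys this fact (e.g.\ in the proof of Theorem~\ref{theorem: ns2_affects}). Your remark that graph equality alone does not suffice and that the consistency clause is the essential ingredient is an accurate and worthwhile clarification.
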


\begin{lemma}
\label{lemma: affects}
For any two subsets $X$ and $Y$ of the observed nodes in a causal model associated with a causal graph $\cG$,
\begin{enumerate}
    \item $X$ affects $Y$ $\Rightarrow$ there exists a directed path from $X$ to $Y$ in $\cG$,
    \item $(X\perp^d Y)_{\cG_{do(X)}}$ $\Rightarrow$ $X$ \emph{does not affect} $Y$,
    \item $(X\nindep Y)_{\cG_{do(X)}}\Rightarrow$ $X$ affects $Y$.
\end{enumerate}
\end{lemma}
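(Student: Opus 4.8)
The plan is to prove the three statements in the order third, second, first, exploiting the fact that the first two are essentially contrapositives of one another once d-separation in $\cG_{\mathrm{do}(X)}$ has been understood graph-theoretically.

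First I would dispose of the third statement, which follows by directly unwinding Definition~\ref{definition: affects}. For the contrapositive, suppose $X$ does not affect $Y$, so that $P_{\cG_{\mathrm{do}(X)}}(Y|X=x)=P_{\cG}(Y)$ for every value $x$. Since the right-hand side is independent of $x$, averaging over the interventional marginal $P_{\cG_{\mathrm{do}(X)}}(X)$ gives $P_{\cG_{\mathrm{do}(X)}}(Y)=P_{\cG}(Y)$, and hence $P_{\cG_{\mathrm{do}(X)}}(Y|X)=P_{\cG_{\mathrm{do}(X)}}(Y)$, i.e.\ $(X\indep Y)_{\cG_{\mathrm{do}(X)}}$. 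This is exactly the negation of the hypothesis of the third statement, which proves it.

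Second, for the statement that $(X\perp^d Y)_{\cG_{\mathrm{do}(X)}}$ implies $X$ does not affect $Y$, I would first record a purely graph-theoretic fact: because all incoming edges to $X$ are deleted in $\cG_{\mathrm{do}(X)}$, the set $X$ is parentless there and so has no ancestors. Consequently any path (with empty conditioning set) that d-connects $X$ to $Y$ can contain neither a collider nor a common-ancestor segment, so it must be a directed path out of $X$; passing to the last node of $X$ along such a path shows that $(X\not\perp^d Y)_{\cG_{\mathrm{do}(X)}}$ is equivalent to the existence of a directed path from $X$ to $Y$ in $\cG$ itself. Granting the hypothesis, the d-separation property (Definition~\ref{definition: compatdist}) applied in $\cG_{\mathrm{do}(X)}$ yields $P_{\cG_{\mathrm{do}(X)}}(Y|X=x)=P_{\cG_{\mathrm{do}(X)}}(Y)$ for all $x$, and it then remains to identify this interventional marginal with the original one, $P_{\cG_{\mathrm{do}(X)}}(Y)=P_{\cG}(Y)$, after which $X$ does not affect $Y$ follows from Definition~\ref{definition: affects}. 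The first statement is then the contrapositive of the second read through the equivalence just established: if there is no directed path from $X$ to $Y$ in $\cG$ then $(X\perp^d Y)_{\cG_{\mathrm{do}(X)}}$, so the second statement forces $X$ not to affect $Y$.

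The main obstacle is the marginal-invariance step $P_{\cG_{\mathrm{do}(X)}}(Y)=P_{\cG}(Y)$ for a non-descendant $Y$ of $X$. In the classical acyclic setting this is the familiar statement that intervening on $X$ cannot alter the distribution of variables that are upstream of, or causally unrelated to, $X$, and it is proved by truncating the Markov factorization~\eqref{eq: markov}. Here, however, the framework is theory-independent and admits latent quantum or post-quantum and cyclic nodes, so no such factorization is available and the identity must instead be extracted from the consistency requirement built into Definition~\ref{def:post_intervention} (of which Lemma~\ref{lemma:exogenous} is the special case where $X$ is already parentless in $\cG$). The delicate point is that this requirement directly relates $\cG_{\mathrm{do}(X)}$ to $\cG$ only when $X$ is parentless in $\cG$, whereas in general $X$ has parents; bridging this gap---for instance by restricting attention to the ancestral subgraph of $Y$, which is disjoint from $X$ by the no-directed-path hypothesis, or by introducing the interventions in stages---is where the substance of the argument lies. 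Everything else reduces to bookkeeping with d-separation and the definition of the affects relation.
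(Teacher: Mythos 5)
First, a caveat: the paper states Lemma~\ref{lemma: affects} only as a result imported from~\cite{VilasiniColbeckPRA} (``we review some useful lemmas\ldots derived in\ldots'') and gives no proof of it here, so there is no in-paper argument to compare yours against; I am assessing your proposal on its own terms. Your treatment of the third statement is complete and correct, and your graph-theoretic analysis is also right: since every node of $X$ is parentless in $\cG_{\mathrm{do}(X)}$, a path d-connecting $X$ to $Y$ given the empty set can contain no collider and cannot enter $X$ through a fork, so $(X\not\perp^d Y)_{\cG_{\mathrm{do}(X)}}$ holds precisely when there is a directed path from $X$ to $Y$ in $\cG$ (your ``last node of $X$ on the path'' device correctly handles directed paths of $\cG$ that pass through other nodes of $X$). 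This validly reduces statements 1 and 2 to the single claim that, under the d-separation hypothesis, $P_{\cG_{\mathrm{do}(X)}}(Y)=P_{\cG}(Y)$.

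That claim is a genuine gap, which you name but do not close. The strategies you gesture at (restricting to the ancestral subgraph of $Y$, or introducing the interventions in stages) do not obviously go through with only Definition~\ref{def:post_intervention} in hand: its consistency clause relates $P_{\cG_{\mathrm{do}(XY)}}$ to $P_{\cG_{\mathrm{do}(Y)}}$ only when the newly intervened set is \emph{parentless} in $\cG_{\mathrm{do}(Y)}$, so when $X$ has parents in $\cG$ the clause never directly links $P_{\cG_{\mathrm{do}(X)}}$ to $P_{\cG}$, and the only other requirement --- that $P_{\cG_{\mathrm{do}(X)}}$ satisfy d-separation relative to $\cG_{\mathrm{do}(X)}$ --- constrains the internal independences of the post-intervention distribution but not its marginals. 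The missing fact is exactly the $X=W=\emptyset$ instance of Pearl's Rule~3 of do-calculus ($P(y|\mathrm{do}(z))=P(y)$ whenever $Y\perp^d Z$ in the graph with edges into $Z$ deleted), which the paper asserts holds in this framework with the proof deferred to~\cite{VilasiniColbeckPRA}. Invoking that rule would complete your argument; deriving it from the minimal definitions is the substantive content you have omitted, and it is not mere bookkeeping.
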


In the main text, we have defined higher-order affects relations which take the form $X$ affects $Y$ given do$(Z)$ and capture signalling from $X$ to $Y$ given an intervention on $Z$. More generally,~\cite{VilasiniColbeckPRA} defined conditional higher-order affects relations of the form $X$ affects $Y$ given $\{\mathrm{do}(Z),W\}$ which captures signalling from $X$ to $Y$ given an intervention on $Z$ and a post-selection on $W$ (without intervention). 

\begin{definition}[Conditional higher-order affects relation]

Consider a causal model associated with a causal graph $\cG$ over a set $N_{ons}$ of observed nodes and an observed distribution $P$. Let $X$ , $Y$, $Z$ and $W$ be four pairwise disjoint subsets of observed nodes in $N_{\mathrm{obs}}$. We say that $X$ affects $Y$ given $\{\mathrm{do}(Z),W\}$ if there exists values $x$ of $X$, $z$ of $Z$ and $w$ of $W$ such that
\begin{equation}
   P_{\cG_{\mathrm{do}(X Z)}}(Y|X=x,Z=z, W=w)\neq P_{\cG_{\mathrm{do}(Z)}}(Y|Z=z, W=w),
\end{equation}
which we simply denote as $P_{\cG_{\mathrm{do}(X Z)}}(Y|XZW)\neq P_{\cG_{\mathrm{do}(Z)}}(Y|ZW)$. 
\end{definition}

\begin{restatable}{lemma}{HOcauseB}
\label{lemma:HOaffects}
For a causal model over a set $S$ of RVs where $X$ is an RV, $Y$, $Z$ and $W$ are any pairwise disjoint subsets of $S$ that do not contain $X$,
  \begin{enumerate}
      \item $X$ affects $Y$ given do$(Z)$
 $\Rightarrow$ there exists an element $e_Y\in Y$ such that $X$ is a cause of $e_Y$.
 \item $X$ affects $Y$ given $\{\mathrm{do}(Z),W\}$ 
 $\Rightarrow$  there exists an element $e_{YW}\in Y W$ such that $X$ is a cause of $e_{YW}$.
 \end{enumerate}
\end{restatable}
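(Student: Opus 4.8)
The plan is to prove both parts by contraposition, reducing the causal-influence conclusion to a d-separation statement in the doubly-mutilated graph $\cG_{\mathrm{do}(XZ)}$ and then discharging the change of graph via Pearl's third rule of do-calculus, which (as recalled in the main text) holds in this framework as a consequence of the d-separation property alone. Since Part 2 contains Part 1 as the special case $W=\emptyset$, I would prove the general statement directly: assuming $X$ is \emph{not} a cause of any element of $YW$ (no directed path in $\cG$ from $X$ to any node of $Y\cup W$), show that $X$ does not affect $Y$ given $\{\mathrm{do}(Z),W\}$, i.e.\ $P_{\cG_{\mathrm{do}(XZ)}}(Y\mid X,Z,W)=P_{\cG_{\mathrm{do}(Z)}}(Y\mid Z,W)$ for all values. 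This generalizes the zeroth-order statement of Lemma~\ref{lemma: affects}(1).

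The crux is the graph-theoretic claim $(X\perp^d Y\mid ZW)_{\cG_{\mathrm{do}(XZ)}}$, which I would establish by analysing an arbitrary path $\pi$ from $X$ to a node $y\in Y$ in $\cG_{\mathrm{do}(XZ)}$. Because intervention removes all incoming edges, $X$ and every node of $Z$ are parentless in this graph, so $\pi$ must leave $X$ along an outgoing edge. Since by hypothesis there is no directed path $X\to Y$, $\pi$ cannot be wholly directed and hence contains a collider; let $C$ be the first one encountered from $X$. The initial segment $X\longrsquigarrow\cdots\longrsquigarrow C$ is then directed, so $C$ is a proper descendant of $X$. To block $\pi$ it suffices that this collider stay inactive, i.e.\ that neither $C$ nor any descendant of $C$ lie in the conditioning set $Z\cup W$. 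No such node can lie in $Z$: every node of $Z$ is parentless in $\cG_{\mathrm{do}(XZ)}$, whereas $C$ and each of its descendants carries an incoming edge. And no such node can lie in $W$ either, for otherwise the directed path $X\longrsquigarrow\cdots\longrsquigarrow C\longrsquigarrow\cdots$ would exhibit $X$ as a cause of a $W$-node, contradicting the contrapositive hypothesis. Hence every path from $X$ to $Y$ is blocked and the d-separation holds.

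With the separation in hand, the d-separation property (Definition~\ref{definition: compatdist}) applied to the post-intervention distribution gives $P_{\cG_{\mathrm{do}(XZ)}}(Y\mid X,Z,W)=P_{\cG_{\mathrm{do}(XZ)}}(Y\mid Z,W)$, removing the dependence on $X$ within the mutilated model. The remaining step is to identify $P_{\cG_{\mathrm{do}(XZ)}}(Y\mid Z,W)$ with $P_{\cG_{\mathrm{do}(Z)}}(Y\mid Z,W)$, i.e.\ to delete the intervention $\mathrm{do}(X)$. This is exactly Rule 3 of do-calculus, whose activation condition for deleting $\mathrm{do}(X)$ from $P(Y\mid\mathrm{do}(X),\mathrm{do}(Z),W)$ is the separation $(Y\perp^d X\mid Z,W)$ in the graph obtained by cutting the incoming edges of $Z$ and of the non-ancestor-of-$W$ part of $X$; under our hypothesis $X$ is not an ancestor of any $W$-node, so this part is all of $X$ and the relevant graph is precisely $\cG_{\mathrm{do}(XZ)}$, with the required separation being the one just proved. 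Combining the two equalities yields that $X$ does not affect $Y$ given $\{\mathrm{do}(Z),W\}$, proving the contrapositive; setting $W=\emptyset$ recovers Part 1.

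I anticipate the main obstacle to be the bookkeeping around the change of graph rather than the separation argument: one must check that the single hypothesis ``$X$ not a cause of $YW$'' simultaneously guarantees that a collider exists, keeps that collider (and its descendants) out of $W$, and places us in the branch of Rule 3 where the mutilated graph is $\cG_{\mathrm{do}(XZ)}$. As an independent check on Part 2 I would also note the lower-tech route that avoids invoking Rule 3 in the conditional setting: the conditional relation $X$ affects $Y$ given $\{\mathrm{do}(Z),W\}$ implies the unconditional relation $X$ affects $YW$ given $\mathrm{do}(Z)$~\cite{VilasiniColbeckPRA}, to which Part 1 applies with $YW$ in place of $Y$, yielding the same conclusion.
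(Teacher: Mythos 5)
Your proposal is correct, but note that this paper does not actually contain a proof of Lemma~\ref{lemma:HOaffects}: it is stated in Appendix~\ref{appendix: framework} as a review of a result derived in~\cite{VilasiniColbeckPRA}, so there is no in-paper argument to compare against, and your write-up effectively supplies the omitted derivation. On its own terms the argument is sound. The d-separation step is the heart of it and is handled correctly: since $X$ and all of $Z$ are parentless in $\cG_{\mathrm{do}(XZ)}$, any path from $X$ to $Y$ that is not wholly directed must reverse direction for the first time at a collider $C$ that is a descendant of $X$; $C$ and its descendants all carry incoming edges and so cannot lie in $Z$, and they cannot lie in $W$ without contradicting the hypothesis that $X$ causes no element of $YW$ — so the collider is inactive and the path is blocked. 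The subsequent appeals to the d-separation property of the post-intervention model (Definition~\ref{def:post_intervention}) and to Rule~3 of do-calculus are both licensed by machinery the paper explicitly asserts holds in this framework, and you correctly check that the no-cause hypothesis places you in the branch of Rule~3 whose mutilated graph is exactly $\cG_{\mathrm{do}(XZ)}$. Two minor observations: Rule~3 alone already yields $P_{\cG_{\mathrm{do}(XZ)}}(Y|X{=}x,Z{=}z,W{=}w)=P_{\cG_{\mathrm{do}(Z)}}(Y|Z{=}z,W{=}w)$ for every $x$, so your intermediate d-separation-property step is not strictly needed as a separate stage (though it does no harm and makes the role of the separation transparent); and your fallback route for Part~2 — passing to the unconditional relation $X$ affects $YW$ given do$(Z)$ and applying Part~1 — is exactly the reduction the paper itself invokes in the Remark following Theorem~\ref{theorem: superlum_tripart}, so it is a legitimate alternative provided Part~1 is proved first.
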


\begin{definition}[Reducible and irreducible affects relations]
\label{definition: ReduceAffects}
If for all proper subsets $s_X$ of $X$, $s_X$ affects $Y$ given $\{\mathrm{do}(Z \tilde{s}_X),W\}$, the affects relation $X$ affects $Y$ given $\{\mathrm{do}(Z),W\}$ is said to be \emph{irreducible}, where $\tilde{s}_X:=X\backslash s_X$. Otherwise, it is said to be \emph{reducible}.
\end{definition}
\begin{restatable}{lemma}{Reduce}
\label{lemma:reduce}
For every reducible conditional higher-order affects relation $X$ affects $Y$ given $\{\mathrm{do}(Z),W\}$, there exists a proper subset $\tilde{s}_X$ of $X$ such that $\tilde{s}_X$ affects $Y$ given $\{\mathrm{do}(Z),W\}$ i.e., $X$ affects $Y$ given $\{\mathrm{do}(Z),W\}$ can be reduced to the affects relation $\tilde{s}_X$ affects $Y$ given $\{\mathrm{do}(Z),W\}$. 
\end{restatable}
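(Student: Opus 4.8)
The plan is to prove this by direct definition-chasing, using the do-conditional $P_{\cG_{\mathrm{do}(XZ)}}(Y\mid X,Z,W)$ as a common bridge term that appears in both the reducibility hypothesis and the affects relation we start from. First I would invoke Definition~\ref{definition: ReduceAffects}: since $X$ affects $Y$ given $\{\mathrm{do}(Z),W\}$ is reducible, there exists a non-empty proper subset $s_X\subsetneq X$ such that $s_X$ does \emph{not} affect $Y$ given $\{\mathrm{do}(Z\tilde{s}_X),W\}$, where $\tilde{s}_X:=X\setminus s_X$. This $\tilde{s}_X$ is the proper subset I will exhibit; it is non-empty because $s_X$ is proper, and it is a proper subset of $X$ because $s_X$ is non-empty, so it is a legitimate first argument of an affects relation.

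Next I would write out the two relevant statements at the level of conditional distributions. Using the set identity $s_X Z\tilde{s}_X=XZ$, the failure of $s_X$ to affect $Y$ given $\{\mathrm{do}(Z\tilde{s}_X),W\}$ says that for \emph{all} values,
\[
P_{\cG_{\mathrm{do}(XZ)}}(Y\mid X,Z,W)=P_{\cG_{\mathrm{do}(Z\tilde{s}_X)}}(Y\mid Z,\tilde{s}_X,W),
\]
whereas the hypothesis that $X$ affects $Y$ given $\{\mathrm{do}(Z),W\}$ supplies \emph{particular} values $x$, $z$, $w$ (and some value $y$ of $Y$) at which
\[
P_{\cG_{\mathrm{do}(XZ)}}(Y\mid X,Z,W)\neq P_{\cG_{\mathrm{do}(Z)}}(Y\mid Z,W).
\]

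I would then evaluate the universally valid equality at exactly these witnessing values, using that a value $x$ of $X$ is the same datum as a joint value of $(s_X,\tilde{s}_X)$, and substitute it into the left-hand side of the inequality. This replaces $P_{\cG_{\mathrm{do}(XZ)}}(Y\mid X,Z,W)$ by $P_{\cG_{\mathrm{do}(Z\tilde{s}_X)}}(Y\mid Z,\tilde{s}_X,W)$, yielding
\[
P_{\cG_{\mathrm{do}(Z\tilde{s}_X)}}(Y\mid \tilde{s}_X,Z,W)\neq P_{\cG_{\mathrm{do}(Z)}}(Y\mid Z,W)
\]
at these values, which is precisely the statement that $\tilde{s}_X$ affects $Y$ given $\{\mathrm{do}(Z),W\}$, completing the reduction.

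Since this is essentially a definition-chase, the only real obstacle is bookkeeping the quantifier structure: the reducibility hypothesis furnishes an equality holding for \emph{all} values, while the affects relations are \emph{existential}. The care needed is to apply the universal equality at exactly the existential witnesses produced by the starting relation, and to track the set identities (in particular $s_X\cup\tilde{s}_X=X$, so the do-set $s_X Z\tilde{s}_X$ coincides with $XZ$) so that the bridge term matches syntactically on both sides.
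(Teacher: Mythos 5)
Your proof is correct and is essentially the same definition-unfolding argument by which this lemma is established in the framework paper \cite{VilasiniColbeckPRA} (the present paper only restates the lemma and defers the proof there): the key identity $\mathrm{do}(s_X Z\tilde{s}_X)=\mathrm{do}(XZ)$ turns the universally quantified non-affects equality into a substitution rule for the bridge term $P_{\cG_{\mathrm{do}(XZ)}}(Y\mid X,Z,W)$ at the existential witnesses. Your handling of the quantifier order and of why $\tilde{s}_X$ is a non-empty proper subset is exactly the care the argument requires, so there is nothing to add.
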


\subsection{Compatibility with space-time}

For completeness, we state the original definition of compatibility that was formulated for conditional higher-order affects relations, which capture the most general way of signalling in our framework.

\begin{definition}[Compatibility of a set of affects relations with an embedding in a partial order ($\mathbf{compat}$)]

Let $\mathcal{S}$ be a set of ORVs formed by embedding a set of RVs $S$ in a space-time $\mathcal{T}$ with embedding $\mathscr{E}$. Then a set of affects relations $\mathscr{A}$ is said to be \emph{compatible} with the embedding $\mathscr{E}$ if whenever $X$ affects $Y$ given $\{\mathrm{do}(Z),W\}$ belongs to $\mathscr{A}$, and is irreducible with respect to the affects relations in $\mathscr{A}$, then $\overline{\mathcal{F}}(\cY)\bigcap \overline{\mathcal{F}}(\cW)\bigcap \overline{\mathcal{F}}(\cZ)\subseteq \overline{\mathcal{F}}(\cX) $ with respect to $\mathscr{E}$. 
\end{definition}

The condition captures the idea that in order for Alice to signal to Bob using $X$ affects $Y$ given $\{\mathrm{do}(Z),W\}$, Alice must have access to $X$ and Bob to $Y$, $Z$ and $W$ (since this relation that captures signalling when given additional information about do$(Z)$ and $W$, does not in general imply the possibility of signalling without this information, i.e., that $X$ affects $Y$). Then in order for this affects relation to not lead to signalling outside the space-time future when the RVs are embedded in a space-time, we would require that the joint future of $Y$, $Z$ and $W$ (which is where they can be jointly accessed) in contained in the future of $X$. We note that in the original framework of~\cite{VilasiniColbeckPRA}, we do not a-priory assume that every variable embedded in a space-time can only be accessed in its future lightcone, but instead introduce the concept of an accessible space-time region of an RV, which leads to an additional compatibility condition which sets the accessible regions to be the inclusive future. In this paper, we have set the accessible region to be equal to the inclusive future by default (for clarity) such that violations to compatibility necessarily indicate signalling outside the future light cone, and compatibility therefore captures the principle of no superluminal signalling.

\section{Proofs of all results}
\label{appendix: proofs}

\subsection{Proofs of Lemmas~\ref{lemma: jamming_causation}, \ref{lemma: jamming_finetune} and Theorem~\ref{theorem: jamming}}
\label{appendix: proof_jamming}

\JamCause*
\begin{proof}
The jamming condition $P(XZ|ABC)\neq P(XZ|AC)$ is equivalent to the conditional affects relation $B$ affects $XZ$ given do($AC$), which implies by Lemma~\ref{lemma:HOaffects} and the exogeneity of $A$, $B$ and $C$ that there exists a directed path from $B$ to $X$ or from $B$ to $Z$ i.e., that $B$ is either a cause of $X$ or of $Z$. 
\end{proof}

\JamFinetune*
\begin{proof}
$P(XZ|ABC)\neq P(XZ|AC)$ is equivalent to $(B\not\indep XZ|AC)$, by the d-separation property, we must have $(B\not\perp^d XZ|AC)$ in any underlying causal structure leading to these correlations. This is equivalent to saying that at least one of $(B\not\perp^d X|AC)$ and $(B\not\perp^d Z|AC)$ hold. However $P(X|ABC)=P(X|A)$ and $P(Z|ABC)=P(Z|C)$ imply that $(B\indep X|AC)$ and $(B\indep Z|AC)$ both hold. This means that we must have at least one d-connection relation in the underlying causal structure without the corresponding conditional independence in the observed distribution which implies that the causal model must be fine-tuned.
\end{proof}

\JammingLambda*
\begin{proof}
The most general causal structure over $A$, $B$, $C$, $X$, $Y$, $Z$ and $\Lambda$ where $A$, $B$, $C$ and $\Lambda$ are parentless while $X$, $Y$ and $Z$ are childless is one where $par(X)=par(Z)=par(Y)=\{A,B,C,\Lambda\}$. If $\Lambda$ is observed, then all nodes are classical. Notice that any causal structure over these variables with the above mentioned properties is necessarily acyclic. We can therefore apply the causal Markov condition\footnote{Note that this is not assuming any more than our Definition~\ref{def: causalmodel} together with the given conditions of this Theorem. This is because in any causal model where all nodes are observed, the graph is acyclic and the observed distribution satisfies the d-separation property, the Markov condition follows as a consequence~\cite{Pearl2009}.} to obtain
\begin{equation}
    P(XYZ\Lambda|ABC)=P(\Lambda)P(X|ABC\Lambda )P(Y|ABC\Lambda)P(Z|ABC\Lambda) 
\end{equation}

Now, notice that if we had the affects relation $BC$ affects $X$ given do$(A,\Lambda)$, which in this case is equivalent to $P(X|ABC\Lambda) \neq P(X|A\Lambda)$, then we would have superluminal signalling in the given space-time confirguration since an agent with access to $B$ could signal to an agent with joint access to $X$, $A$ and $\Lambda$ (that can be accessed outside the future light cone of $B$). Therefore, we must have $P(X|ABC\Lambda) =P(X|A\Lambda)$ in order to avoid superluminal signalling in the given space-time embedding. Similarly, we must also have $P(Z|ABC\Lambda Y) =P(Z|C\Lambda)$. Plugging these back in the above equation and marginalising over $\Lambda$ and $Y$, we have.

\begin{equation}
    P(XZ|ABC)=\sum_{\Lambda} P(\Lambda)P(X|A\Lambda)P(Z|C\Lambda)=P(XZ|AC). 
\end{equation}

This contradicts the fact that we have jamming correlations between $A$, $B$, $C$, $X$, $Y$ and $Z$ which requires $P(XZ|ABC)\neq P(XZ|AC)$. Therefore if we do indeed have jamming correlations in a Bell scenario taking place in the jamming space-time configuration, then the involved parties can superluminally signal to each other whenever $\Lambda$ is observed.
\end{proof}

\subsection{Proofs of Theorems~\ref{theorem: ns2_affects} and~\ref{theorem: ns3p_affects}}
\label{appendix: proof_ns_affects}

\NSAffectsBipart*
\begin{proof}
Using the fact that $A$ and $B$ are parentless nodes in a Bell scenario, we have $\mathcal{G}_{\mathrm{do}(S)}\equiv\mathcal{G}$ for all subsets $S$ of $\{A,B\}$. Then denoting $P_{\mathcal{G}}$ as simply $P$ for short, the first non-affects relation $B$ does not affect $X$ given do$(A)$ (which when written explicitly is: $P_{\mathcal{G}_{\mathrm{do}(AB)}}(X|AB)=P_{\mathcal{G}_{\mathrm{do}(A)}}(X|A)$) is equivalent to $P(X|AB)=P(X|A)$ which is one of the two conditions in \hyperref[cond: NS2]{(NS2)}. Similarly, in any scenario where $A$ and $B$ are parentless,  $B$ does not affect $X$ given do$(A)$ is equivalent to the second condition of \hyperref[cond: NS2]{(NS2)}, namely $P(Y|AB)=P(Y|B)$.
\end{proof}

\NSAffectsTripart*
\begin{proof}
Using the fact that $A$, $B$ and $C$ are parentless nodes in a tripartite Bell scenario, we have $\mathcal{G}_{\mathrm{do}(S)}\equiv\mathcal{G}$ for all subsets $S$ of $\{A,B, C\}$. Then denoting $P_{\mathcal{G}}$ as simply $P$ for short, the first non-affects relation $C$ does not affect $XY$ given do$(AB)$ is equivalent to $P(XY|ABC)=P(XY|AB)$ which is the first condition of \hyperref[cond: NS3p]{(NS3$'$)}. Similarly, each of the remaining non-affects relations translate to the corresponding condition of \hyperref[cond: NS3p]{(NS3$'$)} as long as $A$, $B$ and $C$ are parentless.
\end{proof}

\subsection{Proofs of Theorems~\ref{theorem: superlum_bipart} and~\ref{theorem: superlum_tripart}}
\label{appendix: proof_superlum}

\SuperlumBipart*
\begin{proof}
  Given the space-time embedding, compatibility implies that $A$ does not affect $Y$ given $\mathrm{do}(B)$ and $B$ does not affect $X$ given $\mathrm{do}(A)$. Since $A$ and $B$ are parentless, these are equivalent to the no-signalling conditions \hyperref[cond: NS2]{(NS2)}, cf.\ Theorem~\ref{theorem: ns2_affects}. Hence condition~\ref{cond:bi1} is necessary.  Condition~\ref{cond:bi2} is necessary because there are no variables in the future of either $X$ or $Y$ in the space-time embedding (nor the joint futures of any variables), so the compatibility conditions precludes affects relations of the form $X$ affects $S_1$ given $\mathrm{do}(S_2)$ and similarly for $Y$.
  
To prove sufficiency, we will show that in any bipartite Bell scenario satisfying conditions~\ref{cond:bi1} and~\ref{cond:bi2}, only irreducible affects relations of the following form are allowed:
\begin{enumerate}[I.]
    \item $A$ affects $S_1$ given do$(S_2)$ where $X\in S_1$.
    \item $B$ affects $S_1$ given do$(S_2)$ where $Y\in S_1$.
    \item $AB$ affects $XY$.
\end{enumerate}

All affects relations of the above forms are compatible with the standard space-time embedding of a bipartite Bell scenario, since we have $\cA\prec\cX$ and $\cB\prec\cY$.  We proceed to show that these are the only irreducible affects relations possible.\medskip

{\bf Step 1.} We first note that if $S_3$ affects $S_1$ given $\mathrm{do}(S_2)$ is irreducible then $S_3$ cannot contain $X$. This follows because if $S_3$ contains $X$, then the condition of being irreducible implies $X$ affects $S_1$ given $\mathrm{do}(S_2\cup S_3\setminus\{X\})$, in violation of condition~\ref{cond:bi2}.  By symmetry it follows that there is also no irreducible affects relation of the form $S_3$ affects $S_1$ given $\mathrm{do}(S_2)$ where $S_3$ contains $Y$. Thus, the only possibilities are where $S_3$ is $A$, $B$ or $AB$.\medskip

{\bf Step 2.} We start by considering affects relations emanating from $A$ which are not of the form of I. i.e., $A$ affects $S_1$ given do$(S_2)$ where $X\not\in S_1$.  Since $S_1$ must necessarily be non-empty and there are only two other variables $B$ and $Y$ in the scenario, $S_1$ must be a non-empty subset of $BY$. Now, we write $S_2:=S^i S^o$ where $S^i$ and $S^o$ are the sets of all setting and outcome variables contained in $S_2$ respectively. Then consider two cases depending on whether or not $S^o=\emptyset$.

\begin{itemize}
   \item {\bf Step 2a.} (Ruling out affects relations of the form  $A$ affects $S_1$ given do$(S^iS^o)$ where $X\not\in S_1$ and $S^o=\emptyset$) The only such possibilities are $A$ affects $Y$ given do$(B)$, $A$ affects $Y$, $A$ affects $B$ and $A$ affects $BY$. We first note that since $A$ and $B$ are parentless, d-separation implies that $P(AB)=P(A)P(B)$.\footnote{Here we use $P$ to denote the distribution $P_{\mathcal{G}}$ associated with any pre-intervention causal structure $\mathcal{G}$ allowed by a bipartite Bell scenario.} This rules out $A$ affects $B$ (Lemma~\ref{lemma: affects}). $A$ affects $Y$ given do$(B)$ is already ruled out by applying \hyperref[cond: NS2]{(NS2)} (cf.\ Theorem~\ref{theorem: ns2_affects}). To show $A$ does not affect $BY$ (which implies $A$ does not affect $Y$), we start with \hyperref[cond: NS2]{(NS2)} which states that $P(Y|AB)=P(Y|B)$ and multiply both sides by $P(AB)$. The left hand side gives $P(Y|AB)P(AB)=P(Y|AB)P(B|A)P(A)=P(YB|A)P(A)$ and the right hand side gives $P(Y|B)P(AB)=P(Y|B)P(A)P(B)=P(YB)P(A)$. Putting this together gives $P(YB|A)=P(YB)$ (which is equivalent to $A$ does not affect $BY$ since $A$ is parentless). [Summing over $B$ establishes that $A$ does not affect $Y$.] 
   \item {\bf Step 2b.} (Ruling out affects relations of the form  $A$ affects $S_1$ given do$(S^i S^o)$ where $X\not\in S_1$ and $S^o\neq \emptyset$) Observe that for any $S_1$ and $S_2:=S^iS^o$ associated with such an affects relation, the following non-affects relations between elements of $S_1$ and $S_2$ are implied by conditions~\ref{cond:bi1} and~\ref{cond:bi2} in any Bell scenario: $A$ does not affect $S_1$ given do$(S^i)$ (follows from Step 2a), $S^o$ does not affect $S_1$ given do$(S^i)$ and $S^o$ does not affect $S_1$ given do$(S^iA)$ (these two follow from condition~\ref{cond:bi2}). Writing these out using the fact that interventions on exogeneous variables do not change the causal structure, we have
    \begin{align}
        \begin{split}
            P(S_1|AS^i)&=P(S_1|S^i)\\
            P_{\mathcal{G}_{\mathrm{do}(S^o)}}(S_1|S^oS^i)&=P(S_1|S^i)\\
            P_{\mathcal{G}_{\mathrm{do}(S^o)}}(S_1|S^oAS^i)&=P(S_1|AS^i)
        \end{split}
    \end{align}
    which imply $P_{\mathcal{G}_{\mathrm{do}(S^o)}}(S_1|AS^oS^i)=P_{\mathcal{G}_{\mathrm{do}(S^o)}}(S_1|S^oS^i)$, i.e., $A$ does not affect $S_1$ given do$(S_2)$ since $S_2=S^o S^i$.  Together with the above item, this completes the proof that the only affects relations of the form $A$ affects $S_1$ given $\mathrm{do}(S_2)$ are those with $X\in S_1$. By symmetry, the only affects relations of the form $B$ affects $S_1$ given $\mathrm{do}(S_2)$ must have $Y\in S_1$.
    \end{itemize}

{\bf Step 3.} Case III. allows $AB$ affects $XY$ which can be irreducible. Other possible affects relations emanating from $AB$, such as $AB$ affects $Y$ will be reducible (since $A$ does not affect $Y$ given do$(B)$, by the arguments in Step 2a.). \medskip

This completes the proof.
\end{proof}

\SuperlumTripart*
\begin{proof}
Throughout this proof, we use that $A$, $B$ and $C$ are parentless by definition, in any tripartite Bell scenario (Definition~\ref{def: Bell_scenario}). This implies by the d-separation property (Definition~\ref{definition: compatdist}) that $P(ABC)=P(A)P(B)P(C)$. Here, we use $P$ to denote the distribution $P_{\mathcal{G}}$ associated with any given causal structure $\mathcal{G}$ associated with a tripartite Bell scenario, and recall that interventions on parentless nodes do not alter the causal structure or the associated distribution (Definition~\ref{def:post_intervention}).\medskip

{\bf Necessity} The necessity of condition~\ref{cond:tri1} (namely, that \hyperref[cond: NS3p]{(NS3$'$)} holds) for compatibility of the tripartite Bell scenario with a jamming space-time configuration follows from applying the equivalence between \hyperref[cond: NS3p]{(NS3$'$)} and the non-affects relations of Equation~\eqref{eq: ns3p_affects} in Bell scenario: if any of these affects relations were present in a tri-partite Bell scenario, they would be incompatible with a jamming space-time embedding (this holds irrespective of whether or not $\overline{\mathcal{F}}(\cX)\cap \overline{\mathcal{F}}(\cZ)\not \subseteq \overline{\mathcal{F}}(\cY)$ is additionally satisfied in the space-time embedding).

Now, we turn our attention to the second condition. Consider the case where we are given that  $\overline{\mathcal{F}}(\cX)\cap \overline{\mathcal{F}}(\cZ)\not \subseteq \overline{\mathcal{F}}(\cY)$. To show the necessity of condition~\ref{cond:tri2} in this case, we first observe that since $\cA\prec X$ and $\cC\prec \cZ$, $\overline{\mathcal{F}}(\cX)\cap \overline{\mathcal{F}}(\cZ)\subseteq \overline{\mathcal{F}}(\cA)\cap \overline{\mathcal{F}}(\cC)$ which implies that none of the joint futures $\overline{\mathcal{F}}(\cA)\cap \overline{\mathcal{F}}(\cC)$, $\overline{\mathcal{F}}(\cA)\cap \overline{\mathcal{F}}(\cZ)$, $\overline{\mathcal{F}}(\cX)\cap \overline{\mathcal{F}}(\cC)$ can be contained in $\overline{\mathcal{F}}(\cY)$. 
Then it follows that in any space-time embedding satisfying the conditions of Definition~\ref{def: spacetime_config_jamming} along with  $\overline{\mathcal{F}}(\cX)\cap \overline{\mathcal{F}}(\cZ)\not \subseteq \overline{\mathcal{F}}(\cY)$, there are no observed variables or joint futures of observed variables that are guaranteed to be contained in the future of $\cX$, of $\cY$ or of $\cZ$, which means that in order to avoid superluminal signalling, in such a space-time embedding, we must not have any affects relations originating from the outcome variables as required by condition~\ref{cond:tri2}.

Next, consider the case where $\overline{\mathcal{F}}(\cX)\cap \overline{\mathcal{F}}(\cZ)\subseteq \overline{\mathcal{F}}(\cY)$ in a space-time embedding satisfying the conditions of Definition~\ref{def: spacetime_config_jamming}.  First, we note that all affects relations emanating from $Y$ which satisfy condition~\ref{cond:tri2p} (i.e., where $Y$ affects $S_1$ given do$(S_2)$, with $\{X,Z\}\subseteq S_1$) are compatible with the given space-time embedding since we have $\overline{\mathcal{F}}(\cX)\cap \overline{\mathcal{F}}(\cZ)\subseteq \overline{\mathcal{F}}(\cY)$ (which implies $\overline{\mathcal{F}}(\cX)\cap \overline{\mathcal{F}}(\cZ) \cap \overline{\mathcal{F}}(\cS_2) \subseteq \overline{\mathcal{F}}(\cY)$ for any $S_2$). Below, we show that compatibility of a Bell scenario with such a space-time embedding rules out all affects relations which are not of the form given in condition~\ref{cond:tri2p}.

\begin{itemize}
    \item {\bf Case N1: Affects relations emanating from $Y$ which do not satisfy condition~\ref{cond:tri2p}.} These are affects relations of the form $Y$ affects $S_1$ given do$(S_2)$ where $\{X,Z\}\not\subseteq S_1$ and can be partitioned into the cases where $\{X,Z\}\cap S_1=\emptyset$ and $\{X,Z\}\cap S_1\neq \emptyset$. In the former case, we would necessarily have $S_1\subseteq \{A,B,C\}$ and these affects relations would never arise in any Bell scenario (where $A$, $B$, $C$ are parentless by definition) as they are ruled out by Lemma~\ref{lemma:HOaffects} (a consequence of the d-separation property). In the latter case, we can further partition into the sub-cases where $\{X,Z\}\subseteq S_1\cup S_2$ (Case N1.1) or $\{X,Z\}\not\subseteq S_1\cup S_2$ (Case N1.2). 
    We first reduce Case N1.1 to Case N1.2 using compatibility and then rule out the affects relations of the latter also using compatibility of a Bell scenario with the given embedding.
   The only affects relations of the form of Case N1.1 (recalling that we have also assumed from before, that exactly one of $X$ or $Z$ is contained in $S_1$) are: $Y$ affects $X S^i$ given do$(Z \bar{S}^i)$, $Y$ affects $Z S^i$ given do$(X \bar{S}^i)$ for $S^i$ and $\bar{S}^i$ being some disjoint subsets of $\{A,B,C\}$. We only consider the first affects relations as the second one is then covered by symmetry between $X$ and $Z$. Writing out the affects relation, we have

  \begin{align}
  \label{eq: tripart_necc_proof}
        P_{\mathcal{G}_{\mathrm{do}(YZ)}}(XS^i|YZ\bar{S}^i)\neq  P_{\mathcal{G}_{\mathrm{do}(Z)}}(XS^i|Z\bar{S}^i)
    \end{align}

 We can simplify the expression using constraints from compatibility.  First notice that compatibility with the given embedding implies $Z$ does not affect $X S^i$ given do$(\bar{S}^i)$ (for any $S^i$ and $\bar{S}^i$), since the future of $Z$ does not contain the joint future of any set of variables. Using this non-affects relation, the right hand side becomes equal to $P(XS^i|\bar{S}^i)$. Plugging this back in, we have 
 \begin{equation}
             P_{\mathcal{G}_{\mathrm{do}(YZ)}}(XS^i|YZ\bar{S}^i)\neq  P(XS^i|\bar{S}^i).
 \end{equation}
 
 This is equivalent to $YZ$ affects $XS^i$ given do$(\bar{S}^i)$, which can either be irreducible or reducible. If it is irreducible, we must necessarily have $Z$ affects $XS^i$ given do$(Y\bar{S}^i)$ which is ruled out by compatibility with the given embedding. If it is reducible, it must either be reducible to $Z$ affects $XS^i$ given do$(\bar{S}^i)$ or to $Y$ affects $XS^i$ given do$(\bar{S}^i)$. The former is again ruled out by compatibility and we are left with the latter, which is equivalent to $P_{\mathcal{G}_{\mathrm{do}(Y)}}(XS^i|Y\bar{S}^i)\neq P(XS^i|\bar{S}^i)$. Notice that $Y$ affects $XS^i$ given do$(\bar{S}^i)$ are precisely the set of affects relations that comprise Case N1.2. Therefore the rest of the proof now covers both Cases N1.1 and N1.2.\footnote{Case N1.2 allows all affects relations $Y$ affects $S_1$ given do$(S_2)$ where $S_1$ contains exactly one of $X$ or $Z$ and $\{X,Z\}\not\subseteq S_1\cup S_2$. The most general affects relations of this form are $Y$ affects $XS^i$ given do$(\bar{S}^i)$ and $Y$ affects $ZS^i$ given do$(\bar{S}^i)$, where $S^i$ and $\bar{S}^i$ are arbitrary disjoint subsets of $\{A,B,C\}$. The argument for the two sets of affects relations are symmetric under exchange of $X$ and $Z$ to we focus on $Y$ affects $XS^i$ given do$(\bar{S}^i)$.} We had established that $P_{\mathcal{G}_{\mathrm{do}(Y)}}(XS^i|Y\bar{S}^i)\neq P(XS^i|\bar{S}^i)$, which is equivalent to
\begin{equation}\label{eq:dorel}
            P_{\mathcal{G}_{\mathrm{do}(Y)}}(X|S^iY\bar{S}^i)P_{\mathcal{G}_{\mathrm{do}(Y)}}(S^i|Y\bar{S}^i)\neq  P(X|S^i\bar{S}^i) P(S^i|\bar{S}^i).
 \end{equation}
Next we use Lemma~\ref{lemma:HOaffects}, which implies that (recall that $S^i$ and $\bar{S^i}$ consist of parentless nodes) $Y$ does not affect $S^i$ given do$(\bar{S}^i)$, which reduces~\eqref{eq:dorel} to $P_{\mathcal{G}_{\mathrm{do}(Y)}}(X|S^iY\bar{S}^i)\neq P(X|S^i\bar{S}^i)$. Then, denoting $S^i\bar{S}^i\setminus A:= \tilde{S}^i$ (and assuming for the moment that $A\in S^i\bar{S}^i$), we have: $ P_{\mathcal{G}_{\mathrm{do}(Y)}}(X|Y\tilde{S}^i A)\neq  P(X|\tilde{S}^i A)$. We again invoke compatibility which implies that $\tilde{S}^i$ does not affect $X$ given do$(YA)$ (i.e., $ P_{\mathcal{G}_{\mathrm{do}(Y)}}(X|Y\tilde{S}^i A)= P_{\mathcal{G}_{\mathrm{do}(Y)}}(X|YA)$) and $\tilde{S}^i$ does not affect $X$ given do$(A)$ (i.e., $P(X|\tilde{S}^i A)=P(X| A)$) for any subset of settings $\tilde{S}^i$ which does not include $A$. This further simplifies our expression to 
\begin{equation}
            P_{\mathcal{G}_{\mathrm{do}(Y)}}(X|Y A)\neq  P(X| A).
 \end{equation}
 However, this is equivalent to $Y$ affects $X$ given do$(A)$ which is ruled out, as it is incompatible with the given space-time embedding. In the last step, we assumed that $A\in S^i\bar{S}^i$. If $A\not\in S^i\bar{S}^i$, we would instead obtain $Y$ affects $X$ in this step, which is also incompatible with the embedding. This covers all relations from $Y$.

\item {\bf Case N2: Affects relations emanating from $X$, $Z$, $XY$, $YZ$, $XZ$ and $XYZ$} Affects relations from $XYZ$ would have to affect an input variable which is not possible due to the parentless nature of these variables in the Bell scenario. Affects relations from $X$, $Z$ and $XZ$ are ruled out by compatibility of the Bell scenario with the given embedding (there are no subsets whose joint future is contained in that of $X$, $Z$ or $XZ$). Affects relations from $XY$ (if they exist), must be of the form $XY$ affects $S_1$ given do$(S_2)$ and can be either irreducible or reducible. If irreducible, this means that $X$ affects $S_1$ given do$(S_2Y)$ which we have already ruled out by compatibility. If reducible, it must be reduced to either one of the irreducible relations $X$ affects $S_1$ given do$(S_2)$ or $Y$ affects $S_1$ given do$(S_2)$ (where $X\not\in S_1 \cup S_2$). The former, we have already ruled out and in the latter case, and the latter cannot possibly satisfy condition~\ref{cond:tri2p} and are therefore also ruled out by earlier arguments.
\end{itemize}
This completes the proof for the necessary part. \medskip

{\bf Sufficiency} We start with the case where $\overline{\mathcal{F}}(\cX)\cap \overline{\mathcal{F}}(\cZ)\not \subseteq \overline{\mathcal{F}}(\cY)$. To prove sufficiency in this case, we will show that in any tripartite Bell scenario satisfying conditions~\ref{cond:tri1} and ~\ref{cond:tri2} of the current theorem, the only irreducible affects relations that are allowed are the following.
\begin{itemize}
    \item {\bf Case S1:} \begin{itemize}
        \item[S1.1]  $A$ affects $S_1$ given do$(S_2)$ where $X\in S_1$ 
  
      \item[S1.2] $C$ affects $S_1$ given do$(S_2)$ where $Z\in S_1$ 
  
        \item[S1.3] $AC$ affects $S_1$ given do$(S_2)$ where $\{X,Z\}\subseteq S_1$
    \end{itemize}

     \item {\bf Case S2:} \begin{itemize}
     \item[S2.1]  $B$ affects $S_1$ given do$(S_2)$ where $Y\in S_1$ or $\{X,Z\}\subseteq S_1$ or both
  \item[S2.2] $AB$ affects $S_1$ given do$(S_2)$ where $\{X,Y\}\subseteq S_1$ or $\{X,Z\}\subseteq S_1$ or both
    \item[S2.3] $BC$ affects $S_1$ given do$(S_2)$ where $\{Y,Z\}\subseteq S_1$ or $\{X,Z\}\subseteq S_1$ or both
     \end{itemize}
     
        \item {\bf Case S3:} $ABC$ affects $S_1$ given do$(S_2)$ where $\{X,Z\}\subseteq S_1$

\end{itemize}

Notice that all affects relations of the above form are compatible with the jamming space-time embedding of a tripartite Bell scenario, since we have $\cA\prec \cX$, $\cB\prec Y$, $\cC\prec \cZ$ and $\overline{\mathcal{F}}(\cX)\cap \overline{\mathcal{F}}(\cZ)\subseteq \overline{\mathcal{F}}(\cB)$. As reducible affects relations (even if they exist) do not imply any additional compatibility constraints, any scenario where these are the only irreducible affects relations would be compatible with the given space-time embedding and consequently, not lead to superluminal signalling. Therefore showing that conditions~\ref{cond:tri1} and ~\ref{cond:tri2} imply that the affects relations of Cases S1 to S3. are the only allowed irreducible affects relations, would complete this proof for the case where $\overline{\mathcal{F}}(\cX)\cap \overline{\mathcal{F}}(\cZ)\not \subseteq \overline{\mathcal{F}}(\cY)$.

{\bf Step 1.} Using the same argument as in Step~1 in the proof of Theorem~\ref{theorem: superlum_bipart}, we conclude that if $S_3$ affects $S_1$ given do$(S_2)$
is irreducible, then $S_3$ cannot contain $X$, $Y$ or $Z$ (any affects relation of this form where $S_3$ contains one of these outcome variables would be rendered reducible to an affects relation without these variables in the first argument of the affects relation, due to condition~\ref{cond:tri2} of the theorem). Thus the only possibilities for irreducible affects relations are those where $S_3$ is a non-empty subset of $\{A,B,C\}$, and there are 7 such possibilities.   

{\bf Step 2.} The proof for the cases where $S_3=A$ (Case~S1.1) and $S_3=C$ (Case~S1.2) is identical to that of Step~2 in Theorem~\ref{theorem: superlum_bipart}. [Cases where $A$ affects $S_1$ given do$(S_2)$ where $X\notin S_1$ and $S_2$ only contains inputs are excluded using NS3$'$, and the extension to the case of general $S_2$ uses Step~2b of the previous proof.]

For the case where $S_3=AC$ (Case~S1.3), we have irreducible affects relations of the form $AC$ affects $S_1$ given do$(S_2)$. By the definition of irreducibility this implies that $A$ affects $S_1$ given do$(CS_2)$ and $C$ affects $S_1$ given do$(AS_2)$, which are affects relations covered by Cases~S1.1 and~S1.2 respectively (which we have already addressed). By the condition of Case~S1.1, we must have $X\in S_1$ and by the condition of Case~S1.1, we must have $X\in S_1$ for such an affects relation to hold. This implies that the only irreducible affects relations emanating from $S_3=AC$ are those of the form of Case~S1.3.\footnote{While affects relations such as $AC$ affects $X$ (with possible conditioning on other interventions) are possible, they would not be irreducible since $C$ does not affect $X$ (even when conditioning on interventions on other variables).} This covers all of Case~S1, which concerns affects relations emanating from sets of variables not containing $B$.

{\bf Step 3.} We now consider Cases~S2 and~S3 which involve the variable $B$, which has a special role in this scenario (due to the possibility of jamming). The proof in this case is also similar to that of Step 2, but we repeat it for completeness. Consider $S_3=B$, and consider affects relations emanating from $S_3$ which are not of the form of S2.1 i.e., where $B$ affects $S_1$ given do$(S_2)$ with $Y\not\in S_1$ and $\{X,Z\}\not\subseteq S_1$. As in the proof of Theorem~\ref{theorem: superlum_bipart} we partition $S_2$ as $S_2=S^i S^o$ where $S^i$ and $S^o$ consists only of setting and outcome variables respectively, and we treat the two cases $S^o=\emptyset$ and $S^o\neq\emptyset$ separately. 

\begin{itemize}
    \item {\bf Step 3a.} (Ruling out affects relations of the type $B$ affects $S_1$ given do$(S^i S^o)$ with $Y\not\in S_1$, $\{X,Z\}\not\subseteq S_1$ and $S^o=\emptyset$) The only such possibilities are: affects relations where $S_1$ only contains settings (these are ruled out using the free choice of settings, and the d-separation property), $B$ affects $X$, $B$ affects $Z$, and the last two affects relations with the conditioning on do$(A)$, do$(C)$ or do$(AC)$. Consider $B$ affects $X$ given do$(AC)$. Using Theorem~\ref{theorem: ns3p_affects} we know that \hyperref[cond: NS3p]{(NS3$'$)} (condition~\ref{cond:tri1} of the current theorem) implies that $BC$ does not affect $X$ given do$(A)$ and $C$ does not affect $XY$ given do$(AB)$. The latter immediately implies that $C$ does not affect $X$ given do$(AB)$ and together with $P(ABC)=P(A)P(B)P(C)$, we can easily verify that this gives us $C$ does not affect $X$ given do$(A)$, which is equivalent to $P(X|AC)=P(X|A)$ in a Bell scenario. Using this in $BC$ does not affect $X$ given do$(A)$, which is equivalent to $P(X|ABC)=P(X|A)$, we have $P(X|ABC)=P(X|AC)$, which is equivalent to $B$ does not affect $X$ given do$(AC)$. It is straightforward to verify that this also implies $B$ does not affect $X$ given do$(A)$, $B$ does not affect $X$ given do$(C)$ and $B$ does not affect $X$. By symmetry, we can rule out affects relations $B$ affects $Z$ (given do$(A)$, do$(C)$, do$(AC)$) through similar arguments based on the equivalent version of \hyperref[cond: NS3p]{(NS3$'$)} given in Theorem~\ref{theorem: ns3p_affects} and the free choice of the settings.

    \item {\bf Step 3b.} (Ruling out affects relations of the type $B$ affects $S_1$ given do$(S^i S^o)$ with $Y\not\in S_1$, $\{X,Z\}\not\subseteq S_1$ and $S^o\neq\emptyset$) Observe that for any $S_1$ and $S^i\cup S^o$ associated with such an affects relation, the following non-affects relations are implied by conditions~\ref{cond:tri1} and~\ref{cond:tri2} in any Bell scenario: $B$ does not affect $S_1$ given do$(S^i)$ (follows from Step 3a), $S^o$ does not affect $S_1$ given do$(S^i)$ and $S^o$ does not affect $S_1$ given do$(S^i B)$ (the last two non-affects relations follow from condition~\ref{cond:tri2} that forbids affects relations emanating from sets of outcome variables). Writing down these three non-affects relations explicitly, while using the fact that interventions on any set $S^i$ (which consists only of settings) does not change the causal structure, we have
    \begin{align}\label{eq:B8}
        \begin{split}
            P(S_1|BS^i)&=P(S_1|S^i)\\
            P_{\mathcal{G}_{\mathrm{do}(S^o)}}(S_1|S^oS^i)&=P(S_1|S^i)\\
            P_{\mathcal{G}_{\mathrm{do}(S^o)}}(S_1|S^oAS^i)&=P(S_1|BS^i)
        \end{split}
    \end{align}
    The above implies that $P_{\mathcal{G}_{\mathrm{do}(S^o)}}(S_1|S^oBS^i)=P_{\mathcal{G}_{\mathrm{do}(S^o)}}(S_1|S^oS^i)$ which, in any Bell scenario, is equivalent to $B$ does not affect $S_1$ given do$(S_2)$ since $S_2=S^o\cup S^i$. Together with Step 3a, this completes the proof for Case S2.1.
\end{itemize}

The proofs for Cases~S2.2 and~S2.3 follow from Cases~S1 and~S2.1 and the definition of irreducibility. For Case~S2.2, we consider irreducible affects relations of the form $AB$ affects $S_1$ given do$(S_2)$. By the definition of irreducibility, this implies that $A$ affects $S_1$ given do$(BS_2)$ and $B$ affects $S_1$ given do$(AS_2)$, which are affects relations of the form covered by previous Cases S1.1 and S2.1 respectively. The conditions of Case S1.1 tell us that $X\in S_1$ must hold and the conditions of Case S2.2 tell us that either $Y\in S_1$ or $\{X,Z\}\subseteq S_1$ must hold for these affects relation to hold. Combining this, we know that we must either have $\{X,Y\}\subseteq S_1$ or $\{X,Z\}\subseteq S_1$ for an irreducible affects relation $AB$ affects $S_1$ given do$(S_2)$ to hold, which is precisely the condition of Case~S2.2. Case~S2.3 is similarly covered by using the definition of irreducibility and referring to the previous Cases~S1.2 and~S2.1.

{\bf Step 4.} For Case S3, we see that any affects relation $ABC$ affects $S_1$ given do$(S_2)$ where $\{X,Z\}\not\subseteq S_1$ would be reducible. This is because, the only possibilities for $S_1$ are then $X$, $Y$, $Z$, $XY$, $YZ$ and in all these cases, there is at least one setting in $\{A,B,C\}$ that does not affect such as $S_1$ (even when conditioned on interventions on other outcomes). 

Therefore, using conditions 1 and 2, we have shown that the only irreducible affects relations are of the form of those listed in Cases S1 to S3. All of these are compatible with any space-time embedding that respects the jamming configuration (Definition~\ref{def: spacetime_config_jamming}) and  $\overline{\mathcal{F}}(\cX)\cap \overline{\mathcal{F}}(\cZ)\not \subseteq \overline{\mathcal{F}}(\cY)$.

Finally, we consider the case where the space-time embedding satisfies $\overline{\mathcal{F}}(\cX)\cap \overline{\mathcal{F}}(\cZ) \subseteq \overline{\mathcal{F}}(\cY)$ along with the conditions of the jamming configuration, and wish to show that conditions 1 and 2$'$ are sufficient for compatibility. This only affects the arguments of Step 1 and Step 3b above, as these are the only steps which use condition~\ref{cond:tri2}. In Step 1, using condition~\ref{cond:tri2p} instead of condition~\ref{cond:tri2}, we now get an additional class of irreducible affects relations in addition to Cases S1 to S3.

\begin{itemize}
    \item {\bf Case S4:} $S^iY$ affects $S_1$ given do$(S_2)$ where $\{X,Z\}\subseteq S_1$ and $S^i\subseteq \{A,B,C\}$.
\end{itemize}

However all these affects relations are compatible with the given embedding since we have $\overline{\mathcal{F}}(\cX)\cap \overline{\mathcal{F}}(\cZ) \subseteq \overline{\mathcal{F}}(\cY)$, $\overline{\mathcal{F}}(\cX)\cap \overline{\mathcal{F}}(\cZ) \subseteq \overline{\mathcal{F}}(\cB)$ as well as $\cA\prec \cX$ and $\cC\prec \cZ$. In Step 3b, we use condition~\ref{cond:tri2} to impose the non-affects relations  $S^o$ does not affect $S_1$ given do$(S^i)$ and $S^o$ does not affect $S_1$ given do$(S^i B)$. But not all such non-affects relations follow from the weaker condition~\ref{cond:tri2p}, as we can for instance have $Y$ affects $XZ$. However, this does not affect the proof because the conditions of Step 3b tell us that $\{X,Z\}\not\subseteq S_1$, and we only need to use the non-affects relations for such a set $S_1$. These non-affects relations are indeed implied by the weaker condition~\ref{cond:tri2p} as we have argued in Case S4 above. Therefore the proof for Step 3b remains unaffected by replacing condition~\ref{cond:tri2} with condition~\ref{cond:tri2p}. This shows that conditions~\ref{cond:tri1} and~\ref{cond:tri2p} are sufficient for compatibility in any jamming space-time embedding where  $\overline{\mathcal{F}}(\cX)\cap \overline{\mathcal{F}}(\cZ) \subseteq \overline{\mathcal{F}}(\cY)$ is also satisfied. This completes the proof.
\end{proof}

\subsection{Proof of Theorem~\ref{theorem: loops1}}
\label{appendix: proof_loop}

\CausalLoopsBell*
\begin{proof}
Consider the bipartite Bell scenario. Condition~\ref{cond:bi2} of Theorem~\ref{theorem: superlum_bipart} is sufficient for ruling out affects causal loops because it restricts all affects relations to start from either $A$ or $B$ or $AB$, but these variables are parentless so no affects relation can end on $A$, $B$ or $AB$, so there cannot be an affects causal loop. The proof of sufficiency of condition~\ref{cond:tri2} of Theorem~\ref{theorem: superlum_tripart} for the tripartite scenario is analogous. For condition~\ref{cond:tri2p} of Theorem~\ref{theorem: superlum_tripart}, there is are additional possible affects relations of the form $Y$ affects $S_1$ given do$(S_2)$ where $\{X,Z\}\subseteq S_1$. Here, $S_1$ could contain the setting variables, $A$, $B$ or $C$. However, even such affects relations cannot allow us to infer a causal influence ending on $A$, $B$ or $C$ as these are parentless nodes in a Bell scenario. The fact that no causal influence can end on $A$, $B$ or $C$, together with the fact that no affects relations can start from $X$ or $Z$ means that there cannot be an affects causal loop in any set of affects relations satisfying condition~\ref{cond:tri2p}.

To show the sufficiency of condition~\ref{cond:tri2p} of Theorem~\ref{theorem: superlum_tripart} for ruling out affects loops, notice that~\ref{cond:tri2p} also allows affects relations of the form $Y$ affects $S_1$ given do$(S_2)$ where $\{X,Z\}\subseteq S_1$. This allows us to have affects relations from $Y$ to $B$, for instance where $B\in S_1$, and one might expect that together with $B$ affects $Y$ (which is also allowed), this might lead to an affects loop. However, while $B$ affects $Y$ allows us to infer that $B$ is a cause of $Y$ (Lemma~\ref{lemma: affects}), no affects relation emanating from $Y$ could allow us to infer that $Y$ is a cause of $B$, since $B$ is a parentless node and has no causes. The only causal loops (if present) could be between outcome variables, however as there are no affects relations emanating from $X$ or $Z$, having the affects relations $Y$ affects $S_1$ given do$(S_2)$ where $\{X,Z\}\subseteq S_1$ alone are insufficient to infer the existence of causal loops i.e., any causal loops (if they exist) cannot be affects causal loops.

The insufficiency of these conditions for ruling out hidden causal loops is established by Example~\ref{example: hidden_loop} (originally introduced in \cite{VilasiniColbeckPRA}). This example presents a hidden causal loop which satisfies \hyperref[cond: NS2]{(NS2)}, \hyperref[cond: NS3]{(NS3)} and \hyperref[cond: NS3p]{(NS3$'$)}. Moreover, as the example only involves two non-trivial outcomes $X$ and $Y$ without any affects relations emanating from them, it satisfies the second conditions of Theorems~\ref{theorem: superlum_bipart} and~\ref{theorem: superlum_tripart} (both~\ref{cond:tri2} and~\ref{cond:tri2p} in the latter case).
\end{proof}

\subsection{Counterexamples used in the proof of Theorem~\ref{THEOREM: LOOPSBELL2}}
\label{appendix: examples}

\begin{figure}
    \centering

      \subfloat[\label{fig: affects_loop}]{ \includegraphics[]{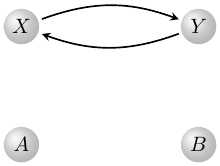}}\qquad\qquad \subfloat[\label{fig: hidden_loop}]{\includegraphics[]{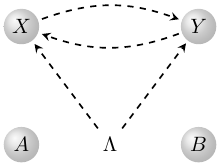}} \qquad\qquad \subfloat[\label{fig: non-necc1}]{\includegraphics[]{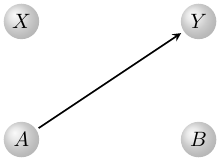}} \qquad\qquad \subfloat[\label{fig: non-necc2}]{\includegraphics[]{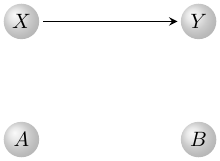}}
      \qquad\qquad  \subfloat[\label{fig: examples_spacetime2}]{\includegraphics[]{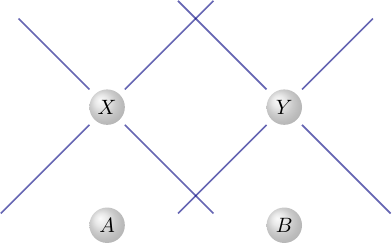}}
    \caption{ Causal structure of (a) Example~\ref{example: affects_loop} (which shows that \hyperref[cond: NS2]{(NS2)}/\hyperref[cond: NS3]{(NS3)} are insufficient for ruling out affects causal loops in Bell scenarios) (b) Example~\ref{example: hidden_loop} (which shows that \hyperref[cond: NS2]{(NS2)}/\hyperref[cond: NS3]{(NS3)} are insufficient for ruling out hidden causal loops in Bell scenarios) (c) Example~\ref{example: non-necc1} (which shows that \hyperref[cond: NS2]{(NS2)}/\hyperref[cond: NS3]{(NS3)} are not necessary for ruling out causal loops in Bell scenarios) (d) Example~\ref{example: non-necc2} (which shows that the second condition of Theorems~\ref{theorem: superlum_bipart} and~\ref{theorem: superlum_tripart} are not necessary for ruling out causal loops in Bell scenarios) (e) Space-time embedding of the observed variables (circled) in all of these examples. We recall that in causal structures, a dashed causal arrow $N_1\xdashrightarrow{} N_2$ as well as a solid causal arrow $N_1\longrightarrow N_2$ denote that $N_1$ is a direct cause of $N_2$, but the dashed arrow indicates that $N_1$ does not affect $N_2$ while the solid arrow indicates that $N_1$ affects $N_2$ in the given causal model.}
    \label{fig: examples_appendix}
\end{figure}

\begin{example}[\texorpdfstring{\hyperref[cond: NS2]{(NS2)}/\hyperref[cond: NS3]{(NS3)}}{NS2/3} are insufficient for ruling out affects causal loops in Bell scenarios] 
\label{example: affects_loop}
Consider a cyclic causal structure $\cG$ where $X\longrsquigarrow Y$ and $Y\longrsquigarrow X$ while $A$ and $B$ are parentless and childless nodes (Figure~\ref{fig: affects_loop}). Consider a causal model on this causal structure where $A$ and $B$ have arbitrary cardinality and are distributed according to some distributions $P(A)$ and $P(B)$, $X$ and $Y$ are trits ($\in \{0,1,2\}$) with $X=2Y$ mod $3$ and $Y=X$ being the causal mechanisms in the loop. Then the only consistent solution in the loop is $X=Y=0$. Moreover, since $A$ and $B$ are causally disconnected from each other and $X$ and $Y$, this means that the observed distribution of this example is as follows (due to the $d$-separation property and the above unique solution for the loop equations)
\begin{equation}
        P_{\cG}(XYAB)=P_{\cG}(XY)P(A)P(B),
    \end{equation}
    where $P_{\cG}(XY)$ is the deterministic distribution on $X=Y=0$. Then the non-signalling conditions \hyperref[cond: NS2]{(NS2)} are trivially satisfied. Moreover, the post-intervention graphs $\cG_{\mathrm{do}(X)}$ and $\cG_{\mathrm{do}(Y)}$ are obtained from $\cG$ by removing the $Y\longrsquigarrow X$ and $X\longrsquigarrow Y$ edges respectively. As this is a classical causal model defined in terms of causal mechanisms, we can also specify the post-intervention models in terms of the original causal mechanisms~\cite{Pearl2009, Forre2017}, this preserves the causal mechanisms of the edges that are not removed and fixes the variable being intervened on to some value. Doing so, we see that $P_{\cG_{\mathrm{do}(X)}}(Y=1|X=1)=1$ since we have the $Y=X$ causal mechanism here, while $P_{\cG}(Y=1)=0$, this means that $X$ affects $Y$. Similarly, we $P_{\cG_{\mathrm{do}(Y)}}(X=1|Y=2)=1$ since we have the $X=2Y$ mod $3$ causal mechanism here, while $P_{\cG}(X=1)=0$, which means that we also have $Y$ affects $X$. These affects relations form an affects causal loop as they allow us to infer a causal cycle between the RVs $X$ and $Y$. The same example can be extended to establish the insufficiency of \hyperref[cond: NS3]{(NS3)} (and hence the weaker condition \hyperref[cond: NS3p]{(NS3$'$)}) for ruling out affects causal loops in tripartite Bell scenarios.  
\end{example}

\begin{example}[\texorpdfstring{\hyperref[cond: NS2]{(NS2)}/\hyperref[cond: NS3]{(NS3)}}{NS2/3} are insufficient for ruling out hidden causal loops in Bell scenarios]
\label{example: hidden_loop}
This example was first given in~\cite{VilasiniColbeckPRA}. Consider a causal structure where $X\longrsquigarrow Y$ and $Y\longrsquigarrow X$, together with a common cause $\Lambda$, with $\Lambda\longrsquigarrow X$ and $\Lambda\longrsquigarrow Y$ (Figure~\ref{fig: hidden_loop}). Suppose all nodes correspond to classical and binary variables, $X$ and $Y$ are observed, $\Lambda$ is unobserved and the causal model is such that $\Lambda$ is uniformly distributed, $X=Y\oplus \Lambda$, $Y=X\oplus \Lambda$. One can verify that $X$ and $Y$ are uncorrelated, $X$ does not affect $Y$ and $Y$ does not affect $X$ (using the methods explained in the previous example). However we have a causal loop between $X$ and $Y$ by construction. This is a hidden causal loop and not an affects causal loops because the the observable affects relations of the model are trivially replicated in an acyclic model where $X$ and $Y$ have no causal relations. Moreover this model trivially satisfies the non-signalling conditions \hyperref[cond: NS2]{(NS2)}, \hyperref[cond: NS3]{(NS3)} and \hyperref[cond: NS3p]{(NS3$'$)} when we introduce dummy variables corresponding to the settings (which does not change any of the above properties).
\end{example}

The following counter-examples prove the latter statement of Theorem~\ref{THEOREM: LOOPSBELL2}.

\begin{example}[\texorpdfstring{\hyperref[cond: NS2]{(NS2)}/\hyperref[cond: NS3]{(NS3)}}{NS2/3} are not necessary for ruling out causal loops in Bell scenarios]
\label{example: non-necc1}
Consider a causal structure of a Bell scenario where $A\longrsquigarrow Y$ is the only causal influence, with $Y=A$ (Figure~\ref{fig: non-necc1}). This would violate \hyperref[cond: NS2]{(NS2)}, \hyperref[cond: NS3]{(NS3)} and \hyperref[cond: NS3p]{(NS3$'$)} since Alice's (freely chosen) setting is correlated with Bob's outcome enabling them to communicate. However, this causal structure is acyclic and therefore has no causal loops (neither affects nor hidden loops). 
\end{example}

\begin{example}[Second conditions of Theorems~\ref{theorem: superlum_bipart} and~\ref{theorem: superlum_tripart} are not necessary for ruling out causal loops in Bell scenarios]
\label{example: non-necc2}
Consider a causal structure of a Bell scenario where $X\longrsquigarrow Y$ is the only causal influence, with $Y=X$ (Figure~\ref{fig: non-necc2}). This would trivially satisfy \hyperref[cond: NS2]{(NS2)}, \hyperref[cond: NS3]{(NS3)} and \hyperref[cond: NS3p]{(NS3$'$)} as the settings are causally disconnected to all outcomes, and moreover is an acyclic causal structure and hence has no causal loops of any kind. However, we have $X$ affects $Y$ which violates conditions 2 of Theorems~\ref{theorem: superlum_bipart} and~\ref{theorem: superlum_tripart} as well as condition~\ref{cond:tri2p} of Theorem~\ref{theorem: superlum_tripart}. 
\end{example}

\section{Space-time positions compatible with jamming}\label{app:ST}
In this section we outline which space-time locations are compatible with jamming and how this depends on the number of spatial dimensions. To do so we discuss when the intersection of two lightcones is contained within a third for arbitrary space-time points.  This can be applied to jamming with the notation of the rest of the paper by taking the space-time location of $X$ to be $({\bf x}_A,t_A)$, that of $B$ to be $({\bf x}_B,t_B)$ and that of $Z$ to be $({\bf x}_C,t_C)$.  We begin with some notation.

\subsection{Lightcone notation}

Given an event $({\bf x}_A,t_A)$, its associated \emph{lightcone} $L({\bf x}_A,t_A)$ is the set of points satisfying $({\bf x}-{\bf x}_A).({\bf x}-{\bf x}_A)-(t-t_A)^2\leq0$, or $|{\bf x}-{\bf x}_A|^2\leq(t-t_A)^2$ (we take the speed of light to be $1$ for convenience). Its \emph{future lightcone}, denoted $L_>({\bf x}_A,t_A)$ is the subset of the lightcone with $t>t_A$.  We will be interested in the intersection of lightcones.  Firstly, a small observation: the future lightcone of any point in a lightcone remains in that lightcone.
\begin{lemma}\label{lem:1}
  Given an event $({\bf x}_A,t_A)$ and a second event in its future lightcone $({\bf x}_B,t_B)\in L_>({\bf x}_A,t_A)$, we have $L_>({\bf x}_B,t_B)\subseteq L_>({\bf x}_A,t_A)$.
\end{lemma}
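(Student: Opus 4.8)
The plan is to prove this directly from the defining inequalities of the future lightcone, using the triangle inequality for the Euclidean spatial distance. The statement is essentially the transitivity of the ``causal future'' relation in Minkowski space, and I expect the argument to be short.

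First I would take an arbitrary point $({\bf x}_C,t_C)\in L_>({\bf x}_B,t_B)$ and translate the two hypotheses into inequalities. Membership $({\bf x}_B,t_B)\in L_>({\bf x}_A,t_A)$ gives $|{\bf x}_B-{\bf x}_A|\leq t_B-t_A$ with $t_B>t_A$ (here I take the nonnegative square root of $|{\bf x}_B-{\bf x}_A|^2\leq(t_B-t_A)^2$, which is legitimate precisely because $t_B-t_A>0$). Likewise $({\bf x}_C,t_C)\in L_>({\bf x}_B,t_B)$ gives $|{\bf x}_C-{\bf x}_B|\leq t_C-t_B$ with $t_C>t_B$.

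Next I would combine these. The time coordinates satisfy $t_C>t_B>t_A$, so in particular $t_C>t_A$. For the spatial part, the triangle inequality yields
\begin{equation}
|{\bf x}_C-{\bf x}_A|\leq |{\bf x}_C-{\bf x}_B|+|{\bf x}_B-{\bf x}_A|\leq (t_C-t_B)+(t_B-t_A)=t_C-t_A.
\end{equation}
Squaring (both sides are nonnegative) gives $|{\bf x}_C-{\bf x}_A|^2\leq(t_C-t_A)^2$, which together with $t_C>t_A$ is exactly the condition for $({\bf x}_C,t_C)\in L_>({\bf x}_A,t_A)$. Since $({\bf x}_C,t_C)$ was an arbitrary element of $L_>({\bf x}_B,t_B)$, this establishes the claimed inclusion.

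There is no real obstacle here; the only point requiring a moment of care is the passage from the squared inequalities to the linear ones, which is valid only because we are working with \emph{open} future lightcones ($t_B-t_A>0$ and $t_C-t_B>0$) so that the relevant time differences are genuinely positive and the square roots are unambiguous. I would state this explicitly to keep the argument clean, after which the triangle inequality does all the work.
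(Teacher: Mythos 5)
Your proof is correct and follows essentially the same route as the paper's: take an arbitrary point of $L_>({\bf x}_B,t_B)$, convert both memberships into the inequalities $|{\bf x}_B-{\bf x}_A|\leq t_B-t_A$ and $|{\bf x}_C-{\bf x}_B|\leq t_C-t_B$, and conclude via the triangle inequality. Your explicit remark about why the square roots can be taken (positivity of the time differences) is a minor point of added care that the paper elides but does not change the argument.
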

\begin{proof}
Since $({\bf x}_B,t_B)\in L_>({\bf x}_A,t_A)$, we have $|{\bf x}_B-{\bf x}_A|^2\leq (t_B-t_A)^2$. Let $({\bf x},t)\in L_>({\bf x}_B,t_B)$. It follows that $|{\bf x}-{\bf x}_B|\leq(t-t_B)^2$ with $t_A<t_B<t$. Then, using the triangle inequality,
  \begin{eqnarray*}
    |{\bf x}-{\bf x}_A|&\leq&|{\bf x}-{\bf x}_B|+|{\bf x}_B-{\bf x}_A|\\
                       &\leq&(t-t_B)+(t_B-t_A)=t-t_A\,,
  \end{eqnarray*}
which implies $({\bf x},t)\in L_>({\bf x}_A,t_A)$. Since $({\bf x},t)$ was an arbitrary point in $L_>({\bf x}_B,t_B)$ it follows that $L_>({\bf x}_B,t_B)\subseteq L_>({\bf x}_A,t_A)$.
\end{proof}
We use the notation $L_>({\bf x}_A,t_A)\big|_t:=\{{\bf x}:({\bf x},t)\in L_>({\bf x}_A,t_A)\}$, for the set of possible spatial coordinates of points in the lightcone $L_>({\bf x}_A,t_A)$ with fixed time coordinate $t$.

\subsection{\texorpdfstring{$(1+1)$D}{1p1D}}
In $(1+1)$D the intersection of two future lightcones is itself a lightcone, eminating from a unique point of earliest interception.
\begin{lemma}\label{lem:2}
  Given two events $(x_A,t_A)$ and $(x_C,t_C)$ in $(1+1)$D, there is a unique point $(x',t')$ of earliest interception of their future lightcones, and $L_>(x_A,t_A)\cap L_>(x_C,t_C)=L_>(x',t')$.
\end{lemma}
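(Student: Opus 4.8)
The plan is to prove Lemma~\ref{lem:2} by explicitly constructing the point $(x',t')$ of earliest interception and then showing the set equality by mutual inclusion. In $(1+1)$D the future lightcone $L_>(x_A,t_A)$ consists of points $(x,t)$ with $t>t_A$ and $|x-x_A|\leq t-t_A$, i.e.\ it is bounded by the two lines $x=x_A\pm(t-t_A)$. The left boundary of $L_>(x_A,t_A)$ is the line of slope $-1$ (in a space-horizontal, time-vertical diagram) and the right boundary has slope $+1$. The key geometric observation is that in one spatial dimension the future lightcone is the region to the right of a left-going boundary ray and to the left of a right-going boundary ray, so intersecting two such cones amounts to taking the \emph{rightmost} of the two left-boundaries and the \emph{leftmost} of the two right-boundaries.

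First I would set up coordinates for the four boundary rays: the left boundaries $\ell_A: x=x_A-(t-t_A)$ and $\ell_C: x=x_C-(t-t_C)$, and the right boundaries $r_A: x=x_A+(t-t_A)$ and $r_C: x=x_C+(t-t_C)$. The intersection $L_>(x_A,t_A)\cap L_>(x_C,t_C)$ is the set of points lying simultaneously to the right of both left boundaries and to the left of both right boundaries (and above both apex times). I would then define $(x',t')$ as the intersection point of the right-most left-boundary with the left-most right-boundary; concretely, if say $\ell_A$ lies to the right of $\ell_C$ at large $t$, the binding left constraint is $\ell_A$, and similarly one right boundary binds. Solving the two binding linear equations simultaneously yields an explicit candidate $(x',t')$, and I would check that $t'\geq\max(t_A,t_C)$ so that this point is genuinely in both future lightcones, and that it is unique because two lines of slopes $+1$ and $-1$ meet in exactly one point.

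With $(x',t')$ in hand I would establish $L_>(x_A,t_A)\cap L_>(x_C,t_C)=L_>(x',t')$ by two inclusions. For $\supseteq$, note that $(x',t')$ lies in both $L_>(x_A,t_A)$ and $L_>(x_C,t_C)$ by construction, so Lemma~\ref{lem:1} immediately gives $L_>(x',t')\subseteq L_>(x_A,t_A)$ and $L_>(x',t')\subseteq L_>(x_C,t_C)$, hence $L_>(x',t')$ is contained in the intersection. This is the clean part, since Lemma~\ref{lem:1} does the work. For $\subseteq$, I would take an arbitrary $(x,t)$ in the intersection and show it lies in $L_>(x',t')$; this reduces to showing $t\geq t'$ and $|x-x'|\leq t-t'$, which I would argue directly from the four boundary inequalities satisfied by $(x,t)$ together with the binding-boundary characterization of $(x',t')$. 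The main obstacle is precisely this reverse inclusion: one must handle the case analysis of which pair of boundaries is binding (which depends on the relative spatial and temporal positions of the two events, and on whether their cones overlap at all), and convert the two relevant half-plane inequalities into the single cone inequality $|x-x'|\leq t-t'$. I would streamline this by observing that along the binding left boundary all points satisfy $x-x_{\text{left apex}}=-(t-t_{\text{left apex}})$, so the right-most left boundary passes through $(x',t')$ and continues to bound the intersection above $t'$; an analogous statement for the right boundary then pins the intersection exactly to $L_>(x',t')$. I would also note in passing (or relegate to a remark) that this ``intersection is again a cone'' phenomenon is special to $(1+1)$D and fails in higher dimensions, which motivates the separate treatment in the subsequent subsection.
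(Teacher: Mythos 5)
Your proposal is correct and follows essentially the same route as the paper's proof: identify the two binding boundary rays (the paper's $x=x_A+t-t_A$ and $x=x_C-t+t_C$ in the spacelike case), solve for their intersection to get $(x',t')$, use Lemma~\ref{lem:1} for the inclusion $L_>(x',t')\subseteq L_>(x_A,t_A)\cap L_>(x_C,t_C)$, and derive $|x-x'|\leq t-t'$ directly from the two binding inequalities for the reverse inclusion. The only cosmetic difference is that your binding-boundary framing handles the nested case (one apex inside the other's cone) uniformly, whereas the paper dispatches it separately via Lemma~\ref{lem:1} before treating the spacelike case.
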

\begin{proof}
  Suppose first that $(x_C,t_C)\in L_>(x_A,t_A)$. By Lemma~\ref{lem:1}, $L_>(x_C,t_C)\subseteq L_>(x_A,t_A)$, and hence $L_>(x_A,t_A)\cap L_>(x_C,t_C)=L_>(x_C,t_C)$, i.e., the statement holds with $(x',t')=(x_C,t_C)$. The case $(x_A,t_A)\in L_>(x_C,t_C)$ is analogous.
  
  In the alternative case, without loss of generality, assume $x_A<x_C$ and $t_A<t_C$.  To not be in the first case we require $x_C-x_A>t_C-t_A$. The edges of the lightcone $L_>(x_A,t_A)$ are given by $x=x_A\pm(t-t_A)$, and those of $L_>(x_C,t_C)$ are given by $x=x_C\pm(t-t_C)$. The only two that meet in the future are $x=x_A+t-t_A$ and $x=x_C-t+t_C$, which meet at $x'=(x_C+x_A+t_C-t_A)/2$ and $t'=(x_C-x_A+t_C+t_A)/2$. Any event with a time coordinate smaller than this cannot be in both lightcones, and since $(x',t')$ is in both lightcones, by Lemma~\ref{lem:1} it follows that $L_>(x',t')\subseteq L_>(x_A,t_A)\cap L_>(x_C,t_C)$.

  Now consider an arbitrary point $(x,t)$ in $L_>(x_A,t_A)\cap L_>(x_C,t_C)$. Such a point must satisfy $x\leq x_A+t-t_A$ and $x\geq x_C-t+t_C$. Thus,
  \begin{eqnarray*}
    x-x'&\leq& x_A+t-t_A-x'=t-(x_C-x_A+t_A+t_C)/2=t-t'\\
    x-x'&\geq& x_C-t+t_C-x'=-(t-(x_C-x_A+t_A+t_C)/2)=-(t-t')\,,
  \end{eqnarray*}
  which imply $|x-x'|\leq t-t'$, i.e., $(x,t)\in L_>(x',t')$.

  Putting all these together, we have $L_>(x_A,t_A)\cap L_>(x_C,t_C)=L_>(x',t')$, as required.
 \end{proof}

Next we show that consideration of a single time coordinate is enough to establish whether the intersection of two lightcones is contained in a third.

 \begin{lemma}\label{lem:3}
   Consider three arbitrary events $(x_A,t_A)$, $(x_B,t_B)$ and $(x_C,t_C)$.
   \begin{enumerate}
   \item If for some time $t>t_B$ we have $L_>(x_A,t_A)\big|_t\cap L_>(x_C,t_C)\big|_t\subseteq L_>(x_B,t_B)\big|_t$, then $L_>(x_A,t_A)\cap L_>(x_C,t_C)\subseteq L_>(x_B,t_B)$.
   \item If for some time $t>t_B$ we have $L_>(x_B,t_B)\big|_t\subseteq L_>(x_A,t_A)\big|_t\cap L_>(x_C,t_C)\big|_t$, then $L_>(x_B,t_B)\subseteq L_>(x_A,t_A)\cap L_>(x_C,t_C)$.
   \end{enumerate}
 \end{lemma}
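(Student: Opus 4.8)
The plan is to reduce both parts to the single elementary fact that in $(1+1)$D every future lightcone expands at unit rate, so that each spatial slice $L_>(x_0,t_0)\big|_t$ is the interval $[x_0-(t-t_0),\,x_0+(t-t_0)]$ (for $t>t_0$) whose half-width $t-t_0$ grows with slope $1$. Consequently, for any two future cones the \emph{difference} of the half-widths of their slices is independent of $t$, and this is exactly what makes a single-time interval containment equivalent to a containment holding at every time. Throughout I would use Lemma~\ref{lem:2} to turn the intersection of two future cones into a single future cone, and Lemma~\ref{lem:1} to turn a cone containment into a statement about a single apex.

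For part~1, I would first invoke Lemma~\ref{lem:2} to write $L_>(x_A,t_A)\cap L_>(x_C,t_C)=L_>(x',t')$ for the unique earliest-interception apex $(x',t')$. By Lemma~\ref{lem:1} it then suffices to prove $(x',t')\in L_>(x_B,t_B)$, i.e.\ $|x'-x_B|\le t'-t_B$. Evaluating the hypothesis at a time $t$ where the intersection slice is nonempty, the slice identity gives $L_>(x',t')\big|_t=[x'-(t-t'),\,x'+(t-t')]$ and $L_>(x_B,t_B)\big|_t=[x_B-(t-t_B),\,x_B+(t-t_B)]$, and the assumed interval containment yields the two inequalities $x'-(t-t')\ge x_B-(t-t_B)$ and $x'+(t-t')\le x_B+(t-t_B)$. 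The variable $t$ cancels in both, leaving precisely $|x'-x_B|\le t'-t_B$ (which also forces $t'\ge t_B$), whence the conclusion follows from Lemma~\ref{lem:1}.

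Part~2 is cleaner and I would handle it by the symmetric version of the same idea. Containment in an intersection is containment in each factor, so it suffices to show $L_>(x_B,t_B)\subseteq L_>(x_A,t_A)$ and $L_>(x_B,t_B)\subseteq L_>(x_C,t_C)$ separately; by Lemma~\ref{lem:1} each reduces to an apex condition such as $(x_B,t_B)\in L_>(x_A,t_A)$. Since $t>t_B$, the slice $L_>(x_B,t_B)\big|_t$ is automatically nonempty, and its containment in $L_>(x_A,t_A)\big|_t$ (which is thereby forced nonempty, so $t>t_A$) gives, after the same cancellation of $t$, the inequality $|x_B-x_A|\le t_B-t_A$; repeating the argument with $C$ completes the proof. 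Here no emptiness issue can arise because it is the \emph{smaller} cone's slice that is nonempty for every $t>t_B$.

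The interval algebra (the cancellation of $t$) is routine and is not the real difficulty; the step requiring care is the nonemptiness of the intersection slice in part~1. If the chosen $t$ lies below the earliest interception time $t'$, the intersection slice is empty and the containment hypothesis becomes vacuous, so the reduction above must be carried out at a time $t>t'$. Since in $(1+1)$D two future cones always eventually intersect—Lemma~\ref{lem:2} guarantees that the apex $(x',t')$ exists—such times are available, and I would make this restriction explicit so that the single-slice test is applied only where it carries genuine information.
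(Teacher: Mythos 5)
Your proof is correct and rests on the same elementary fact as the paper's: writing each slice as an interval of half-width $t-t_0$, the containment becomes two endpoint inequalities from which $t$ cancels. The paper stops there (``if these hold for a particular value of $t$, they hold for all $t$''), handling the spacelike and non-spacelike cases of $(x_A,t_A)$, $(x_C,t_C)$ separately; you instead package the conclusion as an apex condition, using Lemma~\ref{lem:2} to replace the intersection by a single cone $L_>(x',t')$ and Lemma~\ref{lem:1} to reduce each containment to $|x'-x_B|\le t'-t_B$ (resp.\ $|x_B-x_A|\le t_B-t_A$). That repackaging is a clean alternative which avoids the case split and makes the corollary stated after the lemma immediate. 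Your flagged subtlety is genuine and is shared by the paper's proof: if the slice on the left of the hypothesised containment is empty at the chosen $t$ (in part~1, if $t<t'$), the hypothesis is vacuous and the conclusion can fail (e.g.\ widely separated $x_A,x_C$ with $x_B$ far away), so the lemma as literally stated needs the extra assumption that the contained slice is nonempty at that $t$ --- the paper only writes ``we assume this set is not empty'' for $L_>(x_B,t_B)\big|_t$, which is automatic, and silently treats the interval containment as equivalent to the endpoint inequalities. One caution on your phrasing: you cannot simply ``carry out the reduction at a time $t>t'$'', since the hypothesis is only given at one specific $t$; the honest fix is exactly the one you propose, namely adding the nonemptiness requirement to the statement, which is satisfied in the only place the lemma is used ($t=t'$, where the slice is the single point $x'$).
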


 \begin{proof}
   Suppose $(x_A,t_A)$ and $(x_C,t_C)$ are spacelike separated and, without loss of generality, suppose $x_A<x_C$ and $t_A<t_C$ so that the condition of spacelike separation is $x_C-x_A>t_C-t_A$.

   The set of points in $L_>(x_A,t_A)\big|_t\cap L_>(x_C,t_C)\big|_t$ is given by $x\in[x_C-t+t_C,x_A+t-t_A]$, and the set of points in $L_>(x_B,t_B)\big|_t$ is given by $x\in[x_B-t+t_B,x_B+t-t_B]$ and we assume this set is not empty. For $L_>(x_A,t_A)\big|_t\cap L_>(x_C,t_C)\big|_t\subseteq L_>(x_B,t_B)\big|_t$ we require $x_C-t+t_C\leq x_B-t+t_B$ and $x_B+t-t_B\leq x_A+t-t_A$. However, if these hold for a particular value of $t$, they hold for all $t$ (since $t$ cancels from both equations).  Likewise for the other direction.  This proves the case where $(x_A,t_A)$ and $(x_C,t_C)$ are spacelike separated.  The case where they are not also follows analogously.
 \end{proof}

 Applying Lemma~\ref{lem:3} with $t=t'$ we have the corollary that the intersection of $L_>(x_A,t_A)$ and $L_>(x_C,t_C)$ is contained in $L_>(x_B,t_B)$ if and only if $(x',t')\in L_>(x_B,t_B)$.

\subsection{\texorpdfstring{$(\di+1)$D for $\di\geq2$}{2p1D}}
With more than one spatial dimension it is no longer the case that the intersection of two lightcones is a lightcone.  [For instance, in $(2+1)$D, for a fixed time a lightcone is a circle, but the intersection of two circles is not a circle.] The analogue of Lemma~\ref{lem:2} hence does not hold in higher dimensions.

Instead of Lemma~\ref{lem:3}, the following lemma shows that consideration of a single time coordinate is not sufficient to establish that the intersection of two lightcones is contained within a third.

Before this we introduce two pieces of notation: ${\bf 0}_\di$ is shorthand for $\di$ zeros, and ${\bf x}_{\di-n}$ is the subvector of ${\bf x}$ with the first $n$ components removed, e.g., in dimension $\di=3$, with ${\bf x}=(x,y,z)$, ${\bf x}_2$ means $(y,z)$.

\begin{lemma}\label{lem:4}
There exist three events $({\bf x}_A,t_A)$, $({\bf x}_B,t_B)$ and $({\bf x}_C,t_C)$, and two times $t$ and $t'>t$ such that $L_>({\bf x}_A,t_A)\big|_t\cap L_>({\bf x}_C,t_C)\big|_t\subseteq L_>({\bf x}_B,t_B)\big|_t$ but $L_>({\bf x}_A,t_A)\big|_{t'}\cap L_>({\bf x}_C,t_C)\big|_{t'}\nsubseteq L_>({\bf x}_B,t_B)\big|_{t'}$.
\end{lemma}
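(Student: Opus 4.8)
Lemma 4 is an existence (counterexample) statement: I need to exhibit three events and two times $t < t'$ such that the containment of the intersection of two sliced future lightcones inside a third holds at time $t$ but fails at the later time $t'$. This is precisely the negation of the higher-dimensional analogue of Lemma 3, so the goal is to demonstrate that in $(d+1)$D with $d \geq 2$, checking a single time slice is insufficient.

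My plan is as follows.

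The plan is to work in $(2+1)$D (the case $d=2$), since any counterexample there lifts trivially to higher dimensions by setting the extra spatial coordinates to zero for all events (using the $\mathbf{0}_\di$, $\mathbf{x}_{\di-n}$ notation just introduced). So first I would reduce to $d=2$ and construct explicit events there. In $(2+1)$D a fixed-time slice of a future lightcone is a disk, so the geometric content is: I want two disks $D_A(t), D_C(t)$ whose intersection is contained in a third disk $D_B(t)$ at time $t$, but where this containment fails at a later time $t'$. The key geometric intuition to exploit is that the disks $D_A$ and $D_C$ grow at unit rate, so their lens-shaped intersection changes shape as $t$ increases: at an early time the lens is thin and can be swallowed by $D_B$, but as time increases the lens widens in the transverse direction faster (relative to its own center) than $D_B$ can cover it, causing the containment to break. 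This is exactly the phenomenon that fails in $(1+1)$D, where Lemma~\ref{lem:3} shows the inequalities are time-independent.

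Next I would pick concrete numbers. I would place $({\bf x}_A,t_A)$ and $({\bf x}_C,t_C)$ symmetrically about the origin in the plane, say $({\bf x}_A,t_A)=((-a,0),0)$ and $({\bf x}_C,t_C)=((a,0),0)$ for suitable $a>0$, so that their intersection lens is symmetric about the $y$-axis, widening along $y$ as $t$ grows. I would then place $({\bf x}_B,t_B)$ centered on the $y$-axis but at a positive time, e.g. $({\bf x}_B,t_B)=((0,0),t_B)$ with $t_B>0$ chosen so that at the first chosen time $t$ its disk is large enough (relative to the then-narrow lens) to contain the intersection, while at $t'$ the lens has extended in the $y$-direction beyond the reach of $L_>({\bf x}_B,t_B)\big|_{t'}$. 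Concretely, the half-width of the lens in $y$ at time $t$ is roughly $\sqrt{t^2 - (a - \text{something})^2}$ type expression, which grows without bound, whereas the radius of $D_B$ at time $t$ is only $t - t_B$; so choosing $a$ comparable to or larger than the lens offset and $t_B$ small forces the lens to eventually outrun $D_B$. I would verify the two explicit inequalities by exhibiting a single witness point $({\bf x},t')\in L_>({\bf x}_A,t_A)\big|_{t'}\cap L_>({\bf x}_C,t_C)\big|_{t'}$ lying outside $L_>({\bf x}_B,t_B)\big|_{t'}$ — a point far out along the $y$-axis suffices — and checking containment at $t$ by a direct disk-containment computation.

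The main obstacle I anticipate is bookkeeping rather than conceptual: choosing the four free parameters $a$, $t_B$, $t$, $t'$ so that all inequalities hold simultaneously and cleanly, and verifying the containment at the earlier time $t$ (which requires showing the whole lens, not just one point, lies inside $D_B$) without heavy algebra. To keep this tractable I would exploit the reflection symmetry across the $y$-axis to reduce the containment check to the boundary extreme points of the lens, and I would pick numerically convenient values (e.g. integer or simple fractional coordinates) so that the radicals collapse. The failure direction at $t'$ is easy — just one explicit point — so the real work is the single clean verification at $t$, which I would present as a short direct calculation rather than a general argument.
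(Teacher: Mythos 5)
Your proposal is correct and follows essentially the same route as the paper: the paper also places the two events symmetrically at $((\pm a,{\bf 0}_{\di-1}),0)$ with $({\bf x}_B,t_B)=((0,{\bf 0}_{\di-1}),a/2)$, reduces the containment check to the extremal points of the lens (which lie at $x=0$, $|{\bf x}_{\di-1}|^2=t^2-a^2$), and finds containment holds precisely for $t\leq 5a/4$ and fails for $t'>5a/4$. Your plan of exhibiting a single far witness point along the transverse axis for the failure at $t'$ and verifying containment at $t$ via the lens's boundary extremes is exactly the calculation the paper carries out.
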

\begin{proof}
  We can show this with an example: take $({\bf x}_A,t_A)=((-a,{\bf 0}_{\di-1}),0)$, $({\bf x}_C,t_C)=((a,{\bf 0}_{\di-1}),0)$ and $({\bf x}_B,t_B)=((0,{\bf 0}_{\di-1}),a/2)$. The set $L_>({\bf x}_A,t_A)\big|_t\cap L_>({\bf x}_C,t_C)\big|_t$ is then nonempty if and only if $t>a$. For $t>a$, the points in $L_>({\bf x}_A,t_A)\big|_t\cap L_>({\bf x}_C,t_C)\big|_t$ furthest from ${\bf x}={\bf 0}_{\di}$ are the maxima of $|{\bf x}|^2$ subject to $|{\bf x-x}_A|^2\leq t^2$ and $|{\bf x-x}_C|^2\leq t^2$. The objective function has no global maximum, so the maximum must occur on the boundary. The first boundary, $(x+a)^2+|{\bf x}_{\di-1}|^2=t^2$, meets the second, $(x-a)^2+|{\bf x}_{\di-1}|^2=t^2$, when $x=0$ and $|{\bf x}_{\di-1}|^2=t^2-a^2$.  On the first boundary we have $x^2+|{\bf x}_{\di-1}|^2=t^2-2ax-a^2$, and $x\geq0$ to also be inside the second region. Thus, the maximum value of $x^2+|{\bf x}_{\di-1}|^2$ is $t^2-a^2$, which occurs at $x=0$ and $|{\bf x}_{\di-1}|^2=t^2-a^2$.

  The points with $x=0$ and $|{\bf x}_{\di-1}|^2=t^2-a^2$ are inside the region $L_>({\bf x}_B,t_B)\big|_t$ if and only if $t^2-a^2\leq (t-t_B)^2$, which for $t_B=a/2$ is equivalent to $t\leq5a/4$. Hence, for $1<t\leq5a/4$ we have $L_>({\bf x}_A,t_A)\big|_t\cap L_>({\bf x}_C,t_C)\big|_t\subseteq L_>({\bf x}_B,t_B)\big|_t$.  However, if we take $t'>5a/4$ the points with $x=0$ and  $|{\bf x}_{\di-1}|^2=(t')^2-a^2$ are inside $L_>({\bf x}_A,t_A)\big|_{t'}\cap L_>({\bf x}_C,t_C)\big|_{t'}$ but outside $L_>({\bf x}_B,t_B)\big|_{t'}$.
\end{proof}

We would like to know when the intersection of two lightcones is contained within a third. This breaks down into two cases treated separately in the following two lemmas (the first of which is most relevant for jamming).

\begin{lemma}\label{lem:8}
Let $({\bf x}_A,t_A)=((-a,{\bf 0}_{\di-1}),0)$, $({\bf x}_C,t_C)=((a,{\bf 0}_{\di-1}),0)$ and $({\bf x}_B,t_B)=((x_B,y_B,{\bf 0}_{\di-2}),t_B)$ for some $-a\leq x_B\leq a$. $L_>({\bf x}_A,t_A)\cap L_>({\bf x}_C,t_C)\subseteq L_>({\bf x}_B,t_B)$ if and only if $t_B\leq-|y_B|$.
\end{lemma}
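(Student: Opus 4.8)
The plan is to collapse the $(\di+1)$-dimensional containment question into a two-variable optimization whose large-time behaviour produces the threshold $t_B\le-|y_B|$. Write a generic event as $({\bf x},t)$ with ${\bf x}=(x,y,{\bf x}_{\di-2})$ and set $r^2:=|{\bf x}_{\di-1}|^2=y^2+|{\bf x}_{\di-2}|^2$. Then membership in $R:=L_>({\bf x}_A,t_A)\cap L_>({\bf x}_C,t_C)$ depends only on $x,r,t$ via $(|x|+a)^2+r^2\le t^2$, whereas the defining inequality of $L_>({\bf x}_B,t_B)$ is $f\le 0$ with $f:=(x-x_B)^2+(y-y_B)^2+|{\bf x}_{\di-2}|^2-(t-t_B)^2$. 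So $R\subseteq L_>({\bf x}_B,t_B)$ is equivalent to $\sup_R f\le 0$. Reflecting $x\mapsto-x$ and $y\mapsto-y$ are isometries fixing $\{{\bf x}_A,{\bf x}_C\}$, so I may assume $x_B,y_B\ge 0$.

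First I would eliminate the coordinates beyond $x$ and $y$. For fixed $x,r,t$ the split of $r^2$ between $y$ and ${\bf x}_{\di-2}$ enters $f$ only through $(y-y_B)^2+|{\bf x}_{\di-2}|^2=r^2-2yy_B+y_B^2$, which is maximized by $y=-r$, ${\bf x}_{\di-2}={\bf 0}_{\di-2}$, giving $(r+|y_B|)^2$. Since $f$ is then increasing in $r$, the supremum is attained on the boundary $r^2=t^2-(|x|+a)^2$. Substituting this and cancelling the $t^2$ terms reduces the task to proving
\begin{equation*}
g(x,t):=(x-x_B)^2-(|x|+a)^2+y_B^2-t_B^2+2tt_B+2|y_B|\sqrt{t^2-(|x|+a)^2}\le 0
\end{equation*}
for all $x$ and all $t\ge|x|+a$.

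The decisive step, and the real content of the lemma, is the $t\to\infty$ asymptotics of the two $t$-dependent terms: since $\sqrt{t^2-(|x|+a)^2}\sim t$, one has $2tt_B+2|y_B|\sqrt{t^2-(|x|+a)^2}\sim 2t\,(t_B+|y_B|)$. This gives \emph{necessity} at once: if $t_B+|y_B|>0$ then $g(0,t)\to+\infty$, so the explicit point $x=0$, $y=-\sqrt{t^2-a^2}$, ${\bf x}_{\di-2}={\bf 0}_{\di-2}$ lies in $R$ but leaves $L_>({\bf x}_B,t_B)$ for large $t$. This is precisely the mechanism of Lemma~\ref{lem:4}: the rim of the growing lens recedes, so no single time slice decides containment.

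For \emph{sufficiency} I would assume $t_B\le-|y_B|$, so $s:=t_B+|y_B|\le 0$ and $t_B\le 0$; note $t\ge|x|+a>0\ge t_B$ keeps points strictly inside the future cone of $B$. Bounding $\sqrt{t^2-(|x|+a)^2}\le t$ yields $2tt_B+2|y_B|\sqrt{t^2-(|x|+a)^2}\le 2ts\le 0$, whence $g(x,t)\le(x-x_B)^2-(|x|+a)^2+y_B^2-t_B^2$. An elementary check (using $a-x_B\ge 0$, which holds as $|x_B|\le a$) shows this is maximized at $x=0$, equal to $x_B^2-a^2+y_B^2-t_B^2$, and then $x_B^2\le a^2$ together with $y_B^2\le t_B^2$ (exactly $|y_B|\le-t_B$) give $g\le 0$. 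The hard part is not any single estimate but ordering the reductions correctly: the $y\mapsto-r$ split and the passage to the boundary $r^2=t^2-(|x|+a)^2$ must first be justified as genuinely attaining the supremum (monotonicity in $r$) before the asymptotic argument is valid, since, as Lemma~\ref{lem:4} warns, a naive fixed-time analysis would miss the threshold entirely.
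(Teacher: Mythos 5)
Your proposal is correct and follows essentially the same route as the paper's proof: both reduce the containment to the farthest point of the lens from ${\bf x}_B$, which sits on the rim at $x=0$, $|{\bf x}_{\di-1}|=\sqrt{t^2-a^2}$ on the side opposite to $y_B$, yielding the condition $x_B^2+(|y_B|+\sqrt{t^2-a^2})^2\le(t-t_B)^2$ for all $t>a$, whose large-$t$ behaviour $2t(t_B+|y_B|)$ gives necessity and whose sufficiency uses $a^2-x_B^2\ge0$ and $y_B^2\le t_B^2$ exactly as in the paper. The only differences are bookkeeping (you maximize over $x$ after bounding $\sqrt{t^2-(|x|+a)^2}\le t$, the paper before; your $(|x|+a)$ parametrization handles both spherical caps of the lens at once, which is marginally cleaner).
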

Note that the form of the chosen points is not losing generality: for any two spacelike separated events we can find a frame where they have coordinates of the form given for $({\bf x}_A,t_A)$ and $({\bf x}_C,t_C)$. Similarly, because any three points lie on a 2D plane we can also choose $({\bf x}_B,t_B)$ to have the form given.

In addition, the condition $t_B\leq-|y_B|$ is equivalent to saying that an event with $t=0$ on the line between ${\bf x}_A$ and ${\bf x}_B$ lies in the future lightcone of $({\bf x}_B,t_B)$.

\begin{proof}
  For $t>a$, the part of the boundary of $L_>({\bf x}_A,t_A)\big|_t\cap L_>({\bf x}_C,t_C)\big|_t$ that is the same as the boundary of $L_>({\bf x}_A,t_A)\big|_t$, is given by $x=\sqrt{t^2-|{\bf x}_{\di-1}|^2}-a$, where $|{\bf x}_{\di-1}|^2\leq t^2-a^2$. [Note that the two lightcones meet when $x=0$ and $|{\bf x}_{\di-1}|^2=t^2-a^2$.]
  
We want to calculate the maximum distance a point on this boundary could be from ${\bf x}_B$.  The square of this distance is
  \begin{eqnarray}
    (x-x_B)^2+(y-y_B)^2+|{\bf x}_{\di-2}|^2&=&x^2+|{\bf x}_{\di-1}|^2+x_B^2+y_B^2-2xx_B-2yy_B\nonumber\\
                      &=&t^2-2ax-a^2+x_B^2+y_B^2-2xx_B-2yy_B\nonumber\\
                       &=&t^2+x_B^2+y_B^2-2yy_B-a^2-2(a+x_B)(\sqrt{t^2-|{\bf x}_{\di-1}|^2}-a)\nonumber\\
    &=&t^2+x_B^2+y_B^2-2yy_B+a^2+2ax_B-2(a+x_B)\sqrt{t^2-y^2-|{\bf x}_{\di-2}|^2}\,.\label{eq:lastline3}
  \end{eqnarray}

  Without loss of generality we can take $x_B\leq0$ and $y_B\geq0$.  For $-a\leq x_B\leq0$ the maximum distance occurs when $y=-\sqrt{t^2-a^2}$, $|{\bf x}_{\di-2}|^2=0$ and hence $x=0$.  This square distance is 
\begin{equation}\label{eq:simp3}
  x_B^2+\left(y_B+\sqrt{t^2-a^2}\right)^2.
\end{equation}
Hence $L_>({\bf x}_A,t_A)\big|_t\cap L_>({\bf x}_C,t_C)\big|_t\subseteq L_>({\bf x}_B,t_B)\big|_t$ if and only if $x_B^2+(y_B+\sqrt{t^2-a^2})^2\leq(t-t_B)^2$, or equivalently,
  \begin{equation}\label{eq:cond3}
    2tt_B+2y_B\sqrt{t^2-a^2}\leq t_B^2+a^2-x_B^2-y_B^2.
  \end{equation}
  We want to find values of $t_B$ such that this holds for all $t>a$. For large enough $t$ the left hand side approaches $2t(t_B+y_B)$. Hence, if $t_B>-y_B$ there exists a sufficiently large $t$ such that~\eqref{eq:cond3} ceases to hold. Noting that we are in the case $-a\leq x_B\leq0$, so $a^2-x_B^2\geq0$, if $t_B\leq-y_B$, the right hand side of~\eqref{eq:cond3} is always positive. In addition,
  \begin{align*}
    2tt_B+2y_B\sqrt{t^2-a^2}\leq 2t(t_B+y_B)\leq 0\,,
  \end{align*}
and hence~\eqref{eq:cond3} holds for all $t$.
\end{proof}

\begin{lemma}
Let $({\bf x}_A,t_A)=((-a,{\bf 0}_{\di-1}),0)$, $({\bf x}_C,t_C)=((a,{\bf 0}_{\di-1}),0)$ and $({\bf x}_B,t_B)=((x_B,y_B,{\bf 0}_{\di-2}),t_B)$ for some $|x_B|\geq a$. $L_>({\bf x}_A,t_A)\cap L_>({\bf x}_C,t_C)\subseteq L_>({\bf x}_B,t_B)$ if and only if at least one of  $({\bf x}_A,t_A)\in L_>({\bf x}_B,t_B)$ and $({\bf x}_C,t_C)\in L_>({\bf x}_B,t_B)$ holds.
\end{lemma}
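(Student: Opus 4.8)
The plan is to prove both directions, handling the ``if'' direction by the monotonicity of future lightcones (Lemma~\ref{lem:1}) and the ``only if'' direction by exhibiting an explicit point of $L_>({\bf x}_A,t_A)\cap L_>({\bf x}_C,t_C)$ that escapes $L_>({\bf x}_B,t_B)$ whenever the stated condition fails. First I would make two reductions using isometries of Minkowski space. Reflecting $x\mapsto -x$ interchanges $A$ and $C$ and flips the sign of $x_B$, so without loss of generality $x_B\geq a$; reflecting $y\mapsto -y$ fixes $A$ and $C$ and flips $y_B$, so without loss of generality $y_B\geq 0$. I would then record the elementary observation that, when $x_B\geq a>0$, the relation $({\bf x}_A,t_A)\in L_>({\bf x}_B,t_B)$ forces $({\bf x}_C,t_C)\in L_>({\bf x}_B,t_B)$ as well, since $(x_B+a)^2\geq(x_B-a)^2$; hence the disjunction in the statement collapses to the single condition $({\bf x}_C,t_C)\in L_>({\bf x}_B,t_B)$, and ``neither $A$ nor $C$ lies in $L_>({\bf x}_B,t_B)$'' becomes equivalent to $({\bf x}_C,t_C)\notin L_>({\bf x}_B,t_B)$.

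For the ``if'' direction, if $({\bf x}_C,t_C)\in L_>({\bf x}_B,t_B)$ then Lemma~\ref{lem:1} gives $L_>({\bf x}_C,t_C)\subseteq L_>({\bf x}_B,t_B)$, and intersecting with $L_>({\bf x}_A,t_A)$ yields the containment; the case $({\bf x}_A,t_A)\in L_>({\bf x}_B,t_B)$ is identical with $A$ in place of $C$. The substance is the contrapositive of ``only if'': assuming $({\bf x}_C,t_C)\notin L_>({\bf x}_B,t_B)$, I would produce a point of the intersection outside $L_>({\bf x}_B,t_B)$. The natural candidate is the point on the future lightcone of $C$ lying radially away from ${\bf x}_B$, namely $P(t)=({\bf x}_C+t\,\hat{\bf n},\,t)$ with $\hat{\bf n}=({\bf x}_C-{\bf x}_B)/r$ and $r=|{\bf x}_C-{\bf x}_B|=\sqrt{(x_B-a)^2+y_B^2}$, for large $t$. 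By construction $P(t)\in L_>({\bf x}_C,t_C)$, and because ${\bf x}_C-{\bf x}_B=r\hat{\bf n}$ the spatial distance from $P(t)$ to ${\bf x}_B$ is exactly $t+r$. Since $({\bf x}_C,t_C)\notin L_>({\bf x}_B,t_B)$ is equivalent (for ${\bf x}_B\neq{\bf x}_C$) to $r>-t_B$, we get $t+r>t-t_B$, whence $P(t)\notin L_>({\bf x}_B,t_B)$.

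The remaining and most delicate step is verifying $P(t)\in L_>({\bf x}_A,t_A)$. Writing a point on the cone of $C$ as $(a+tn_1,\,tn_2,\,t{\bf n}_{\di-2})$ and substituting into $|{\bf x}-{\bf x}_A|^2\leq t^2$, the quadratic terms cancel and the membership condition reduces cleanly to $n_1\leq -a/t$, i.e.\ the direction must point sufficiently to the left. Here $n_1=(a-x_B)/r\leq 0$, so for $x_B>a$ this holds for all sufficiently large $t$ (explicitly $t\geq ar/(x_B-a)$), completing the argument. The main obstacle is the boundary case $x_B=a$, where $n_1=0$ and the radial direction is exactly tangent, so $P(t)$ fails to enter $L_>({\bf x}_A,t_A)$; here I would instead use the lower vertex of the lens, $(0,-\sqrt{t^2-a^2},{\bf 0}_{\di-2})$ at time $t$, which lies on the boundary of both cones, and check by the same distance computation that it escapes $L_>({\bf x}_B,t_B)$ for large $t$ precisely when $t_B>-y_B$, which is again equivalent to $({\bf x}_C,t_C)\notin L_>({\bf x}_B,t_B)$ when $x_B=a$. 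A final remark handles the degenerate coincidence ${\bf x}_B={\bf x}_C$ with $t_B=t_C$, which is excluded by the spacelike-separation hypotheses in the jamming application.
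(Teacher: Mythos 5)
Your proof is correct, and it reaches the paper's conclusion by a somewhat different organisation of the same geometry. The paper handles both directions in one computation: it maximises the squared distance $|{\bf x}-{\bf x}_B|^2$ over the relevant part of the lens boundary at each time slice, finds the maximum is either $(t+\sqrt{(a+x_B)^2+y_B^2})^2$ (an interior critical point) or the rim value $x_B^2+(y_B+\sqrt{t^2-a^2})^2$, and reads off the exact condition $t_B\leq-\sqrt{(a+x_B)^2+y_B^2}$ for containment at all $t$. You instead get sufficiency for free from the monotonicity Lemma~\ref{lem:1} (after the nice observation, absent from the paper, that for $x_B\geq a$ the disjunction collapses to $({\bf x}_C,t_C)\in L_>({\bf x}_B,t_B)$), and you prove necessity by exhibiting a single escaping witness. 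Your witness $P(t)={\bf x}_C+t\,\hat{\bf n}$ is exactly the paper's interior maximiser (the radially outward point on the near cone), and your reduction of $P(t)\in L_>({\bf x}_A,t_A)$ to $n_1\leq-a/t$ is the same cancellation of quadratic terms that yields Eq.~\eqref{eq:lastline3}. What your route buys is the elimination of the constrained-optimisation case analysis; the price is that the tangent configuration $|x_B|=a$ must be treated separately, which you do correctly with the rim point $(0,-\sqrt{t^2-a^2},{\bf 0}_{\di-2})$ --- precisely the paper's boundary-maximum branch. One further remark: the degenerate coincidence ${\bf x}_B={\bf x}_C$ with $t_B=t_C$ is genuinely a failure of the literal ``only if'' (containment holds trivially while $L_>({\bf x}_B,t_B)$ excludes its own apex); the paper passes over this silently, whereas you at least flag it and correctly note it is excluded by the spacelike-separation hypotheses in the jamming application.
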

\begin{proof}
  Consider the case $x_B\leq-a$ (the other case follows symmetrically) and take $y_B\geq0$ without loss of generality. We want to maximize~\eqref{eq:lastline3} with respect to ${\bf x}_{\di-1}$ subject to $|{\bf x}_{\di-1}|^2\leq t^2-a^2$.  Writing $y=r\cos(\phi)$, i.e., $r=|{\bf x}_{\di-1}|$ we can rewrite~\eqref{eq:lastline3} as
  \begin{equation}\label{eq:r}
    t^2+x_B^2+y_B^2-2y_Br\cos(\phi)+a^2+2ax_B-2(a+x_B)\sqrt{t^2-r^2}.
    \end{equation}
    The maximum of this over $\phi$ is when $\cos(\phi)=-1$.

The derivative of~\eqref{eq:r} with respect to $r$ is $2r(a+x_B)/\sqrt{t^2-r^2}-2y_B\cos(\phi)$, which, substituting $\cos(\phi)=-1$ is equal to zero for $r=y_Bt/\sqrt{(a+x_B)^2+y_B^2}$.  Substituting this into the distance squared gives
  \begin{align}\label{eq:max_dist}
(a+x_B)^2\left(1+\frac{t}{\sqrt{(a+x_B)^2+y_B^2}}\right)^2+y_B^2\left(1+\frac{t}{\sqrt{(a+x_B)^2+y_B^2}}\right)^2=\left(t+\sqrt{(a+x_B)^2+y_B^2}\right)^2.
  \end{align}
  This is the maximum provided the value of $r$ is within the range of possible $r$ values (on which the boundary of $L_>({\bf x}_A,t_A)\big|_t\cap L_>({\bf x}_C,t_C)\big|_t$ is the same as the boundary of $L_>({\bf x}_A,t_A)\big|_t$), i.e., if
  \begin{align*}
    \frac{y_Bt}{\sqrt{(a+x_B)^2+y_B^2}}\leq\sqrt{t^2-a^2},
  \end{align*}
  which rearranges to
  \begin{align*}
    t^2\geq\frac{a^2y_B^2}{(a+x_B)^2}+a^2.
  \end{align*}
  Hence, for large enough $t$, Eq.~\eqref{eq:max_dist} gives the maximum distance.

If $t^2<\frac{a^2y_B^2}{(a+x_B)^2}+a^2$, then the maximum occurs on the boundary, i.e., where $r^2=t^2-a^2$. Exactly as in the proof of Lemma~\ref{lem:8}, the maximum square distance in this case is $x_B^2+(y_B+\sqrt{t^2-a^2})^2$. Hence $L_>({\bf x}_A,t_A)\big|_t\cap L_>({\bf x}_C,t_C)\big|_t\subseteq L_>({\bf x}_B,t_B)\big|_t$
  requires
  \begin{align*}
    \begin{cases}
      x_B^2+(y_B+\sqrt{t^2-a^2})^2\leq(t-t_B)^2 & t\leq\sqrt{\frac{a^2y_B^2}{(a+x_B)^2}+a^2}\\
\left(t+\sqrt{(a+x_B)^2+y_B^2}\right)^2\leq(t-t_B)^2 & t\geq\sqrt{\frac{a^2y_B^2}{(a+x_B)^2}+a^2}
    \end{cases}.
  \end{align*}
For the second case to hold for all $t$ we require $t_B\leq-\sqrt{(a+x_B)^2+y_B^2}$.  This implies $t_B\leq-y_B$ which means that the first case also holds as shown in the proof of Lemma~\ref{lem:8}. Noting that $t_B\leq-\sqrt{(a+x_B)^2+y_B^2}$ is equivalent to $({\bf x}_A,t_A)\in L_>({\bf x}_B,t_B)$ completes the proof.
\end{proof}


\end{document}